\newtheorem{theorem}{Theorem}
\newtheorem{fact}[theorem]{Fact}
\theoremstyle{definition}
\newtheorem{definition}{Definition}
\newtheorem{claim}[theorem]{Claim}
\newtheorem{observation}[theorem]{Observation}
\newclass{\alt}{alt}
\newclass{\s}{s}
\newclass{\bs}{bs}
\newclass{\fbs}{fbs}
\newclass{\fC}{fC}
\newclass{\Cert}{C}
\newclass{\EC}{EC}
\newclass{\SB}{sb}
\newclass{\fsb}{fsb}
\newclass{\sC}{sC}
\newclass{\qa}{q_{adv}}
\newclass{\disc}{disc}
\newclass{\bias}{bias}
\newclass{\snip}{snip}
\newclass{\Snip}{Snip}
\newclass{\codim}{codim}
\newclass{\I}{I}
\newclass{\Hen}{H}
\newclass{\Div}{D}
\newclass{\nq}{NOQUERY}
\newclass{\q}{QUERY}
\newclass{\lft}{left}
\newclass{\full}{FULL}
\newcommand\supp{\mathop{\rm supp}}
\newcommand{\cA}{\mathcal{A}}
\newcommand{\clB}{\mathcal{B}}
\newcommand{\clC}{\mathcal{C}}
\newcommand{\clE}{\mathcal{E}}
\newcommand{\clL}{\mathcal{L}}
\newcommand{\clQ}{\mathcal{Q}}
\newcommand{\clT}{\mathcal{T}}
\newcommand{\clP}{\mathcal{P}}
\newcommand{\clU}{\mathcal{U}}
\newcommand{\clN}{\mathcal{N}}
\newcommand{\N}{\mathsf{N}}
\newcommand{\clD}{\mathcal{D}}
\newcommand{\clH}{\mathcal{H}}
\newcommand{\rH}{\mathsf{H}}
\newcommand{\II}{\mathrm{I}}
\newcommand{\X}{\mathsf{X}}
\newcommand{\Y}{\mathsf{Y}}
\newcommand{\RS}{\mathrm{RS}}
\newcommand{\sep}{\mathrm{sep}}
\newcommand{\sab}{\mathrm{sab}}
\newcommand{\Q}{\mathsf{Q}}
\newcommand{\ccc}{max conflict complexity}
\newcommand{\myusepackage}[2][]{\@ifpackageloaded{#2}{} %
 {\ifthenelse{\equal{}{#1}} {\usepackage{#2}} {\usepackage[#1]{#2}} }}
\newcommand{\mydef}[2]{\def#1{#2}}
\newcommand{\nospell}[1]{#1}  %
\def\lf#1{\mathopen{}\left#1}
\def\rt#1{\right#1\mathclose{}}
\providecommand{\middle}{\big}
\newcommand{\md}{\middle}
\newcommand{\newident}[3][*]{\ifthenelse{\equal{*}{#1}}%
{\newcommand{#2}[1][]{\Ensuremath{\mathit{#3##1}}}}%
{\newcommand{#2}[1][]{\Ensuremath{\mathit{#3}}}}%
}
\newcommand{\newmat}[3][*]{\ifthenelse{\equal{*}{#1}}%
{\newcommand{#2}[1][]{\Ensuremath{#3##1}}}%
{\newcommand{#2}[1][]{\Ensuremath{#3}}}%
}
\newcommand{\providemat}[3][*]{\ifthenelse{\equal{*}{#1}}%
{\providecommand{#2}[1][]{\Ensuremath{#3##1}}}%
{\providecommand{#2}[1][]{\Ensuremath{#3}}}%
}
\newcommand{\newfunction}[2]{%
\newcommand{#1}[2][*]{\ifthenelse{\equal{*}{##1}}%
{\Ensuremath{#2\lf(##2\rt)}}%
{#2(##2)}}%
}
\newcommand{\MyMakeRefMacros}[3]{\newcommand{#1}[2][]
{\ifthenelse{\equal{}{##1}}{#2~\ref{##2}}{#3~\ref{##1} and~\ref{##2}}}}
\newcommand{\MyMakeEqRefMacros}[3]{\newcommand{#1}[2][]
{\ifthenelse{\equal{}{##1}}{#2~\eqref{##2}}{#3~\eqref{##1} and~\eqref{##2}}}}
{}
\renewcommand{\qedsymbol}{$\blacksquare$}
\newcommand{\prfstart}[1][]{\ifthenelse{\equal{}{#1}}%
{\begin{proof}\renewcommand{\qedsymbol}{$\blacksquare$}}%
{\begin{proof}[\dgProofOf\ #1]%
\renewcommand{\qedsymbol}{$\blacksquare_{\mbox{\it{\scriptsize{#1}}}}$}}%
}
\newcommand{\prfend}[1][*]{%
\ifthenelse{\equal{}{#1}}{\renewcommand{\qedsymbol}{$\blacksquare$}}{}%
\ifthenelse{\equal{*}{#1}}{}%
{\renewcommand{\qedsymbol}{$\blacksquare_{\mbox{\it{\scriptsize{#1}}}}$}}%
\end{proof}\renewcommand{\qedsymbol}{$\blacksquare$}%
}
\newcommand{\sect}[2][]{
\ifthenelse{\equal{*}{#2}}
{\section*}
{\ifthenelse{\equal{}{#1}}
{\section{#2}}
{\section{#2}\label{#1}}
}
}
\MyMakeRefMacros{\chref}{Chapter}{Chapters}
\MyMakeRefMacros{\sref}{Section}{Sections}
\MyMakeRefMacros{\ssref}{Subsection}{Subsections}
\MyMakeRefMacros{\sssref}{Subsection}{Subsections}
\MyMakeRefMacros{\figref}{Figure}{Figures}
\newcommand{\IfMathMode}[2]{\ifmmode{#1}\else{#2}\fi}
\newcommand{\Ensuremath}{\ensuremath}
\newcommand{\fbr}[1]{\IfMathMode%
{#1}{$#1$}}                     %
\newcommand{\fnbr}[1]{\mbox{\fbr{#1}}}  %
\newcommand{\fla}[2][*]{\ifthenelse{\equal{}{#1}}{\fbr{#2}}{\fnbr{#2}}}
\newcommand{\malabel}[1]{\addtocounter{equation}{1}\tag{\theequation}\label{#1}}
\newcommand{\mal}[2][]{%
\ifthenelse{\equal{}{#1}}%
{\begin{align*} #2 \end{align*}}%
{\ifthenelse{\equal{P}{#1}}%
{\begingroup\allowdisplaybreaks\begin{align*} #2%
\end{align*}\endgroup}%
{\begin{align*} \malabel{#1} #2 \end{align*}}%
}%
}
\newcommand{\m}{\mal}
\MyMakeEqRefMacros{\equref}{Equation}{Equations}
\MyMakeEqRefMacros{\expref}{Expression}{Expressions}
\MyMakeEqRefMacros{\inequref}{Inequality}{Inequalities}
\newcommand{\thrcase}[6]%
{\begin{cases} #1 &\txt{#2}\\ #3 &\txt{#4}\\ #5 &\txt{#6}\end{cases}}
\newcommand{\chs}{\genfrac(){0cm}{}}  %
\newcommand{\PRdg}[2][]{\mathop{\mathbf{Pr}}_{#1}\lf[{#2}\rt]}
\newcommand{\PRr}[3][]{\mathop{\mathbf{Pr}}_{#1}\lf[{#2}\vphantom{|_1^1}\md|\vphantom{|_1^1}{#3}\rt]}
\newcommand{\pl}[1][]{\nospell{\ifthenelse{\equal{}{#1}}%
{\mskip-6mu\stackrel{\text-}{}\mskip-4mu\txt{s}}%
{\fla{#1\mskip-6mu\stackrel{\text-}{}\mskip-4mu\txt{s}}}}}
\newcommand{\fr}[3][*]{%
\ifthenelse{\equal{*}{#1}}%
{\frac{#2}{#3}}{}%
\ifthenelse{\equal{/}{#1}}%
{\nicefrac{#2}{#3}}{}%
\ifthenelse{\equal{}{#1}}%
{\lf.#2\md/#3\rt.}{}%
\ifthenelse{\equal{p_}{#1}}%
{\lf.\lf(#2\rt)\md/#3\rt.}{}%
\ifthenelse{\equal{_p}{#1}}%
{\lf.#2\md/\lf(#3\rt)\rt.}{}%
\ifthenelse{\equal{pp}{#1}}%
{\lf.\lf(#2\rt)\md/\lf(#3\rt)\rt.}{}%
}
\newcommand{\dr}{\nicefrac}
\newcommand{\sq}{\sqrt}
\newcommand{\set}[2][]{\ifthenelse{\equal{}{#1}}%
{\Ensuremath{\lf\{#2\rt\}}}%
{\Ensuremath{\lf\{#2\vphantom{|_1^1}\md|\vphantom{|_1^1}#1\rt\}}}}
\newcommand{\sett}[2]{\Ensuremath{\lf\{#1\vphantom{|_1^1}\md|\vphantom{|_1^1}#2\rt\}}}
\newcommand{\Min}[2][]{\ifthenelse{\equal{}{#1}}%
{\Ensuremath{\min\lf\{#2\rt\}}}%
{\Ensuremath{\min\lf\{#2\vphantom{|_1^1}\md|\vphantom{|_1^1}#1\rt\}}}}
\newfunction{\asO}{O}
\newfunction{\asOm}{\Omega}
\newfunction{\asT}{\Theta}
\newcommand{\sz}[2][]{\ifthenelse{\equal{}{#1}}%
{\Ensuremath{\lf|#2\rt|}}%
{\Ensuremath{\lf|#2\rt|_{#1}}}}
\newcommand{\sszz}[2]{\lf|\vphantom{|_1^1}\lf\{#1\md|#2\rt\}\vphantom{|_1^1}\rt|}
\newcommand{\txt}[1]{\textrm{#1}}  %
\newcommand{\tit}[1]{\textit{#1}}  %
\DeclareMathAlphabet{\mathlowcal}{OT1}{pzc}{m}{it}
\newcommand{\fn}[2][]{%
\IfMathMode{}{}%
\ifthenelse{\equal{}{#1}}%
{\footnote{#2}}%
{\footnote{\label{#1}#2}}%
}
\newcommand{\tm}{\cdot}
\newcommand{\xor}{\oplus}
\newcommand{\sbseq}{\subseteq}
\mydef{\eps}{\varepsilon}
\newcommand{\deq}{\stackrel{\textrm{def}}{=}}
\newcommand{\unin}{\mathrel{\subset\mkern-13.1mu\sim}}  %
\providemat{\QQ}{\mathbb{Q}}
\providemat{\NN}{\mathbb{N}}
\providemat{\CC}{\mathbb{C}}
\providemat{\RR}{\mathbb{R}}
\providemat{\ZZ}{\mathbb{Z}}
\newcommand{\ds}[1][]
{\ifthenelse{\equal{}{#1}}{\allowbreak\dots}{#1\allowbreak\dots#1}}
\newmat{\dc}{\ds[,]}
\protected \def \dg #1{%
\textcolor{Red}
{
{\normalmarginpar\marginnote{\bl{DG's comment}}}
{\reversemarginpar\marginnote{\bl{DG's comment}}\\}
\IfMathMode{
~~~\txt{#1}~
}{
~\\~~~#1~\\
{\normalmarginpar\marginnote{\bl{\ul{------}}}}
{\reversemarginpar\marginnote{\bl{\ul{------}}}\\}
}
}
\ClassWarning{My Macros}{#1}
}
\newcommand{\e}{\emph}
{}  %
\newcommand{\bl}[1]{{\bf #1}} %
\newcommand{\bil}[1]{{\bfseries\itshape #1}} %
\providecommand{\ul}[1]{\underline{#1}} %
\newcommand{\tb}{\quad}
\title{A composition theorem for randomized query complexity via max conflict complexity \footnote{This paper is a merger of \cite{GLS18} and \cite{Sanyal18}, together with some new results.}}
\author[1]{Dmitry Gavinsky}
\author[2]{Troy Lee}
\author[3,4,5]{Miklos Santha}
\author[6]{Swagato Sanyal}
\affil[1]{Institute of Mathematics, Czech Academy of Sciences, 115 67 \v Zitna 25, Praha 1, Czech Republic.}
\affil[2]{Centre for Quantum Software and Information, Faculty of Engineering and Information Technology, University of Technology Sydney, Australia\\ \texttt{troyjlee@gmail.com}}
\affil[3]{Centre for Quantum Technologies, National University of Singapore, Block S15, 3 Science Drive 2, Singapore 117543.}
\affil[4]{IRIF, Universit\'e Paris Diderot, CNRS, 75205 Paris, France\\ \texttt{santha@irif.fr}}
\affil[5]{MajuLab, UMI 3654, Singapore}
\affil[6]{Indian Institute of Technology Kharagpur\\ \texttt{swagato@cse.iitkgp.ac.in}}
\begin{document}
\maketitle
\begin{abstract}
Let $\R_\epsilon(\cdot)$ stand for the bounded-error randomized query complexity with error $\epsilon > 0$. For any relation $f \subseteq \{0,1\}^n \times S$ and partial Boolean function 
$g \subseteq \{0,1\}^m \times \{0,1\}$, we show that $\R_{1/3}(f \circ g^n) \in \Omega(\R_{4/9}(f) \cdot \sqrt{\R_{1/3}(g)})$, where $f \circ g^n \subseteq \lf(\{0,1\}^m\rt)^n \times S$ is the composition of $f$ and $g$. 
We give an example of a relation $f$ and partial Boolean function $g$ for which this lower bound is tight. 

We prove our composition theorem by introducing a new complexity measure, the \emph{\ccc} $\bar \chi(g)$ of a partial Boolean function $g$. We show $\bar \chi(g) \in \Omega(\sqrt{\R_{1/3}(g)})$ for any (partial) function $g$ and 
$\R_{1/3}(f \circ g^n) \in \Omega(\R_{4/9}(f) \cdot \bar \chi(g))$; these two bounds imply our composition result.  
We further show that $\bar \chi(g)$ is always at least as large as the \emph{sabotage complexity} of $g$, introduced by Ben-David and Kothari \cite{DBLP:conf/icalp/Ben-DavidK16}.  
\end{abstract}

\thispagestyle{empty}

\newpage
\setcounter{page}{1}

\section{Introduction}
\label{intro}
For Boolean functions $f: \{0,1\}^n \rightarrow \{0,1\}$ and $g: \{0,1\}^m \rightarrow \{0,1\}$, the composed Boolean function $f \circ g^n:\lf(\{0,1\}^m\rt)^n \rightarrow \{0,1\}$ is defined as $f \circ g^n (x_1, \ldots, x_n):=f(g(x_1), \ldots, g(x_n))$. 
A natural question in complexity theory is how the complexity of $f \circ g^n$ relates to the complexities of $f$ and $g$.  We study this question in the context of query complexity.
A query algorithm for a function $h$ queries bits of an input $x$, possibly in an adaptive manner, with the goal of outputting $h(x)$ after making as few 
queries as possible.  The deterministic query complexity $\D(h)$ is the minimum over all deterministic query algorithms that compute $h$ of the worst-case number of queries made over all inputs $x$.  
For a composed function, it is easy to see that $\D(f \circ g^n) \leq \D(f) \cdot \D(g)$, since $f\circ g^n$ can be computed by simulating an optimal query algorithm of $f$ and serving every query of this algorithm by running an 
optimal query algorithm for $g$.

For many other natural measures of complexity as well, the complexity of $f \circ g^n$ is similarly bounded from above by the product of the complexities of $f$ and $g$. 
As examples, this holds for randomized query complexity (up to a log factor), quantum query complexity, exact and approximate polynomial degree \cite{DBLP:conf/stoc/Sherstov12a}.  
Showing \emph{lower bounds} on the complexity of $f \circ g^n$ that match these upper bounds as closely as possible 
is usually much more difficult; such results are often referred to as \emph{composition theorems}. Besides answering a natural and interesting structural question, composition theorems find applications to constructing functions that 
are hard with respect to various measures of complexity, and separating complexity measures.

For deterministic query complexity, it can be shown by an adversary argument that $\D(f \circ g^n) = \D(f) \cdot \D(g)$ \cite{DBLP:journals/cjtcs/Montanaro14, Tal13}, i.e., the query algorithm described above is an optimal query algorithm for $f \circ g^n$.  For bounded-error quantum query complexity $Q(f)$, a perfect composition theorem is also known $\Q(f \circ g^n) = \Theta(\Q(f) \cdot \Q(g))$ \cite{DBLP:conf/stoc/HoyerLS07,DBLP:conf/soda/Reichardt11a}.  
Of the three main variants of query complexity, the one that has thus far resisted a complete understanding of its nature under composition is randomized query complexity.

Recently, Ben-David and Kothari \cite{DBLP:conf/icalp/Ben-DavidK16} made progress on this question by showing that for any partial function $f$ 
and total function $g$,
\begin{equation}
\label{eq:BDK}
R_{1/3}(f \circ g^n) \in \Omega \left(R_{1/3}(f) \cdot \sqrt{\frac{R_0(g)}{\log R_0(g)}}\right) \enspace .
\end{equation}
They did this by introducing a new complexity measure, the \emph{sabotage complexity}, $\RS(g)$ of $g$ and showing that 
$R_{1/3}(f \circ g^n) \in \Omega(R_{1/3}(f) \cdot \RS(g))$, and then showing that $\RS(g)$ is quadratically related to $R_0(g)$ for total functions 
$g$.  For partial functions $g$, unbounded separations are known between $\RS(g)$ and $\R_{1/3}(g)$.  

In this work, we introduce another complexity measure called the \emph{\ccc} $\bar \chi(g)$ of $g$.  We show that $\bar \chi(g)$ is a quadratically 
tight lower bound on randomized query complexity, even for partial functions $g$.
\begin{restatable}{thm}{maina}
\label{max_quad}
For any partial Boolean function $g \subseteq \{0,1\}^m \times \{0,1\}$,
\[\bar \chi(g)\in\Omega\lf(\sqrt{\R_{\dr13}(g)}\rt) \enspace.\]
\end{restatable}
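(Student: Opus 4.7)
My plan is a Yao-style adversary argument that extracts a large conflict witness from any low-cost randomized algorithm for $g$. Set $T = \R_{1/3}(g)$ and let $\cA$ be a randomized decision tree computing $g$ with error at most $1/3$ using $T$ queries. By Yao's minimax principle, it suffices to fight a deterministic decision tree $D$ of depth $T$ that errs with probability at most $1/3$ under some hard distribution $\mu$ supported on $g^{-1}(0) \cup g^{-1}(1)$; the conflict witness will be read off from the behaviour of $D$ on $\mu$.

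The heart of the argument should be a progress/potential inequality of the same flavour as the Ben-David--Kothari lower bound $\R_0(g) \in \Omega(\RS(g)^2)$. Given a candidate conflict structure of size $k$ in whatever sense defines $\bar \chi$, each query of $D$ should be shown to reduce the associated potential by at most $O(k/\sqrt{T})$ in expectation, via a Cauchy--Schwarz step exploiting the intuition that reading a single bit can only distinguish ``a little bit'' over a spread-out conflict. Summing across the $T$ queries, the total reduction is at most $O(k\sqrt{T})$. For $D$ to be correct on $\mu$, this total reduction must exceed a constant multiple of $k$, which forces $k \in \Omega(\sqrt{T})$ and hence $\bar \chi(g) \in \Omega(\sqrt{T})$.

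The main obstacle is matching the conflict witness to the query pattern of $D$ so that the per-query bound $O(k/\sqrt{T})$ actually holds; a conflict structure that concentrates its mass on a small number of coordinates would be killed by just a few adaptive queries. This is where the ``max'' in max conflict complexity is critical: by taking a supremum over a sufficiently rich family of conflict structures, the adversary can always find one whose mass is spread across the coordinates queried by $D$, uniformising the per-query damage. This flexibility is also what should let the bound avoid the $\log \R_0(g)$ loss of~\cite{DBLP:conf/icalp/Ben-DavidK16} and extend to partial $g$, where the sabotage-complexity approach is known to fail. Producing the right family of structures, and verifying that the orthogonality needed for the Cauchy--Schwarz step is available inside that family, is the step I expect to require the most care.
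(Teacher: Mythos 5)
Your high-level intuition (Cauchy--Schwarz to exploit ``spread-out'' conflict, Yao to pass to a hard distribution) is in the right spirit, but as written the argument does not close, and the direction you are pushing it is not the one the paper uses.

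First, a concrete arithmetic problem: you bound the per-query potential drop by $O(k/\sqrt{T})$, sum over $T$ queries to get a total drop of $O(k\sqrt{T})$, and then require this to exceed $\Omega(k)$. That inequality is $k\sqrt{T}\gtrsim k$, which is vacuous and gives no constraint relating $k$ and $T$. No choice of exponents of the form ``per-query drop $O(k^a/T^b)$ summed over $T$ queries exceeds $\Omega(k)$'' produces $k\in\Omega(\sqrt{T})$ without additional input. The actual numerology has to come from a two-sided handle: a lower bound of $\Omega(k)$ on the accumulated $\Delta$-mass over $O(k^2)$ queries, \emph{and} an upper bound of $O(1)$ on total information learned, with Cauchy--Schwarz mediating between the linear sum $\sum\Delta$ and the quadratic sum $\sum\Delta^2$. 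Your sketch has only one side of this.

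Second, the paper's proof is not an adversary argument against a low-cost $D$; it goes in the constructive direction. Fix the hard $\mu$, split it into $\mu_0,\mu_1$ on $g^{-1}(0)$ and $g^{-1}(1)$, and let $\clB$ be a tree \emph{achieving} the conflict complexity $d=\chi(\mu_0,\mu_1)\le\chi(g)$. One then runs $\clB$ for $10d^2$ queries and outputs the more likely value of $g$ at the resulting node. Correctness uses three ingredients you have not identified: (a) a renewal/direct-sum-type lemma (Claims~\ref{sumofdelta},~\ref{sumofdelta2}) showing that along a random root-to-node path of length $10d^2$ the quantity $\sum_k\E[\Delta(v_k)]$ is $\Omega(d)$, which uses the recursive optimality of $\clB$ at every internal node (Observation~\ref{recursive}); (b) Pinsker's inequality (Claim~\ref{mutin}) converting $\Delta(v)$ into a per-query mutual-information gain of $\Omega(\Delta(v)^2)$ (with $\Pr[g=0\mid v]\Pr[g=1\mid v]$ factors); and (c) Cauchy--Schwarz, which turns $\sum\E[\Delta]\ge\Omega(d)$ over $10d^2$ terms into $\sum\E[\Delta^2]\ge\Omega(1)$, hence $\I(\text{transcript}:g(x))\ge\Omega(1)$ by the chain rule, hence bounded distributional error. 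Note also that this route actually proves the stronger inequality $\chi(g)\in\Omega(\sqrt{\R_{1/3}(g)})$ with the plain conflict complexity; the ``max'' version $\bar\chi$ is only needed later for the composition theorem, not here. Your claim that the ``max'' is ``critical'' for this particular quadratic lower bound therefore misdiagnoses where it is used.

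Third, and most fundamentally, your adversary has no concrete conflict witness and no concrete potential, and you flag this yourself. The missing idea is precisely the one above: the conflict witness \emph{is} the pair $(\mu_0,\mu_1)$ coming from the hard distribution $\mu$, the potential is $\rH(g(x)\mid\text{transcript})$, and the tree you should analyze is the conflict-optimal tree for $(\mu_0,\mu_1)$ --- not the Yao-optimal $D$. Without these specifics the proposal does not constitute a proof.
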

Our main application, and the motivation behind the definition of \ccc, comes from showing a composition theorem.   We use \ccc \ to show a composition theorem in a more general 
setting where $f$ can be a relation and $g$ can be a partial function.
\begin{restatable}[Main Theorem]{thm}{main}
\label{main}
For any relation $f \subseteq \{0,1\}^n \times S$ and partial Boolean function $g \subseteq \{0,1\}^n \times \{0,1\}$,
\[\R_{\dr13}(f \circ g^n) \in \Omega\lf(\R_{\dr49}(f) \cdot \bar \chi(g)\rt).\]
\end{restatable}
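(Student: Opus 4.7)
The plan is to convert a randomized query algorithm $\mathcal{A}$ for $f\circ g^n$ of cost $T=\R_{1/3}(f\circ g^n)$ into a randomized query algorithm $\mathcal{B}$ for $f$ of cost $O(T/\bar\chi(g))$ and error at most $4/9$. The stated lower bound on $\R_{1/3}(f\circ g^n)$ then follows directly from the definition of $\R_{4/9}(f)$.

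First I would fix a pair of distributions $(\mu_0,\mu_1)$ on $g^{-1}(0)$ and $g^{-1}(1)$ respectively that witness $\bar\chi(g)$: by the defining property of max conflict complexity, distinguishing a sample from $\mu_0$ from one from $\mu_1$ requires $\Omega(\bar\chi(g))$ queries to the block, no matter what adaptive randomized strategy is used. On input $y\in\{0,1\}^n$ to $\mathcal{B}$, the simulator runs $\mathcal{A}$ on a virtual input $(x_1,\dots,x_n)$ in which each block $x_i$ is implicitly drawn from $\mu_{y_i}$. Crucially, $\mathcal{B}$ never reads $y_i$ until compelled to: it maintains a posterior view of each block, initialized to the equal mixture $\tfrac12\mu_0+\tfrac12\mu_1$, and answers the inner queries of $\mathcal{A}$ consistently with that posterior. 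Block $i$ is declared \emph{resolved} the first moment the transcript of queries to that block would let an optimal strategy reliably decide $y_i$; at that instant $\mathcal{B}$ spends one outer query to read $y_i$ and thereafter samples $x_i$ from $\mu_{y_i}$ conditioned on the transcript so far.

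The cost and error analyses then split cleanly. By the defining property of $\bar\chi(g)$, each resolution must follow $\Omega(\bar\chi(g))$ inner queries in that block, so resolving $k$ blocks requires $\Omega(k\cdot\bar\chi(g))$ inner queries in expectation; since $\mathcal{A}$ makes only $T$ inner queries, $\mathcal{B}$ makes only $O(T/\bar\chi(g))$ outer queries. For correctness, once every meaningfully queried block has been resolved, the virtual input to $\mathcal{A}$ is distributed exactly as $\prod_i\mu_{y_i}$, so $g(x_i)=y_i$ for every $i$ by construction of the $\mu_\bullet$'s; hence with probability at least $2/3$ the output of $\mathcal{A}$ is a valid answer of $f\circ g^n$ on $(x_1,\dots,x_n)$, which is equivalently a valid answer of $f$ on $y$. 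The $1/9$ slack between $1/3$ and $4/9$ absorbs the small statistical error introduced by blocks that $\mathcal{A}$ touches but never resolves.

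The main obstacle is formalizing the resolution threshold so that both bounds go through. One needs a potential function on each block's query transcript that (i) is tightly tied to $\bar\chi(g)$, so that crossing the threshold provably demands $\Omega(\bar\chi(g))$ queries (this should essentially be the content of the definition of $\bar\chi$), and (ii) keeps the answers generated from the unresolved-posterior statistically close, aggregated over all blocks, to the answers from the true $\prod_i\mu_{y_i}$ distribution. Handling the adaptivity of $\mathcal{A}$, which may interleave queries across blocks arbitrarily, and the fact that $f$ is a relation rather than a function (so $\mathcal{B}$'s correctness condition is output-validity rather than output-equality), are secondary technical points that should absorb naturally into this framework.
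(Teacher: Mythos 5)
Your high-level shape (lazily generate the blocks $x_i\sim\mu_{z_i}$ while simulating an algorithm for $f\circ g^n$, query $z_i$ rarely, absorb the loss $1/3\to 4/9$ by truncation) matches the paper, but the two ingredients you lean on are not the ones $\bar\chi$ actually provides, and this is where the proposal breaks. First, $\bar\chi(g)$ is \emph{not} a statement that ``distinguishing $\mu_0$ from $\mu_1$ requires $\Omega(\bar\chi(g))$ queries'': it is defined as $\max_{\clQ}\min_T\chi(T,\clQ)$, where $\clQ$ is a \emph{distribution over pairs} $(\mu_0,\mu_1)$ (a single pair only witnesses $\chi(g)\le\bar\chi(g)$), and $\chi(T,(\mu_0,\mu_1))$ is the expected number of steps of the \textsc{Bitsampler} random walk on a tree $T$ computing $g$ before the walk is forced to query, i.e.\ the expected survival time of a bit-by-bit coupling of the two conditional distributions. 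The fact that $\R_{1/3}(g)=\Omega(\bar\chi(g))$ is a \emph{consequence} of the composition theorem (take $f(z)=z_1$), so your cost analysis, which invokes ``each resolution must follow $\Omega(\bar\chi(g))$ inner queries'' as a defining property, is unproven and dangerously close to circular; moreover it is not even clear that the distributional distinguishing complexity under the witnessing $\clQ$ is $\Omega(\bar\chi(g))$, since conflict complexity measures an expected coupling-failure time over trees that must compute $g$ exactly, not the depth at which constant statistical advantage appears.

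Second, your simulation is not exact, and the correctness gap is real, not a ``secondary technical point.'' Answering inner queries from the posterior of the mixture $\tfrac12\mu_0+\tfrac12\mu_1$ until a ``resolution threshold'' is crossed means that blocks which are queried but never resolved receive answers whose distribution differs from $\mu_{z_i}$ conditioned on the transcript; summed over up to $n$ such blocks this deviation need not be small, and you give no potential function or bound showing it fits in the $1/9$ slack (you explicitly defer exactly this point). The paper's proof eliminates this error entirely: at each inner query it draws a uniform $r\in[0,1]$ and answers $0$ if $r<\min(p_0,p_1)$, answers $1$ if $r>\max(p_0,p_1)$, and queries $z_i$ only when $r$ lands in the conflict interval, after which it samples from the true conditional of $\mu_{z_i}$. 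The Simulation Theorem (\cref{samedistn}) shows the leaf distribution is \emph{exactly} $\gamma_z(\clQ)$, so the only accuracy loss is the Markov truncation; and the expected number of queries to $z$ is bounded by $c/\bar\chi(g)$ via a direct sum theorem (\cref{dp}) applied to the tree completed by optimal trees $\clB_{i,\ell}$ at the leaves, which is precisely how the adaptive interleaving across blocks and the conditioning on transcripts (your acknowledged obstacle) are handled. To repair your argument you would essentially have to replace the ``resolution threshold'' device by this exact coupling and supply the direct sum step, i.e.\ reconstruct the paper's proof.
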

Putting \cref{max_quad} and \cref{main} together we have the following corollary.
\begin{restatable}{corollary}{maincor}
\label{maincor}
For any relation $f \subseteq \{0,1\}^n \times S$ and partial Boolean function $g \subseteq \{0,1\}^n \times \{0,1\}$,
\[\R_{\dr13}(f \circ g^n) \in \Omega\lf(\R_{\dr49}(f) \cdot \sqrt{\R_{\dr13}(g)}\rt).\]
\end{restatable}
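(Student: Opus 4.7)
The plan is an immediate combination of the two restatable theorems stated just before the corollary. First I would apply \cref{main} to the relation $f$ and the partial Boolean function $g$, which gives
\[
\R_{\dr13}(f \circ g^n) \in \Omega\lf(\R_{\dr49}(f) \cdot \bar\chi(g)\rt).
\]
Next I would invoke \cref{max_quad} on $g$ to obtain $\bar\chi(g) \in \Omega\lf(\sqrt{\R_{\dr13}(g)}\rt)$, and substitute this lower bound for $\bar\chi(g)$ into the previous display. Since $\R_{\dr49}(f)$ is a nonnegative quantity and both $\Omega(\cdot)$'s hide only universal constants, asymptotic notation composes through the multiplication, yielding
\[
\R_{\dr13}(f \circ g^n) \in \Omega\lf(\R_{\dr49}(f) \cdot \sqrt{\R_{\dr13}(g)}\rt),
\]
which is precisely the claim.

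There is essentially no obstacle at the level of the corollary itself; all the genuine content is encapsulated in the two ingredients. \cref{main} is where one must lower bound the composed randomized query complexity in terms of the \ccc{} of the inner function $g$ (this is the composition-theorem step, and is where the mild error-amplification from $1/3$ to $4/9$ on the outer function shows up). \cref{max_quad} is where the quadratic relationship $\bar\chi(g) \gtrsim \sqrt{\R_{\dr13}(g)}$ is established, crucially for partial functions $g$ as well, which is the regime in which the analogous statement for sabotage complexity is known to fail. Once those two facts are in hand, the corollary is a one-line substitution and the only thing to check is that the constants from the two $\Omega(\cdot)$'s multiply cleanly, which they do because the bounds are unconditional in $f$ and $g$.
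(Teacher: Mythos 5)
Your proposal is correct and matches the paper's own reasoning exactly: the paper likewise derives \cref{maincor} by directly combining \cref{main} (which gives the lower bound in terms of $\bar\chi(g)$) with \cref{max_quad} (which lower bounds $\bar\chi(g)$ by $\Omega(\sqrt{\R_{\dr13}(g)})$). Nothing further is needed.
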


We further show that $\bar \chi(g)$ is always at least as large as the sabotage complexity of $g$.
\begin{restatable}{thm}{sabotage}
\label{sabotage}
For any partial Boolean function $g \subseteq \{0,1\}^m \times \{0,1\}$,
\[\bar \chi(g) \ge \RS(g) \enspace .\]
\end{restatable}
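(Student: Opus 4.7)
The plan is to derive $\bar\chi(g) \geq \RS(g)$ by showing that any distribution witnessing $\RS(g)$ via Yao's minimax principle lifts to an input configuration for which the max-conflict objective is at least as costly, so that max-conflict is only harder to solve than sabotage.

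I would first recall from \cite{DBLP:conf/icalp/Ben-DavidK16} that $\RS(g)$ is the randomized query complexity of the sabotage task: given a partial assignment that is jointly consistent with some $x \in g^{-1}(0)$ and $y \in g^{-1}(1)$, output a coordinate that separates them. By Yao's minimax, there is a hard distribution $\nu$ over such sabotaged inputs for which every deterministic algorithm needs at least $\RS(g)$ expected queries. Each sabotaged input in the support of $\nu$ arises from (at least one) pair $(x,y)$ with $g(x)=0$ and $g(y)=1$, so $\nu$ lifts naturally to a distribution $\mu$ over such pairs.

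The key technical step is a reduction from the max-conflict game on $\mu$ to the sabotage task on $\nu$: given any strategy $S$ achieving expected cost $c$ in the conflict game on $\mu$, I would construct a randomized sabotage algorithm $\cA$ with expected cost at most $c$ on $\nu$. The idea is that the oracle answers seen by $S$ on any query $i$ are distributed according to $\mu$ conditioned on past responses, and a sabotage algorithm with oracle access to a sample from $\nu$ can produce the same answers by querying coordinate $i$ of the sabotaged input (since the lift of $\nu$ to $\mu$ is designed precisely to match this). When $S$ halts with a coordinate, $\cA$ returns the same coordinate, and this answer is valid for sabotage because $S$ has certified it as a conflict between $x$ and $y$. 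Consequently the sabotage cost on $\nu$ is at most the conflict cost on $\mu$, and so $\RS(g) \leq \bar\chi(g)$.

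The principal obstacle is reconciling the randomness models of the two frameworks: the conflict game may use internal coin flips to decide which side of the pair $(x,y)$ each query sees, whereas the sabotage oracle presents a single partial assignment whose queried bits are predetermined. Making the simulation precise requires either a careful coupling between sabotaged inputs and the random draws that define the conflict game's oracle, or arguing at the level of the joint distribution so that per-query responses have matching marginals. A secondary check is verifying that the lifted $\mu$ is a legitimate input configuration for the supremum defining $\bar\chi(g)$, which may require that any structural constraints built into the definition of max conflict complexity are inherited from the sabotage distribution $\nu$.
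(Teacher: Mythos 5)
Your proposal is correct and follows essentially the same route as the paper: restrict the distribution $\clQ$ in the definition of $\bar\chi(g)$ to pairs of point-mass distributions $(\delta_x,\delta_y)$ drawn from a hard distribution coming from sabotaged inputs, and observe that the conflict process on such a pair just walks down the common path of $x$ and $y$ until the separating (starred) coordinate, which is exactly the paper's Claim~\ref{clm:sabotage} combined with the minimax characterization of $\RS(g)$ in Theorem~\ref{thm:alt_sabotage}. The ``randomness-reconciliation'' obstacle you flag is vacuous once the pairs are point masses, since then $\Delta(v)\in\{0,1\}$ at every node and the conflict walk is deterministic, halting precisely at depth $\sep_T(x,y)$.
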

Thus for a partial function $f$ and total function $g$, \cref{main} also implies the bound of Ben-David and Kothari \cref{eq:BDK}.  

Finally, we show that in the more general setting where $f$ can be a relation and $g$ a partial function, \cref{maincor} 
can be \emph{tight}.
\begin{restatable}{thm}{}
\label{t_match}
There exists a relation $f_0 \subseteq \{0,1\}^n \times \{0,1\}^n$ and partial Boolean function $g_0 \subseteq \01^n \times \01$ such that
\m{
  \R_{\dr49}(f_0)\in\asT{\sq n},\,
  \R_{\dr13}(g_0)\in\asT{n}
  \txt{~and~}
  \R_{\dr13}(f_0\circ g_0^n)\in\asT{n} \enspace.}
\end{restatable}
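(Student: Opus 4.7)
The plan is to exhibit an explicit pair $(f_0, g_0)$ meeting all three asymptotic estimates, deduce the composition lower bound directly from \Cref{maincor}, and construct a tailored randomized algorithm for the matching upper bound. A natural choice for $g_0$ is the partial ``gap-majority'' function: $g_0(x) = 1$ iff $|x| \geq n/2 + \sqrt{n}$ and $g_0(x) = 0$ iff $|x| \leq n/2 - \sqrt{n}$. Standard Chernoff / hypothesis-testing arguments yield $\R_{\dr13}(g_0) \in \Theta(n)$; in addition, a sabotaged input of $g_0$ contains at least $2\sqrt{n}$ star positions out of $n$, each hit by a random query with probability $\Omega(1/\sqrt{n})$, so $\RS(g_0) \in \Theta(\sqrt{n})$, and combining this with \Cref{sabotage} and \Cref{max_quad} pins down $\bar\chi(g_0) \in \Theta(\sqrt{n})$, saturating the quadratic gap of \Cref{max_quad}.

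For $f_0$ I would take a carefully engineered search-style relation on $n$-bit inputs whose randomized complexity is $\Theta(\sqrt{n})$ and whose set of valid outputs is rich enough to tolerate only ``partial'' information about the inner $g_0$-values. Given such an $f_0$, the lower bound $\R_{\dr13}(f_0 \circ g_0^n) \in \Omega(n)$ is an immediate consequence of \Cref{maincor}.

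The heart of the argument, and the step I expect to be the main obstacle, is the matching upper bound. The naive composition algorithm spends $\tilde{O}(\R_{\dr13}(f_0) \cdot \R_{\dr13}(g_0)) = \tilde{O}(n^{3/2})$ queries, so the target $O(n)$ forces an algorithm that devotes only $O(\sqrt{n})$ queries per block it touches---precisely the $\bar\chi$-scale of $g_0$, not its full evaluation scale. But $O(\sqrt{n})$ queries are too few to classify any individual block reliably (the $1/\sqrt{n}$ per-bit bias is swamped by the $\Theta(n^{-1/4})$ sampling standard deviation on $\sqrt{n}$ samples). The algorithm therefore cannot simulate an optimal algorithm for $f_0$ in the naive way; it must aggregate weak per-block signals over the $O(\sqrt{n})$ chosen blocks into a single output valid for $f_0 \circ g_0^n$, making essential use of the relational slack in $f_0$. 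Designing $f_0$ in tandem with this aggregation procedure so that the lower and upper bounds both land at $\Theta(n)$ is the key technical step.
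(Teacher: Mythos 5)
You have the right inner function (the paper uses exactly this gap-majority $g_0$, with $g_0^{-1}(0)$ and $g_0^{-1}(1)$ the strings of weight at most $n/2-\sqrt n$ and at least $n/2+\sqrt n$), and you correctly observe that the composition lower bound $\R_{1/3}(f_0\circ g_0^n)\in\Omega(n)$ will follow from \cref{maincor} once $\R_{4/9}(f_0)\in\Omega(\sqrt n)$ and $\R_{1/3}(g_0)\in\Omega(n)$ are in hand. But the proposal stops exactly where the theorem's content begins: you never define $f_0$, and you explicitly defer both the $\Theta(\sqrt n)$ bound for $f_0$ and the $O(n)$ upper bound for $f_0\circ g_0^n$ as ``the key technical step.'' Without these there is no proof of an existential statement; the whole point of \cref{t_match} is to exhibit such an $f_0$. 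For the record, the paper's choice is the relation $f_0=\{(a,z): |a\oplus z|\le n/2-\sqrt n\}$, i.e.\ output any string agreeing with $z$ on at least $n/2+\sqrt n$ coordinates. The lower bound $\R_{4/9}(f_0)\in\Omega(\sqrt n)$ is a subcube/anti-concentration argument against the uniform distribution; $\R_{1/3}(g_0)\in\Omega(n)$ is obtained by reduction to Gap-Hamming-Distance communication complexity (your ``standard hypothesis-testing'' sketch glosses over the fact that the hard distributions are not product distributions, which is precisely why the paper invokes Chakrabarti--Regev); and the upper bound samples $t_\epsilon=\Theta(\sqrt{\log(1/\epsilon)})$ uniformly random coordinates of \emph{every} block $x_i$, takes the majority as a guess $a_i$, and outputs $a_1\cdots a_n$: each guess is correct with probability at least $1/2+\Omega(t_\epsilon/\sqrt n)$, and a Chernoff bound shows the string lands within the relational slack of $f_0$. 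Note this has a different shape from the algorithm you gesture at (it spends $O(1)$ queries in each of the $n$ blocks, not $\Theta(\sqrt n)$ queries in $\Theta(\sqrt n)$ blocks), and it is the design of $f_0$'s slack in tandem with this aggregation that you left open.

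Two smaller points. First, your claim that the sabotage bound ``pins down'' $\bar\chi(g_0)\in\Theta(\sqrt n)$ does not follow from what you wrote: $\bar\chi(g_0)\ge\RS(g_0)$ together with \cref{max_quad} only gives lower bounds on $\bar\chi(g_0)$; the matching upper bound $\bar\chi(g_0)\in O(\sqrt n)$ only comes out \emph{after} the three bounds of the theorem are proved, via \cref{main} applied to this pair. Second, this side discussion of $\RS$ and $\bar\chi$ is not needed anywhere in the argument, so its absence of justification does not damage the main line---the genuine gap is the missing construction of $f_0$ and the two bounds involving it.
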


\subsection{Proof Technique}
\label{idea}
At a high level, the proof of Theorem~\ref{main} follows the structure of the proof by Anshu et al.\ \cite{fstcomp} and Ben-David and Kothari \cite{DBLP:conf/icalp/Ben-DavidK16}. We show that for every probability distribution $\eta$ over the 
input space $\{0,1\}^n$ of $f$, there exists a query algorithm $\mathcal{A}$ that makes $O(\R_{\dr13}(f \circ g^n)/\sqrt{\R_{\dr13}(g)})$ queries in the worst case, and computes $f$ with high probability, $\Pr_{z \sim \eta} [(z,\mathcal{A}(z))\in f] \geq 5/9$. By the minimax principle (Fact~\ref{minmax}) this implies Theorem~\ref{main}.

We do this by using a query algorithm for $f \circ g^n$ to construct a query algorithm for $f$.  We define a sampling procedure that for any $z \in \{0,1\}^n$ samples $x=(x_1, \ldots, x_n)$ such that 
$(z,s) \in f$ if and only if $(x,s) \in f \circ g^n$.  This procedure is defined in terms of $\clQ$, which is a probability distribution over pairs of distributions $(\mu_0,\mu_1)$, where $\mu_0$ is supported on $g^{-1}(0)$ and $\mu_1$ is 
supported on $g^{-1}(1)$.  We define a distribution $\gamma_\eta$ over $(\{0,1\}^m)^n$ in terms of this sampling process as follows:
\begin{enumerate}
\item Sample $z=(z_1, \ldots,z_n)$ from $\{0,1\}^n$ according to $\eta$.
\item Independently sample $(\mu_0^{(i)}, \mu_1^{(i)})$ from $\clQ$ for $i=1, \ldots, n$.
\item Sample $x_i=(x^{(1)}_i, \ldots, x^{(m)}_i)$ according to $\mu_{z_i}^{(i)}$ for $i=1,\ldots, n$.  Return $x=(x_1, \ldots, x_n)$.
\end{enumerate}
Notice that steps~(1) and~(2) are independent and the order in which they are performed does not matter.  For future reference, for a fixed $z$ let $\gamma_z(\clQ)$ be the probability distribution defined by the last two steps.  

Now $\gamma_\eta$ is simply a probability distribution over $(\{0,1\}^m)^n$. Thus by the minimax principle (Fact~\ref{minmax} below), there is a deterministic query algorithm $\mathcal{A}'$ of worst-case complexity at most 
$\R_{1/3} (f \circ g^n)$ such that $\Pr_{x \sim \gamma_\eta}[(x,\mathcal{A}'(x)) \in f \circ g^n] \geq 2/3$. We first use $\mathcal{A}'$ to construct a randomized query algorithm $T$ for $f$ with bounded \emph{expected} 
query complexity and error at most $\dr13$.  $T$ is presented formally in Algorithm~\ref{Tee}. The final algorithm $\mathcal{A}$ will be a truncation of $T$ which has bounded worst-case complexity and error at 
most $\dr49$.

On input $z$, the algorithm $T$ seeks to sample a string $x$ from $\gamma_z(\clQ)$, and run $\mathcal{A'}$ on $x$.  Put another way, $\gamma_z(\clQ)$ induces a probability distribution over the 
leaves of $\mathcal{A}'$, and the goal of $T$ is to sample a leaf of $\mathcal{A}'$ according to this distribution.
Since for each $s \in S$, $(x,s) \in f \circ g^n$ if and only if $(z,s) \in f$, and 
$\Pr_{x \sim \gamma_\eta}[(x,\mathcal{A}'(x)) \in f \circ g^n] \geq \dr23$, we have that $\Pr_{z \sim \eta}[(z,T(z)) \in f ] \geq \dr23$. Thus $T$ meets the accuracy requirement.

The catch, of course, is to specify how $T$ samples from $\gamma_z(\clQ)$ \emph{without making too many queries to $z$}.  To sample $x_i$ from $\mu_{z_i}^{(i)}$ seems to require knowledge of $z_i$, and 
thus $T$ would have to query all of $z$.

To bypass this problem, we remember that $\mathcal{A}'$, being an efficient algorithm, will query only a few bits of $x$. This allows us to sample $x$ bit by bit as and when they are queried by $\mathcal{A}'$.
To see this more clearly, consider a run of $T$ where the pairs of distributions $(\mu_0^{(1)}, \mu_1^{(1)}), \ldots, (\mu_0^{(n)}, \mu_1^{(n)})$ were chosen in step~(2) of the sampling procedure.  
Suppose that $T$ is trying to simulate $\mathcal{A'}$ at a vertex $v$ where $x_i^{(j)}$ is queried.  To respond to this query, $T$ will sample $x_i^{(j)}$ from its marginal distribution according to $\mu_{z_i}^{(i)}$ conditioned on the 
event $x \in v$. Let the following be the marginal distributions of $x_i^{(j)}$ for the two possible values of $z_i$.
\[\begin{array}{c|c|c}
& \Pr_{x_i \sim \mu_{z_i}^{(i)}} [x_i^{(j)}=0 \mid x \in v]&\Pr_{x_i \sim \mu_{z_i}^{(i)}} [x_i^{(j)}=1 \mid x \in v] \\
\hline \hline
z_i=0 & p_0 & 1-p_0 \\
\hline
z_i=1& p_1 & 1-p_1 \\
\hline
\end{array}
\]

Without loss of generality, assume that $p_0 \leq p_1$. $T$ answers the query by the procedure \textsc{Bitsampler} given in Algorithm~\ref{samplepr}.
\begin{algorithm}[!h]\label{samplepr}
\DontPrintSemicolon
\caption{ \textsc{Bitsampler} (suppose $p_0 \leq p_1$)}
Sample $r \sim [0,1]$ uniformly at random.\label{pickr} \;
\If{$r < p_0$}{ return $0$.} \;
\ElseIf{$r > p_1$}{ return $1$.}\;
\Else{query $z_i$.\label{querystep} \;
\If{$r \leq p_{z_i}$} {return $0$.\label{querystep1}} \Else {return $1$. \label{querystep2}}}
\end{algorithm}

Note that the bit returned by \textsc{Bitsampler} has the desired distribution. The step in which \textsc{Bitsampler} returns the bit depends on the value of $r$ sampled in step~\ref{pickr}. In particular, $z_i$ is queried if and only if $r \in [p_0,p_1]$, and the bit is returned in step~\ref{querystep1} or~\ref{querystep2}. Such a query to $z_i$ contributes to the query complexity of $T$. Thus the probability that $T$ makes a query when the underlying simulation of $\mathcal{A}'$ is at vertex $v$ is $(p_1-p_0)$. We refer to this quantity as $\Delta(v)$. It plays an important role in our analysis (see Section~\ref{ccnr} and Appendix~\ref{key-inf}).

Our sampling procedure and the tools we use to bound its cost is reminiscent of work of Barak et al. \cite{DBLP:journals/siamcomp/BarakBCR13} in communication complexity.  
They look at a communication analog of our setting where two players are trying to sample a leaf in a communication protocol while communicating as little as possible.  

\subsubsection{Conflict complexity and \ccc}
Bounding the query complexity of $T$ naturally suggests the quantities that we define in this work: the conflict complexity $\chi(g)$ and the \ccc \ $\bar \chi(g)$ of a partial Boolean function $g$. A formal definition can be found in Section~\ref{cc}; here we give the high-level idea and motivation behind these quantities.  

Forget about $T$ for a moment and just consider a deterministic query algorithm $\clB$ computing the partial function $g \subseteq \{0,1\}^m \times \{0,1\}$.  Let $\mu_0, \mu_1$ be distributions with support on 
$g^{-1}(0), g^{-1}(1)$, respectively.  For each vertex $v \in \clB$ let $p_0(v)$ (respectively $p_1(v))$ be the probability that the answer to the query at $v$ is 0 on input $x \sim \mu_0$ (respectively $x \sim \mu_1$), conditioned 
on $x$ reaching $v$.  Now we can imagine a process $\clP(\clB, \mu_0, \mu_1)$ that runs BITSAMPLER on the tree $\clB$: $\clP(\clB, \mu_0, \mu_1)$ begins at the root, and at a vertex $v$ in $\clB$ it uniformly chooses a random 
real number $r \in [0,1]$.  
If $r < \min\{p_0(v), p_1(v)\}$ then the query is ``answered'' 0 and it moves to the left child.  If $r > \max\{p_0(v), p_1(v)\}$ then the query is ``answered'' 1 and it moves to the right child.  If 
$r \in [\min\{p_0(v), p_1(v)\}, \max\{p_0(v), p_1(v)\}]$ then the process halts.  The conflict complexity $\chi(\clB, (\mu_0, \mu_1))$ is the expected number of vertices this process visits before halting.  The conflict complexity of $g$ 
is defined to be 
\[
\chi(g) = \max_{(\mu_0, \mu_1)} \min_{T} \chi(T, (\mu_0, \mu_1)) \enspace,
\]
where the minimum is taken over trees $T$ that compute $g$.  
For \emph{\ccc} \ we enlarge the set over which we maximize.  Let $\clQ$ be a distribution over pairs of distributions $(\mu_0, \mu_1)$, where $\supp(\mu_0) \subseteq g^{-1}(0), \supp(\mu_1) \subseteq g^{-1}(1)$ 
for each pair $(\mu_0, \mu_1)$ in the support of $\clQ$.
Let $\chi(\clB, \clQ) = \E_{(\mu_0, \mu_1) \sim \clQ}\ [ \chi(\clB, (\mu_0, \mu_1))]$.
The max conflict complexity $\bar \chi(g)$ is defined as 
\[
\bar \chi(g) = \max_{\clQ} \min_{T} \chi(T, \clQ) \enspace ,
\]
where the minimum is taken over trees $T$ that compute $g$.  
Clearly, the \ccc \ is at least as large as the conflict complexity.

To motivate the \ccc, note that the query complexity of $T$ is the number of times step~\ref{querystep} in \textsc{Bitsampler} is executed, i.e.\  when the random number $r \in [p_0, p_1]$.  In the definition of $T$ we will choose 
$\clQ$ to achieve the optimal value in the definition of $\bar \chi(g)$.  Then intuitively one expects that for each $i$, $T$ queries $z_i$ only after $\mathcal{A}'$ makes about $\bar \chi(g)$ queries into $x_i$.     
By means of a direct sum theorem for \ccc \ we make this intuition rigorous and prove that the expected query complexity of $T$ is at most $\R_{\dr13}(f \circ g^n)/ \bar \chi(g)$. 
We refer the reader to Section~\ref{comp} for a formal proof.

\subsubsection{$\bar \chi(g)$ and $\R(g)$}
Note that applying \cref{main} with the outer function $f(z) = z_1$ shows that $\R_{\dr13}(g) \in \Omega(\bar \chi(g))$.
We complete the proof of \cref{maincor} by showing that \ccc \ is a quadratically tight lower bound on randomized query complexity, even for partial functions $g$.  In fact, in \cref{maina} we show the stronger result that this is true even for the 
conflict complexity $R_{\dr13}(g) \in O(\chi(g)^2)$.

To prove $R(g) \in O(\chi(g)^2)$, we again resort to the minimax principle; we show that for each probability distribution $\mu$ over the valid inputs to $g$, there is an 
accurate and efficient distributional query algorithm for $g$. For $b \in \{0,1\}$, let $\mu_b$ be the distribution obtained by conditioning $\mu$ on the event $g(x)=b$. By the definition of $\chi(g)$, there is a query algorithm $\clB$ 
such that the following is true: if its queries are served by \textsc{Bitsampler}, step~\ref{querystep} is executed within expected $\chi(\clB, \mu_0, \mu_1) \leq \chi(g)$ queries. Note that at a vertex $v$ which queries $i$, the probability that 
step~\ref{querystep} is executed is $\Delta(v) = |\Pr_{\mu_0}[x_{i}=0 \mid x \mbox{ at } v] - \Pr_{\mu_1}[x_{i}=0 \mid x \mbox{ at } v]|$. This roughly implies that for a typical vertex $v$ of $\mathcal{B}$, $\Delta(v)$ is at least about $\frac{1}{\chi(g)}$. By a technical claim that we prove (Claim~\ref{mutin}) this implies that 
the query outcome at $v$ carries about $\frac{1}{\chi(g)^2}$ bits of information about $g(x)$. Using the \emph{chain rule of mutual information}, we can show that the mutual information between $g(x)$ and the outcomes of first 
$O(\chi(g))^2$ queries by $\clB$ is $\Omega(1)$. This enables us to conclude that we can infer the value of $g(x)$ with success probability $1/2 + \Omega(1)$ from the transcript of $\clB$ restricted to the first $O(\chi(g)^2)$ queries. 
The distributional algorithm of $g$ for $\mu$ is simply the algorithm $\clB$ terminated after $O(\chi(g)^2)$ queries.

\subsubsection{$\bar \chi(g)$ \ and $\RS(g)$}
To see why $\bar \chi(g) \ge \RS(g)$, we first give an alternative characterization of $\RS(g)$.  For a deterministic tree $T$ computing $g$ and strings $x,y$ such that $g(x) \ne g(y)$, let $\sep_T(x,y)$ be the depth of the node $v$ in $T$ 
such that $x$ and $y$ both reach $v$ yet $x_{q(v)} \ne y_{q(v)}$, where $q(v)$ is the index queried at $v$.  Let $\clT$ be a zero-error randomized protocol for $g$, i.e.\ $\clT$ is a probability distribution supported on deterministic trees 
that compute $g$.  Then we have (for a proof see \cref{app:sabotage})
\[
\RS(g) = \min_{\clT} \max_{x,y \atop g(x) \ne g(y)} \E_{T \sim \clT} [\sep_T(x,y)] \enspace .
\]
By von Neumann's minimax theorem \cite{vonNeumann}, this is equal to 
\[
\RS(g) = \max_{p} \min_T \E_{(x,y) \sim p} [\sep_T(x,y)] \enspace.
\]
Here, the max is taken over distributions $p$ on pairs $(x,y)$ where $g(x) \ne g(y)$, and the min is taken over deterministic trees $T$ computing $g$.  

We have seen that the definition of $\bar \chi(g)$ is 
\[
\bar \chi(g) = \max_{\clQ} \min_{T} \E_{(\mu_0, \mu_1) \sim \clQ}\ [ \chi(T, (\mu_0, \mu_1))] \enspace ,
\]
where $\clQ$ is a distribution over pairs $(\mu_0, \mu_1)$ and $T$ is a deterministic tree computing $g$.  When $(\mu_0, \mu_1)$ are taken to be singleton distributions, 
i.e.\ $\mu_0$ puts all its weight on a single $x$ with $g(x)=0$, and $\mu_1$ puts all its weight on a single $y$ with $g(y)=1$, it is easy to see that $\chi(T, (\mu_0, \mu_1)) = \sep_T(x,y)$ 
(see \cref{clm:sabotage}).  Thus $\bar \chi(g)$ is at least as large as the sabotage complexity of $g$ as $\clQ$ is allowed to be a distribution over general $(\mu_0, \mu_1)$, not just 
singleton distributions.

\section{Preliminaries}
\label{prelims}
Let $g \subseteq \{0,1\}^m \times \{0,1\}$ be a partial Boolean function. For $b \in \{0,1\}$, $g^{-1}(b)$ is defined to be the set of strings $x$ in $\{0,1\}^m$ for which $(x,b) \in g$ and $(x,\overline{b}) \notin g$. We refer to $g^{-1}(0) \cup g^{-1}(1)$ as the set of valid inputs to $g$. We assume that for all strings $y \notin g^{-1}(0) \cup g^{-1}(1)$, both $(y,0)$ and $(y,1)$ are in $g$. For a string $x \in g^{-1}(0) \cup g^{-1}(1)$, $g(x)$ refers to the unique bit $b$ such that $(x,b) \in g$. All the probability distributions $\mu$ over the domain of a partial Boolean function $g$ in this paper are assumed to have support on $g^{-1}(0) \cup g^{-1}(1)$. Thus $g(x)$ is well-defined for any $x$ in the support of $\mu$.

Let $S$ be any set. Let $h \subseteq \{0,1\}^k \times S$ be any relation. Consider query algorithms $\mathcal A$ that accept a string $x \in\{0,1\}^k$ as input, query various bits of $x$, and produce an element of $S$ as output. We denote the output by $\mathcal{A}(x)$.
\begin{definition}[Deterministic query complexity]
A deterministic query algorithm $\cA$ is said to compute $h$ if $(x, \cA(x)) \in h$ for all $x \in \{0,1\}^k$.
The deterministic query complexity $\D(h)$ of $h$ is the minimum over all deterministic query algorithms $\cA$ computing $h$ of the maximum number of queries made by $\cA$ over $x \in \{0,1\}^k$.  
\end{definition}

\begin{definition}[Bounded-error randomized query complexity]
Let $\epsilon \in [0,1/2)$.  We say that a randomized query algorithm $\cA$ computes $h$ with error $\epsilon$ if $\Pr[(x,\cA(x)) \in h] \geq 1 - \epsilon$ for all $x \in \{0,1\}^k$.
The bounded-error randomized query complexity $\R_\epsilon(h)$ of $h$ is the minimum over all randomized query algorithms $\cA$ computing $h$ with error $\epsilon$ of the 
maximum number of queries made by $\cA$ over all $x \in \{0,1\}^k$ and the internal randomness of $\cA$.
\end{definition}

\begin{definition}[Distributional query complexity]
 Let $\mu$ a distribution on the input space $\{0,1\}^k$ of $h$, and $\epsilon \in [0,1/2)$.  We say that a deterministic query algorithm $\cA$ computes $h$ with distributional error $\epsilon$ on $\mu$ if 
 $\Pr_{x \sim \mu}[(x,\mathcal A(x)) \in h] \geq 1 - \epsilon$.
 The distributional query complexity $\D^\mu_\epsilon(h)$ of $h$ is the minimum over deterministic algorithms $\cA$ computing $h$ with distributional error $\epsilon$ on $\mu$ of the maximum over $x \in \{0,1\}^k$ 
 of the number of queries made by $\cA$ on $x$.
 \end{definition}

We will use the minimax principle in our proofs to go between distributional and randomized query complexity.
\begin{fact}[Minimax principle]
\label{minmax}
For any integer $k >0$, set $S$, and relation $h \subseteq \{0,1\}^k \times \mathcal{S}$,
\[\R_\epsilon(h)=\max_{\mu}\D_\epsilon^\mu(h).\]
\end{fact}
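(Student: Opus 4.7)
The plan is to prove the two inequalities separately. The easy direction $\R_\epsilon(h) \ge \max_\mu \D_\epsilon^\mu(h)$ I would handle by a pure averaging argument. Any randomized query algorithm $\cA$ witnessing $\R_\epsilon(h) = q$ can be written as a distribution $\nu$ over deterministic query trees $T$, each of depth at most $q$: the worst-case bound in the definition is taken over both inputs \emph{and} internal coins, so every setting of the coins produces a tree of depth $\le q$. Fix an arbitrary input distribution $\mu$. By hypothesis $\Pr_T[(x,T(x)) \in h] \ge 1 - \epsilon$ pointwise for every input $x$, and averaging $x$ over $\mu$ preserves this bound; swapping the two expectations then guarantees some $T \in \supp(\nu)$ with $\Pr_{x \sim \mu}[(x,T(x)) \in h] \ge 1 - \epsilon$. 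This $T$ has depth at most $q$ and witnesses $\D_\epsilon^\mu(h) \le q$, so $\max_\mu \D_\epsilon^\mu(h) \le \R_\epsilon(h)$.

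For the reverse direction (Yao's principle), I would invoke von Neumann's minimax theorem. Let $R^* := \max_\mu \D_\epsilon^\mu(h)$ and consider the zero-sum matrix game whose row strategies are the deterministic query trees on $k$ variables of depth at most $R^*$, column strategies are inputs $x \in \01^k$, and the payoff to the row player is $\Id[(x,T(x)) \in h]$. The definition of $R^*$ says exactly that for every mixed column strategy $\mu$ there is a pure row strategy achieving expected payoff at least $1 - \epsilon$, i.e.\ the maximin value of the game is at least $1 - \epsilon$. Von Neumann's theorem then produces a mixed row strategy---that is, a randomized query algorithm $\cA$ supported on depth-$R^*$ trees---whose payoff against every pure column strategy $x$ is at least $1 - \epsilon$. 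This $\cA$ is correct with probability at least $1 - \epsilon$ on every input, and by construction every tree in its support has depth at most $R^*$, so its worst-case query cost is at most $R^*$, giving $\R_\epsilon(h) \le R^*$.

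The only subtle point, which I would flag but not belabor, is verifying that the game I set up is genuinely finite, so that the minimax theorem can be invoked without any topological regularity conditions: for fixed $k$ and depth bound $R^*$, there are only finitely many such deterministic trees (and of course finitely many inputs). Once finiteness is in hand, the hard direction reduces to LP duality for matrix games and the easy direction is averaging; no further obstacles arise.
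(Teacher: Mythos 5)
Your proof is correct and takes essentially the same route as the paper's: both reduce the statement to von Neumann's minimax theorem applied to the finite zero-sum game between deterministic trees of bounded depth and inputs in $\{0,1\}^k$. The paper packages both inequalities into a single minimax equality on the error payoff and then commutes the $\max_\mu$ with the $\min$ over the depth parameter, whereas you handle the easy direction by a separate averaging argument, but this is a cosmetic rearrangement of the same proof.
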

We present a proof of Fact~\ref{minmax} in Appendix~\ref{mmax}.

Let $\mu$ be a probability distribution over $\{0,1\}^k$. We use $\supp(\mu)$ to denote the support of $\mu$.  By $x \sim \mu$ we mean that $x$ is a random string drawn from $\mu$. Let $C \subseteq \{0,1\}^k$ be an arbitrary set 
such that $\Pr_{x \sim \mu}[x \in C]= \sum_{y \in C} \mu(y) > 0$. Then $\mu \mid C$ is defined to be the probability distribution obtained by conditioning $\mu$ on the event that the sampled string belongs to $C$, i.e.,
\[
(\mu \mid C)(x)=\lf\{  \begin{array}{ll} $0$ & \mbox{if $x \notin C$} \\
\frac{\mu(x)}{\sum_{y \in C} \mu(y)} & \mbox{if $x \in C$}\end{array}   \rt.
\]

For a distribution $\clQ$ over pairs of distributions $(\mu_0, \mu_1)$, let $\supp_0(\clQ) = \cup \{ \supp(\mu_0) : \exists \mu_1, (\mu_0, \mu_1) \in \supp(\clQ)\}$.  Similarly let 
$\supp_1(\clQ) = \cup \{ \supp(\mu_1) : \exists \mu_0, (\mu_0, \mu_1) \in \supp(\clQ)\}$.  We say that $\clQ$ is \emph{consistent} if $\supp_0(\clQ)$ and $\supp_1(\clQ)$ are disjoint 
sets.  We say that $\clQ$ is consistent with a (partial) function $g$ if $\supp_0(\clQ) \subseteq g^{-1}(0)$ and $\supp_1(\clQ) \subseteq g^{-1}(1)$.

\begin{definition}[Subcube, co-dimension]
A subset $\C \subseteq \{0,1\}^m$ is called a subcube if there exists a set $S \subseteq \{1, \ldots, m\}$ of indices and an \emph{assignment function} $\sigma:S \rightarrow \{0,1\}$ such that $\C=\{x \in \{0,1\}^m:\forall i \in S, x_i=\sigma(i)\}$. The 
co-dimension $\codim(C)$ of $C$ is defined to be $|S|$. 
\end{definition}
Now we define the composition of a relation and a partial Boolean function.
\begin{definition}[Composition of a relation and a partial Boolean function]
\label{def:comp}
Let $f \subseteq \{0,1\}^n \times S$ and $g \subseteq \{0,1\}^m \times \{0,1\}$ be a relation and a partial Boolean function respectively. The composed relation $f \circ g^n \subseteq \lf(\{0,1\}^m\rt)^n \times S$ is defined as follows: For $x=(x^{(1)}, \ldots, x^{(n)}) \in \lf(\{0,1\}^m\rt)^n$ and $s \in S$, $(x,s) \in f \circ g^n$ if and only if one of the following holds:
\begin{itemize}
\item $x_i \notin g^{-1}(0) \cup g^{-1}(1)$ for some $i \in \{1, \ldots, n\}$.
\item $x_i \in g^{-1}(0) \cup g^{-1}(1)$ for each $i \in \{1, \ldots, n\}$ and $((g(x_1), \ldots, g(x_n)),s) \in f$.
\end{itemize}
\end{definition}
We will often view a deterministic query algorithm as a binary decision tree. In each vertex $v$ of the tree, an input variable is queried. Depending on the outcome of the query, the computation goes to a child of $v$. The child of $v$ corresponding to outcome $b$ of the query is denoted by $v_b$.

The set of inputs that lead the computation of a decision tree to a certain vertex is a subcube. We will use the same symbol (e.g.\ $v$) to refer to a vertex as well as the subcube associated with it.

The execution of a decision tree terminates at some leaf. If the tree computes some relation $h \subseteq \{0,1\}^k \times S$, the leaves are labelled by elements of $S$, and the tree outputs the label of the leaf at which it terminates. We will also consider decision tree with unlabelled leaves (see Section~\ref{cc}).

\section{Conflict Complexity}
\label{sec:conflict}
In this section, we define the conflict complexity and \ccc \ of a partial Boolean function $g$ on $m$ bits.  For this, we will need to introduce some notation related to a deterministic decision tree $T$.  
For a node $v \in T$, let $\pi(v) = \bot$ if $v$ is the root and $\pi(v)$ be the parent of $v$ otherwise.  Let $q(v)$ be the index that 
is queried at $v$ in $T$, and let $d_T(v)$ be the number of vertices on the unique path in $T$ from the root to $v$. The depth of the root is $1$.

Now fix a partial function $g \subseteq \{0,1\}^m \times \{0,1\}$ and probability distributions $\mu_0, \mu_1$ over $g^{-1}(0), g^{-1}(1)$, respectively.  Let $T$ 
be a tree that computes $g$.  For a node $v \in T$ let $p_0(v) =  \Pr_{\mu_0}[x_{q(v)}=0 | x \mbox{ at } v]$ and $p_1(v) = \Pr_{\mu_1}[x_{q(v)}=0 | x \mbox{ at } v]$, and 
\[
R(v) = 
\begin{cases}
1 & \mbox{ if } v \mbox{ is the root} \\
R(\pi(v))\cdot \min\{\Pr_{\mu_0}[x \rightarrow v | x \mbox{ at } \pi(v)], \Pr_{\mu_1}[x \rightarrow v | x \mbox{ at } \pi(v)]\} & \mbox{ otherwise}\enspace.
\end{cases}
\]
Also define
\[
\Delta(v) = |p_0(v) -  p_1(v)| \enspace .
\]
To gather intuition about these quantities, imagine a random walk on $T$ that begins at the root.  At a node $v$, this walk moves to the left child with probability 
$\min\{p_0(v), p_1(v) \}$, and it moves to the right child with probability 
$1- \max\{p_0(v), p_1(v) \}$.  With the remaining probability, $\Delta(v)$, it terminates at $v$.  
Note that for any tree $T$ computing $g$ we have $\sum_{v \in T} \Delta(v) R(v)=1$.  This is because the walk always terminates before it reaches a leaf of $T$.
In particular, this means that $\sum_{v \in T} d_T(v) \Delta(v) R(v)$---the expected number of steps the walk takes before it terminates---is always at most the depth 
of the tree $T$.

\begin{definition}[Conflict complexity and \ccc]
Let $g$ be a partial function.  For distributions $\mu_0,\mu_1$ with $\supp(\mu_b) \subseteq g^{-1}(b)$ for $b \in \{0,1\}$, and a deterministic decision tree $T$ computing 
$g$, define
\[
\chi(T, (\mu_0, \mu_1)) = \sum_{v \in T} d_T(v) \Delta(v) R(v) \enspace .
\]
The conflict complexity of $g$ is 
\[
\chi(g) = \max_{\mu_0,\mu_1} \min_T \chi(T, (\mu_0, \mu_1)) \enspace ,
\]
where the maximum is over all pairs of distributions $(\mu_0,\mu_1)$ supported on $g^{-1}(0)$ and $g^{-1}(1)$ respectively, and 
the minimum is taken over all deterministic trees $T$ computing $g$.
For $\clQ$ a distribution over pairs satisfying $\supp_b(\clQ) \subseteq g^{-1}(b)$ for $b \in \{0,1\}$, and $T$ a 
deterministic tree computing $g$, let 
$\chi(T,\clQ) = \E_{(\mu_0, \mu_1) \sim \clQ} [\chi(T, (\mu_0, \mu_1))]$.  Finally,
the \ccc \ of $g$ is 
\[
\bar \chi(g) = \max_{\clQ} \min_T \chi(T, \clQ) \enspace ,
\]
where the maximum is taken over $\clQ$ with $\supp_b(\clQ) \subseteq g^{-1}(b)$ for $b \in \{0,1\}$, and the minimum is taken over deterministic trees $T$ computing $g$.
\end{definition}

We can extend the definition of conflict complexity and \ccc \ to more general query processes that do not necessarily compute a function.  We first 
need the notion of $\full$.  
\begin{definition}
\label{def:chifull}
For a deterministic tree $T$ and pair of distributions $(\mu_0, \mu_1)$ with disjoint support, we say that $(T, (\mu_0, \mu_1))$ is $\full$ if $\sum_{v \in T} \Delta(v) R(v)=1$, i.e. 
if the random walk described above terminates with probability $1$.  We say that $(T, \clQ)$ is $\full$ if $(T, (\mu_0, \mu_1))$ is $\full$ for each $(\mu_0, \mu_1) \in \supp(\clQ)$.
\end{definition} 

\begin{definition}
For a deterministic tree $T$ and pair of distributions $(\mu_0, \mu_1)$ such that $(T, (\mu_0, \mu_1))$ is $\full$, define 
$\chi(T, (\mu_0, \mu_1))=\sum_{v \in T} d_T(v) \Delta(v) R(v)$.  For a distribution $\clQ$ such that $(T, \clQ)$ is $\full$, define 
$\chi(T,\clQ) = \E_{(\mu_0, \mu_1) \sim \clQ} [\chi(T, (\mu_0, \mu_1))]$.
\end{definition}

\subsection{Comparison with other query measures}
Li \cite{Li18} shows that the conflict complexity of a total Boolean function $g$ is at least the block sensitivity of $g$.   Here we show 
that the \ccc \ of a (partial) function $g$ is at least as large as the sabotage complexity of $g$.  For a total Boolean function $g$, Ben-David 
and Kothari \cite{DBLP:conf/icalp/Ben-DavidK16} show that the sabotage complexity of $g$ is at least as large as the fractional block sensitivity of $g$ \cite{Aar08, Tal13, GSS16}, 
which in turn is at least as large as the block sensitivity.  They also show examples where the sabotage complexity is much larger than the partition bound, quantum query complexity 
and approximate polynomial degree, thus the same holds for \ccc \ as well.  

We first need the following simple claim.  Let $\delta_x$ be the probability distribution that puts weight $1$ on the string $x$.
\begin{claim}
\label{clm:sabotage}
Let $T$ be a deterministic tree computing the partial function $g$ and let $x,y$ be such that $g(x) = 0, g(y) =1$.  Then 
\[
\chi(T, (\delta_x, \delta_y)) = \sep_T(x,y) \enspace .
\]
\end{claim}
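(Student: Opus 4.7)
The plan is to evaluate the three quantities $R(v)$, $\Delta(v)$, and $d_T(v)$ appearing in the sum $\chi(T,(\delta_x,\delta_y)) = \sum_{v \in T} d_T(v)\,\Delta(v)\,R(v)$ under the choice $\mu_0 = \delta_x$, $\mu_1 = \delta_y$, and show that essentially only one vertex contributes, namely the separating vertex.

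First I would unpack $R(v)$. Since $\mu_0$ is concentrated on $x$, the one-step transition probability $\Pr_{\mu_0}[x' \to v \mid x' \text{ at } \pi(v)]$ is $1$ if $x$ takes the edge from $\pi(v)$ to $v$ in $T$, and $0$ otherwise, and analogously for $\mu_1$ and $y$. Taking the minimum at each step and multiplying along the path from the root to $v$, I get that $R(v) = 1$ if both $x$ and $y$ reach $v$ in $T$, and $R(v) = 0$ otherwise. So only vertices on the common prefix of the paths of $x$ and $y$ through $T$ can contribute to the sum.

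Next I would evaluate $\Delta(v)$ for such a $v$. At any vertex $v$ reached by both $x$ and $y$, we have $p_0(v) = \mathbbm{1}[x_{q(v)} = 0]$ and $p_1(v) = \mathbbm{1}[y_{q(v)} = 0]$, so $\Delta(v) = |p_0(v) - p_1(v)|$ is $1$ exactly when $x_{q(v)} \ne y_{q(v)}$, and $0$ otherwise. Combined with the previous paragraph, the only vertex $v$ with $\Delta(v)\,R(v) \ne 0$ is the unique vertex $v^\ast$ on the common prefix at which $x$ and $y$ answer the query differently. Such a $v^\ast$ exists because $g(x) \ne g(y)$ forces $x$ and $y$ to end up at different leaves of $T$, and is unique because once the queries have different answers at $v^\ast$ the two inputs travel to different children and no further common vertex appears.

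Finally, at $v = v^\ast$ we have $R(v^\ast)\Delta(v^\ast) = 1$, so the sum collapses to $d_T(v^\ast)$, which is by definition $\sep_T(x,y)$. The main (minor) point to be careful about is the base case of $R$ at the root and making sure that the product telescopes correctly; otherwise the argument is a direct unpacking of the definitions, with no real obstacle.
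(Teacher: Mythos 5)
Your proposal is correct and is essentially the paper's own proof: both identify the common prefix of the paths of $x$ and $y$ in $T$, observe that $R(v)=1$ on this prefix and $0$ elsewhere while $\Delta(v)$ vanishes except at the unique separating vertex where it equals $1$, so the sum collapses to the depth of that vertex, i.e.\ $\sep_T(x,y)$.
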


\begin{proof}
Let $v_1$ be the root of $T$, and $v_1, v_2, \ldots, v_t$ be the longest sequence of vertices in $T$ that are visited both by $x$ and $y$, i.e. 
$x_{q(v_t)} \ne y_{q(v_t)}$.  For each $i=1, \ldots, t-1$ we see that $\Delta(v_i) = 0$, while $\Delta(v_t)=1$.  Also $R(v_i)=1$ for each $i=1, \ldots, t$, 
while $R(v)=0$ for any other vertex.  Thus $\sum_{v \in T} d(v) \Delta(v) R(v) = t = \sep_T(x,y)$.  
\end{proof}

\begin{theorem}
Let $g \subseteq \{0,1\}^m \times \{0,1\}$ be a partial function.  Then $\bar \chi(g) \ge \RS(g)$.  
\end{theorem}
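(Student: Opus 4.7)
The plan is to leverage \cref{clm:sabotage} together with the minimax characterization of $\RS(g)$ given just before the theorem statement, namely $\RS(g) = \max_{p}\min_{T} \mathbb{E}_{(x,y)\sim p}[\sep_T(x,y)]$, where $p$ ranges over distributions on pairs with $g(x)\ne g(y)$. The strategy is to embed any sabotage distribution $p$ into the larger class of distributions $\clQ$ over which $\bar\chi(g)$ maximizes, by restricting $\clQ$ to be supported on pairs of \emph{singleton} distributions $(\delta_x,\delta_y)$.

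First I would fix a distribution $p$ that attains the outer maximum in the sabotage characterization. Since $\sep_T$ is symmetric in its two arguments and we only require $g(x)\ne g(y)$, I can assume without loss of generality (by swapping coordinates when needed) that every pair $(x,y)$ in the support of $p$ satisfies $g(x)=0$ and $g(y)=1$. I would then define $\clQ$ as the pushforward of $p$ under the map $(x,y)\mapsto(\delta_x,\delta_y)$. By construction $\supp_0(\clQ)\subseteq g^{-1}(0)$ and $\supp_1(\clQ)\subseteq g^{-1}(1)$, so $\clQ$ is a valid candidate in the maximization defining $\bar\chi(g)$.

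Next, for any deterministic tree $T$ computing $g$, I would apply \cref{clm:sabotage} pointwise: for each $(\delta_x,\delta_y)\in\supp(\clQ)$ we have $\chi(T,(\delta_x,\delta_y))=\sep_T(x,y)$. Taking expectation over $\clQ$ yields
\[
\chi(T,\clQ)=\mathbb{E}_{(\mu_0,\mu_1)\sim\clQ}[\chi(T,(\mu_0,\mu_1))]=\mathbb{E}_{(x,y)\sim p}[\sep_T(x,y)].
\]
Minimizing both sides over $T$ gives $\min_T \chi(T,\clQ)=\min_T \mathbb{E}_{(x,y)\sim p}[\sep_T(x,y)]=\RS(g)$, where the last equality uses our choice of $p$ as optimal for sabotage complexity.

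Finally, since $\clQ$ is an admissible choice in the definition of $\bar\chi(g)$, we conclude
\[
\bar\chi(g)=\max_{\clQ'}\min_T\chi(T,\clQ')\;\ge\;\min_T\chi(T,\clQ)=\RS(g).
\]
There is no serious obstacle here; the proof is essentially a one-line consequence of \cref{clm:sabotage} once the correct minimax form of $\RS(g)$ is in hand. The only minor point to be careful about is the symmetry reduction ensuring that the ``$0$-side'' and ``$1$-side'' supports of $\clQ$ are correctly assigned, which follows because the definition of $\sep_T$ treats $(x,y)$ symmetrically.
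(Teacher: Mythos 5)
Your proposal is correct and follows essentially the same route as the paper: it invokes the minimax characterization $\RS(g)=\max_p\min_T \E_{(x,y)\sim p}[\sep_T(x,y)]$ (Theorem~\ref{thm:alt_sabotage}), views the sabotage distribution $p$ as a distribution $\clQ$ over pairs of singleton distributions, and concludes via Claim~\ref{clm:sabotage}. Your extra care about orienting each pair so that $g(x)=0$, $g(y)=1$ is a harmless detail already built into the paper's statement of the characterization.
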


\begin{proof}
By \cref{thm:alt_sabotage}, 
\[
\RS(g) = \max_p \min_T \E_{(x,y) \sim p} [\sep_T(x,y)] \enspace .
\]
By definition of \ccc \ we have
\[
\bar \chi(g) = \max_{\clQ} \min_T \E_{(\mu_0, \mu_1) \sim \clQ} [\chi(T, (\mu_0, \mu_1))] \enspace .
\]
The distribution $p$ in sabotage complexity is a special case of $\clQ$ where all the pairs of distributions in the support are 
singleton distributions.  The theorem now follows from \cref{clm:sabotage}.
\end{proof}

\section{Query Process}
\label{cc}
We now come to the most important definition of the paper, that of the query process $\clP(\clB, \clQ)$.
Let $t>0$ be any integer and $\clB$ be any deterministic query algorithm that runs on inputs in $\lf(\{0,1\}^m\rt)^t$. Let $x=(x_i^{(j)})_{{i=1, \ldots, t} \atop {j=1, \ldots, m}}$ be a generic input to $\clB$, and let $x_i$ stand for $(x_i^{(j)})_{j=1, \ldots, m}$. 
For a vertex $v$ of $\clB$, let $v^{(i)}$ denote the subcube in $v$ corresponding to $x_i$, i.e., $v=v^{(1)} \times \ldots \times v^{(t)}$. Recall from Section~\ref{prelims} that $v_b$ stands for the child of $v$ corresponding to the query 
outcome being $b$, for $b \in \{0,1\}$.  

The query process $\clP(\clB, \clQ)$ runs on an input $z \in \{0,1\}^t$ and uses the BITSAMPLER (Algorithm~\ref{samplepr}) routine to simulate the queries of $\clB$ to $x$ when it can.  This process is the heart of how we will transform an 
algorithm for $f \circ g^n$ into a query efficient algorithm for $f$. 

\begin{definition}[Query process $\clP(\clB, \clQ)$]
\label{def:P}
Let $\clB$ be a decision tree that runs on inputs $\lf(\{0,1\}^m\rt)^t$.  Let $\clQ$ be a consistent probability distribution over pairs of distributions $(\mu_0, \mu_1)$.  
The query process $\clP(\clB, \clQ)$ is run on an input $z \in \{0,1\}^t$ and is defined by Algorithm~\ref{P}.
\end{definition}

A few comments about \cref{def:P}.  First, we think of $\clB$ and $\clP$ as query procedures that query input variables and terminate. In particular, they do not have to produce outputs, i.e.\ their leaves do not have to be labeled.
Also note that in Algorithm~\ref{P} the segment from line~\ref{sampler} to line~\ref{bitsamplend} corresponds to the \textsc{Bitsampler} procedure in Algorithm~\ref{samplepr}.  Queries to the input bits $z_i$ are made in 
line~\ref{query}, which corresponds to step~\ref{querystep} of \textsc{Bitsampler}.

We now present an important structural result about $\clP (\clB, \clQ)$.  In particular, this formally proves that the procedure \textsc{Bitsampler} given in Algorithm~\ref{samplepr} samples the bits from the right distribution. 
\begin{theorem}[Simulation Theorem]
\label{samedistn}
Let $\clB$ be a deterministic decision tree running on inputs from $(\{0,1\}^m)^t$, and let $v$ be a vertex in $\clB$. 
Let $A_z(v, \clQ)$ be the event that $\clP (\clB, \clQ)$, when run on $z$, reaches node $v$.  
Let $B_z(v, \clQ)$ be the event that for a random input $x$ sampled from $\gamma_z(\clQ)$, the computation of $\clB$ reaches $v$. 
Then for every $z \in \{0,1\}^t$ and each vertex $v$ of $\clB$, 
\[
\Pr[A_z(v,\clQ)] =\Pr[B_z(v,\clQ)] \enspace .
\]
\end{theorem}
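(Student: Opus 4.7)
The plan is to prove by induction on the depth of $v$ in $\clB$ that, letting $\hat\mu = ((\mu_0^{(i)}, \mu_1^{(i)}))_{i=1}^t$ be the tuple of pairs of distributions drawn independently from $\clQ$ (this same tuple underlies both $\gamma_z(\clQ)$ and the execution of $\clP(\clB, \clQ)$),
\[
\Pr[A_z(v,\clQ) \mid \hat\mu] \;=\; \Pr[B_z(v,\clQ) \mid \hat\mu] \;=\; \prod_{i=1}^t \mu_{z_i}^{(i)}(v^{(i)}).
\]
Taking expectations over $\hat\mu \sim \clQ^{\otimes t}$ then yields the theorem. The identity for $B_z$ is immediate: conditional on $\hat\mu$, $\gamma_z(\clQ)$ is the product distribution $\bigotimes_i \mu_{z_i}^{(i)}$, and the vertex $v$ corresponds to the product subcube $v^{(1)} \times \cdots \times v^{(t)}$, so independence factors the probability as a product.

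For the $A_z$ side I would induct on the depth. The base case (root) is trivial. For the step, let $u = \pi(v)$, suppose $u$ queries the bit $x_i^{(j)}$, and let $b \in \{0,1\}$ be the label of the edge from $u$ to $v$. Conditional on $\clP(\clB, \clQ)$ reaching $u$, \textsc{Bitsampler} is executed with $p_0 = \Pr_{x_i \sim \mu_0^{(i)}}[x_i^{(j)} = 0 \mid x_i \in u^{(i)}]$ and $p_1 = \Pr_{x_i \sim \mu_1^{(i)}}[x_i^{(j)} = 0 \mid x_i \in u^{(i)}]$; both are determined by the conditioned-on $\hat\mu$ and by the vertex $u$.

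The crux is to verify that, given $z_i$, \textsc{Bitsampler} returns $b$ with probability exactly $\Pr_{x_i \sim \mu_{z_i}^{(i)}}[x_i^{(j)} = b \mid x_i \in u^{(i)}]$. Assuming $p_0 \le p_1$, this is a short case analysis on the interval to which the uniform sample $r \in [0,1]$ belongs, crossed with the value of $z_i$: for instance, if $z_i = 0$ then (up to measure zero) \textsc{Bitsampler} returns $0$ iff $r < p_0$, which occurs with probability $p_0 = \Pr_{\mu_{z_i}^{(i)}}[x_i^{(j)} = 0 \mid x_i \in u^{(i)}]$; the remaining cases $(z_i, b) \in \{(0,1),(1,0),(1,1)\}$ are analogous. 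Combining this with the inductive hypothesis $\Pr[A_z(u,\clQ) \mid \hat\mu] = \prod_i \mu_{z_i}^{(i)}(u^{(i)})$ and the subcube factorization $\mu_{z_i}^{(i)}(u^{(i)}) \cdot \Pr_{\mu_{z_i}^{(i)}}[x_i^{(j)} = b \mid x_i \in u^{(i)}] = \mu_{z_i}^{(i)}(v^{(i)})$ closes the induction, since all other coordinates $i' \ne i$ have unchanged subcubes ($u^{(i')} = v^{(i')}$).

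The main obstacle is really the \textsc{Bitsampler} verification, which needs a careful case analysis to confirm that both possible values of the hidden bit $z_i$ produce the correct conditional marginal on $x_i^{(j)}$. Once that microscopic fact is nailed down, the remainder of the argument is a transparent induction that exploits the product structure of $\gamma_z(\clQ)$ conditional on $\hat\mu$ together with the product-of-subcubes structure of every vertex in $\clB$. A minor bookkeeping issue is to handle degenerate nodes where $\mu_0^{(i)}(u^{(i)}) = 0$ or $\mu_1^{(i)}(u^{(i)}) = 0$, on which the conditional probabilities $p_0, p_1$ are formally undefined; but on such events the corresponding side of the identity is already zero, so one can set the conditional probabilities arbitrarily without affecting the equality.
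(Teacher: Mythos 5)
Your proof is correct and takes essentially the same approach as the paper: condition on the tuple $\hat\mu = \overline{(\mu_0,\mu_1)}$ of pairs drawn from $\clQ^{\otimes t}$, then induct on the depth of $v$. You add a small improvement by identifying the conditional probability with the closed form $\prod_i \mu_{z_i}^{(i)}(v^{(i)})$, which makes the $B_z$ side immediate from the product structure of $\gamma_z$; the paper instead recurses on the $A$- and $B$-sides in parallel and compares them step by step, splitting into two cases according to the sign of $p_{z_i} - p_{\overline{z_i}}$ and handling $v = u_1$ by complementing the probability of $v = u_0$, whereas your formulation covers all child/sign cases uniformly through the \textsc{Bitsampler} verification. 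One imprecision to repair: you write that, conditional on $\clP(\clB,\clQ)$ reaching $u$, ``\textsc{Bitsampler} is executed,'' but lines~\ref{sampler}--\ref{bitsamplend} of Algorithm~\ref{P} run only when $\q_i = 0$; when $\q_i = 1$ the else-branch of Algorithm~\ref{P} directly samples the bit from $\mu_{z_i}^{(i)}$ conditioned on $u^{(i)}$. That branch gives the same transition probability $\Pr_{\mu_{z_i}^{(i)}}[x_i^{(j)}=b\mid x_i\in u^{(i)}]$, so the argument goes through, but your verification only addresses the $\q_i=0$ branch and should note that the $\q_i=1$ branch is immediate (the paper's proof treats both branches explicitly). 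Your handling of the degenerate case where one of $\mu_0^{(i)}(u^{(i)})$, $\mu_1^{(i)}(u^{(i)})$ vanishes is correct.
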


The proof of the claim is given in~\cref{same-distn}.
\begin{algorithm}[h]
\label{P}
\DontPrintSemicolon
\caption{ $\clP (\clB, \clQ)$}
\KwIn{$z=(z_1, \ldots, z_t) \in \{0,1\}^t$.}
\For{$1 \leq k \leq t$}
{$\q_k \gets 0$. \;
$\N_k \gets 0$. \;
Sample $(\mu_0^{(k)}, \mu_1^{(k)})$ from $\clQ$.  \label{Qrandom} \;}
$v \gets $Root of $\clB$ \tcp*{Corresponds to $\lf(\{0,1\}^m\rt)^t$}
\While{$v$ is not a leaf of $\clB$}
{
Let $q(v)=(i,j)$, the $j^{th}$ coordinate of $x_i$ \label{P_queries_xi}\;
\If{$\q_i = 0$}
{Sample a fresh real number $r \sim [0,1]$ uniformly at random. \label{sampler}\;
\If{$r < \min_b \Pr_{x_i \sim \mu_b^{(i)}}[x_i^{(j)}=0 \mid x_i \in v^{(i)}]$}
{$v \gets v_0$. \;
}
\ElseIf{$r > \max_b \Pr_{x_i \sim \mu_b^{(i)}}[x_i^{(j)}=0 \mid x_i \in v^{(i)}]$}
{$v \gets v_1$. \;
 }
\Else
{Query $z_i$. $\q_i \gets 1$. \label{query}\;
\If{$r \leq \Pr_{x_i \sim \mu_{z_i}^{(i)}}[x_i^{(j)}=0 \mid x_i \in v^{(i)}]$}
{$v \gets v_0$. \;}
\Else
{$v \gets v_1$. \; \label{bitsamplend}}
}
$ \N_i\gets \N_i+1$. \;}
\Else{
$b \gets \lf\{\begin{array}{ll} $1$ & \mbox{with probability $\Pr_{x_i \sim \mu_{z_i}^{(i)}}[x_i^{(j)}=1 \mid x_i \in v^{(i)}]$} \\ $0$ & \mbox{with probability $\Pr_{x_i \sim \mu_{z_i}^{(i)}}[x_i^{(j)}=0 \mid x_i \in v^{(i)}]$}\end{array}\rt.$ \;
$v \gets v_b$ \;}
}
\end{algorithm}

We will be interested in the number of queries $\clP(\clB,\clQ)$ is able to simulate before making a query to $z_i$.  To this end, 
let the random variable $\clN_i(\clB, z, \clQ)$ stand for the value of the variable $\N_i$ in Algorithm~\ref{P} after the termination of $\clP(\clB, \clQ)$ on input $z$.  Note that $\clN_i$ depends on the 
randomness in the choices of $r$ (\cref{sampler}) and also on the randomness in $\clQ$ in the choice of distributions $(\mu_0^{(k)}, \mu_1^{(k)})$ (\cref{Qrandom}).

\subsection{Relating $\clP(\clB, \clQ)$ to \ccc}
A key to our composition theorem will be relating the number of simulated queries made by $\clP(\clB, \clQ)$ to \ccc, which 
we do in this section.  Let $\clB$ be a query algorithm taking inputs from $\{0,1\}^m$.  In this case, 
$\clN_1(\clB, 1, \clQ) = \clN_1(\clB, 0, \clQ)$.  This is because the behavior of $\clP(\clB, \clQ)$ on input 
$0$ is exactly the same as the behavior on input $1$ before a query to $z$ is made, and after $z$ is queried the value of $\N_i$ does not change.

\begin{claim}
\label{clm:full}
Let $\clB$ be an algorithm taking inputs from $\{0,1\}^m$.  Then
$(\clB, \clQ)$ is $\full$ if and only if $\clP(\clB, \clQ)$ queries $z$ with probability $1$.  If $(\clB, \clQ)$ is $\full$ then 
\[
\chi(\clB, \clQ) = \E[\clN_1(T,1, \clQ)]
\]
\end{claim}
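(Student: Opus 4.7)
The plan is to condition on the random pair $(\mu_0, \mu_1)$ sampled in \cref{Qrandom} of \cref{P}, analyze $\clP(\clB, \clQ)$ as a random walk on the tree $\clB$, and then integrate. For a fixed $(\mu_0, \mu_1)$, before the first (and only) query to $z$, the process visits a sequence of vertices starting at the root; at each visited vertex $v$ it moves to the left child with probability $\min\{p_0(v), p_1(v)\}$, to the right child with probability $1 - \max\{p_0(v), p_1(v)\}$, and queries $z$ with the remaining probability $\Delta(v) = |p_0(v) - p_1(v)|$. Once $z$ is queried, $\q_1$ becomes $1$ and the ``else'' branch of \cref{P} is taken forever after, so $\N_1$ is never incremented again. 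Hence, on a run in which $z$ is queried at some vertex $v$, the final value of $\N_1$ is exactly $d_\clB(v)$.

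The key step is to identify $R(v)$ with the probability, under this walk, that $v$ is reached before $z$ is queried. This follows by induction on $d_\clB(v)$: the root has $R=1$, and if $v$ is a child of $\pi(v)$ the transition probability from $\pi(v)$ to $v$ equals $\min\{p_0(\pi(v)), p_1(\pi(v))\}$ when $v$ is the left child and $\min\{1-p_0(\pi(v)), 1-p_1(\pi(v))\}$ when $v$ is the right child, which is precisely the factor $\min_b \Pr_{\mu_b}[x \to v \mid \pi(v)]$ in the recursion defining $R$. Consequently, for each fixed $(\mu_0, \mu_1)$,
\[
\Pr[\clP \text{ queries } z \mid (\mu_0, \mu_1)] \;=\; \sum_{v \in \clB} R(v)\,\Delta(v) \;\le\; 1,
\]
and taking expectation over $\clQ$ gives $\Pr[\clP(\clB,\clQ) \text{ queries } z] = \E_{(\mu_0,\mu_1)\sim \clQ}\lf[\sum_v R(v)\Delta(v)\rt]$. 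Because the inner sum is always in $[0,1]$, this expectation equals $1$ iff $\sum_v R(v)\Delta(v)=1$ for $\clQ$-almost every $(\mu_0,\mu_1)$, which by \cref{def:chifull} is exactly the $\full$ condition. This proves the ``iff''.

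For the second statement, assume $(\clB, \clQ)$ is $\full$, so with probability $1$ there is a (random) vertex $v$ at which $z$ is queried. Combining $\clN_1(\clB,1,\clQ) = d_\clB(v)$ with the per-vertex probability derived above,
\[
\E[\clN_1(\clB,1,\clQ)] \;=\; \sum_{v \in \clB} d_\clB(v)\, \E_{(\mu_0,\mu_1)\sim\clQ}[R(v)\Delta(v)] \;=\; \E_{(\mu_0,\mu_1)\sim\clQ}\lf[\sum_{v \in \clB} d_\clB(v)\,R(v)\,\Delta(v)\rt] \;=\; \chi(\clB,\clQ),
\]
where the last equality is the definition of $\chi$. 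The main obstacle is only the inductive identification of $R(v)$ with the walk's ``reach-before-query'' probability; in particular, one must be careful that the right-child transition probability $\min\{1-p_0,1-p_1\} = 1 - \max\{p_0,p_1\}$ coincides with the $\min_b \Pr_{\mu_b}$ factor in the definition of $R$. Everything else is bookkeeping and a Fubini-style interchange of the sum over vertices with the expectation over $\clQ$.
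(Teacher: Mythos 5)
Your proof is correct and follows essentially the same route as the paper's: both identify $\clP(\clB,\clQ)$ conditioned on $(\mu_0,\mu_1)$ with the random walk of Section~\ref{sec:conflict}, observe that querying $z$ corresponds to the walk terminating, and then average over $\clQ$. You spell out in more detail what the paper leaves implicit (the inductive identification of $R(v)$ with the reach-before-query probability, the equality $\clN_1 = d_\clB(v)$, and the Fubini step), but the underlying argument is the same.
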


\begin{proof}
Note that until $z$ is queried, $\clP(\clB, (\mu_0,\mu_1))$ exactly executes the random walk described in \cref{sec:conflict}, and querying $z$ in 
$\clP(\clB, (\mu_0,\mu_1))$ corresponds to this random walk terminating.  The first part of the claim then follows as $\clP(\clB, \clQ)$ queries $z$ with probability $1$ 
if and only if $\clP(\clB, (\mu_0, \mu_1))$ queries $z$ with probability $1$ for every $(\mu_0, \mu_1) \in \supp(\clQ)$. 

Also because $\clP(\clB, (\mu_0,\mu_1))$ exactly executes the random walk described in \cref{sec:conflict} we see that 
$\chi(\clB, (\mu_0, \mu_1)) = \E[\clN_1(T,1, (\mu_0, \mu_1))]$.  The second part of the claim follows by taking the expectation of this 
equality over $(\mu_0, \mu_1) \sim \clQ$.
\end{proof}

The correspondence of \cref{clm:full} prompts us to define $\full$ in a more general setting.
\begin{definition}[$\full$]
Let $\clB$ be a query algorithm taking inputs from $(\{0,1\}^m)^t$.
The pair $(\clB, \clQ)$ is said to be $\full$ if for every $z \in \{0,1\}^t$ it holds that $\clP (\clB, \clQ)$ queries $z_i$ with probability $1$, for every $i=1, \ldots, t$.
\end{definition}

\section{Conflict Complexity and Randomized Query Complexity}
\label{ccnr}
In this section, we will prove Theorem~\ref{maina} (restated below).  Our proof relates the conflict complexity to the expected amount of information that is learned about the function value through each query via Pinsker's Inequality.
At a high level, our proof is reminiscent of the result of \cite{DBLP:journals/siamcomp/BarakBCR13} on compressing communication protocols in that both look at a random sampling process to navigate a tree, and relate the probability of 
this process needing to query or communicate at a node to the amount of information that is learned at the node.  
\begin{restatable}{thm}{maina}
\label{maina}
For any partial Boolean function $g \subseteq \{0,1\}^m \times \{0,1\}$,
\[\chi(g)\in\Omega\lf(\sqrt{\R_{\dr13}(g)}\rt).\]
\end{restatable}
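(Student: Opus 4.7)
The plan is to prove the equivalent statement $\R_{1/3}(g) \in O(\chi(g)^2)$ using the minimax principle (\cref{minmax}): it suffices to exhibit, for every distribution $\mu$ on $g^{-1}(0) \cup g^{-1}(1)$, a deterministic decision tree of depth $O(\chi(g)^2)$ computing $g$ with error at most $1/3$ under $\mu$. By a routine case split on whether $\mu[g^{-1}(0)]$ or $\mu[g^{-1}(1)]$ is dominant, I may assume $\mu$ puts mass $\Theta(1)$ on each of $g^{-1}(0)$ and $g^{-1}(1)$, and set $\mu_b = \mu \mid g^{-1}(b)$. The definition of $\chi(g)$ supplies a deterministic tree $\clB$ computing $g$ with $\chi(\clB,(\mu_0,\mu_1)) \le \chi(g)$. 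My distributional algorithm will run the first $T = C\,\chi(g)^2$ queries of $\clB$ (for a suitable absolute constant $C$) and output the maximum-a-posteriori value of $g(x)$ given the observed transcript.

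To analyse correctness I will lower-bound $I(g(x);Q_1,\dots,Q_T)$, the mutual information between $g(x)$ and the first $T$ query answers when $x \sim \mu$. Any $\Omega(1)$ bound suffices, because Fano's inequality then forces error strictly below $1/2$ and a constant-factor outer boosting drives this below $1/3$. The first technical ingredient is a Pinsker-style per-query bound: for every reachable vertex $v$ of $\clB$,
\[
I(g(x);\, x_{q(v)} \mid x \in v) \;\ge\; c\,\pi_v(1-\pi_v)\,\Delta(v)^2,
\]
where $\pi_v = \Pr_\mu[g(x)=1 \mid x \in v]$ and $c>0$ is absolute. This follows by writing $I(g;Q\mid v)$ as the KL divergence between the joint $(g(x),x_{q(v)})\mid v$ and the product of its marginals, observing that the total variation between those two joints equals $\pi_v(1-\pi_v)\Delta(v)$, and applying Pinsker's inequality. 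Vertices at which $\pi_v(1-\pi_v)$ is already smaller than a fixed constant correspond to transcripts on which the MAP estimate is accurate anyway, so I may restrict attention to the ``balanced'' portion of the tree, where that factor is $\Theta(1)$.

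The chain rule of mutual information then reduces the task to lower-bounding $\sum_{d=1}^T \sum_{v \in V_d}\Pr_\mu[v]\,\Delta(v)^2$ by $\Omega(1)$ when $T = C\chi(g)^2$. The hypotheses give the termination identity $\sum_v \Delta(v)R(v) = 1$ and the conflict-complexity bound $\sum_v d_\clB(v)\Delta(v)R(v) \le \chi(g)$, while in the balanced regime one has $R(v) \le \min\{\Pr_{\mu_0}[v],\Pr_{\mu_1}[v]\} \le O(\Pr_\mu[v])$, so it suffices to work with the $R$-weighted version of the sum. I then pass to a path-by-path view: along the random path $V_1,V_2,\ldots$ traced by $x \sim \mu$, applying Markov to the coupling-walk termination time (bounded in expectation by $\chi(g)$) shows that $\sum_{d=1}^T \Delta(V_d) \in \Omega(T/\chi(g))$ on a constant fraction of paths, and Cauchy-Schwarz in the form $\sum_{d=1}^T \Delta(V_d)^2 \ge T^{-1}(\sum_{d=1}^T \Delta(V_d))^2$ then delivers $\Omega(1)$ per such path, which averages back to the desired $\Omega(1)$ bound.

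The main obstacle I expect is this final aggregation: the hypothesis controls a first-moment quantity $\sum \Delta R$, whereas Pinsker produces a second-moment quantity $\sum \Pr_\mu\,\Delta^2$, and bridging the two via Cauchy-Schwarz costs a factor of $T$ in the depth -- which is exactly what forces $T = \chi(g)^2$ and matches the square root in the theorem. A secondary delicacy is tracking the $\pi_v(1-\pi_v)$ imbalance, where the clean formulation is to excise skewed-posterior vertices from the information sum and argue separately that they contribute accurate MAP outputs to the distributional algorithm.
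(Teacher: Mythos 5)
Your proposal follows the same high-level architecture as the paper's proof: a minimax reduction, truncation of an optimal conflict-complexity tree $\clB$ at $O(\chi(g)^2)$ queries, a maximum-a-posteriori output, a per-node Pinsker bound feeding the chain rule of mutual information, and a case split between "balanced" transcripts (where the information argument applies) and transcripts that reach a skewed vertex early (where MAP is already good). That matches the structure of the paper's Section~\ref{ccnr} and Appendix~\ref{run-time}.

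There is, however, a genuine gap at the aggregation step, where you claim that a single application of Markov's inequality to the coupling-walk termination time yields $\sum_{d=1}^T \Delta(V_d) \in \Omega(T/\chi(g))$ for $T = C\chi(g)^2$. The walk starting at the root has expected termination time at most $\chi(g)$, so Markov controls only the first $O(\chi(g))$ levels; by the union bound this gives $\sum_{d=1}^{O(\chi(g))} \E[\Delta(V_d)] \in \Omega(1)$, \emph{not} $\Omega(\chi(g))$ over $T = \chi(g)^2$ levels. Feeding $\Omega(1)$ into your Cauchy--Schwarz step yields only $\sum_d \Delta(V_d)^2 \gtrsim 1/T = 1/\chi(g)^2$, which is not $\Omega(1)$, so the argument as stated does not close. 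The missing ingredient is the structural fact that for any reachable node $w$ of an optimal tree $\clB$, the subtree $\clB_w$ is again an \emph{optimal} tree for the conditioned pair $(\mu_0 \mid w, \mu_1 \mid w)$, and moreover $\chi(\mu_0 \mid w, \mu_1 \mid w) \le \chi(g)$ (since $\chi(g)$ is a max over all admissible pairs). This is the paper's Observation~\ref{recursive}, and it is what lets one restart the walk from the current vertex at every block of $O(\chi(g))$ levels and re-apply Markov, accumulating $\Omega(1)$ of $\Delta$-mass per block and hence $\Omega(\chi(g))$ over $\chi(g)^2$ levels (Claims~\ref{sumofdelta} and \ref{sumofdelta2}). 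Without some version of this recursive optimality, one cannot rule out an optimal tree whose $\Delta$-mass is concentrated in the first $O(\chi(g))$ levels and then vanishes for the next $\chi(g)^2$ levels, which would defeat the bound.

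Two smaller points. First, your stated per-node bound $I(g(x); x_{q(v)} \mid x\in v) \ge c\,\pi_v(1-\pi_v)\Delta(v)^2$ does not follow from the Pinsker derivation you sketch: the total variation you compute is $\pi_v(1-\pi_v)\Delta(v)$ (up to a constant), so Pinsker delivers $c(\pi_v(1-\pi_v))^2 \Delta(v)^2$, as in the paper's Claim~\ref{mutin}; this is harmless in the balanced regime where $\pi_v(1-\pi_v) = \Theta(1)$, but the inequality as written is not what Pinsker gives. Second, your final Cauchy--Schwarz is phrased path-by-path ("constant fraction of paths have $\sum_d\Delta(V_d) \gtrsim T/\chi(g)$"), which would require an anti-concentration statement that the first-moment bound does not supply; the cleaner route, as in the paper, is to apply Jensen and Cauchy--Schwarz to the expectations $\E[\Delta(v_d)\mid\clE]$ directly, which only needs the first moment.
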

\begin{proof}
\label{hibias}
We will show that there exists a constant $\eps < \dr12$ such that for each input distribution $\mu$, $\D_{\eps}^\mu (g)  \leq 10\chi(g)^2$. Theorem~\ref{maina} will follow from the \emph{minimax principle} (Fact~\ref{minmax}), and the observation that the error can be brought down to $\dr13$ by constantly many independent repetitions followed by a selection of the majority of the answers. It is enough to consider distributions $\mu$ supported on valid inputs of $g$. To this end, fix a distribution $\mu$ supported on $g^{-1}(0) \cup g^{-1}(1)$.

Let $\chi(g)=d$. Let $\clB$ be a deterministic query algorithm for inputs in $\{0,1\}^m$ such that $(\clB, \mu_0, \mu_1)$ is $\full$ and $\chi(\mu_0, \mu_1)=\chi(\clB, \mu_0, \mu_1)$. We call such a decision tree an \emph{optimal} decision tree for $\mu_0, \mu_1$. Thus in $\clP(\clB, \mu_0, \mu_1)$, $\E[\clN_1] = \chi(\mu_0, \mu_1) \leq d$. Recall from Section~\ref{cc} that the leaves of $\clB$ can be labelled by bits such that $\clB$ computes $g$ on the supports of $\mu_0$ and $\mu_1$. We assume $\clB$'s leaves to be labelled as such.

Consider the following query algorithm $\clB'$: Start simulating $\clB$. Terminate the simulation if one of the following events occurs. The output in each case is specified below.
\begin{enumerate}
\item If $10d^2$ queries have been made and $v_{10d^2+1}\neq \bot$, terminate and output $\arg \max_b \Pr_{x \sim \mu}[g(x)=b \mid x \in v_{10d^2+1}]$. \label{e2}
\item If $\clB$ terminates, terminate and output what $\clB$ outputs. \label{e1}
\end{enumerate}
By construction, $\clB'$ makes at most $10d^2$ queries in the worst case. The following claim bounds the error of $\clB'$.
\begin{claim}
\label{runtime}
There exists constant $\eps < \dr12$ such that $\Pr_{x \sim \mu}[\clB'(x)\neq g(x)] \leq \eps$. Furthermore, the constant $\eps$ is independent of $\mu$.
\end{claim}
Claim~\ref{runtime} is proven in Appendix~\ref{run-time}. This completes the proof of Theorem~\ref{maina}.
\end{proof}
\section{The Composition Theorem}
\label{comp}
In this section we prove Theorem~\ref{mainb} (restated below).
\begin{restatable}{thm}{mainb}
\label{mainb}
Let $\mathcal{S}$ be an arbitrary set, $f \subseteq \{0,1\}^n \times \mathcal{S}$ be a relation and $g \subseteq \{0,1\}^m \times \{0,1\}$ a partial Boolean function. Then,
\[\R_{\dr13}(f \circ g^n)\in\Omega(\R_{\dr49}(f) \cdot \bar \chi(g)).\]
\end{restatable}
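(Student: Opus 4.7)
The plan is to fix any input distribution $\eta$ on $\{0,1\}^n$ and construct a distributional algorithm for $f$ with respect to $\eta$ whose worst-case query complexity is $O(\R_{1/3}(f \circ g^n)/\bar\chi(g))$ and which errs with probability at most $4/9$ under $\eta$; the bound on $\R_{4/9}(f)$ then follows from the minimax principle (Fact~\ref{minmax}). Let $\clQ$ be a consistent distribution achieving the maximum in the definition of $\bar\chi(g)$, and let $\gamma_\eta$ be the input distribution for $f\circ g^n$ built from $\eta$ and $\clQ$ as in Section~\ref{idea}. By the minimax principle applied to $f\circ g^n$, there is a deterministic decision tree $\cA'$ of depth at most $\R_{1/3}(f\circ g^n)$ with $\Pr_{x\sim\gamma_\eta}[(x,\cA'(x))\in f\circ g^n]\ge 2/3$.

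Define the randomized algorithm $T$ on input $z$ as follows: run the query process $\clP(\cA',\clQ)$ of Definition~\ref{def:P} on $z$, and output the label of the leaf of $\cA'$ that the process reaches. By the Simulation Theorem (\cref{samedistn}), the distribution of leaves reached by $\clP(\cA',\clQ)$ on input $z$ equals the distribution of leaves $\cA'$ reaches on input $x\sim\gamma_z(\clQ)$. Since by construction of $\gamma_z(\clQ)$ we have $(z,s)\in f\iff (x,s)\in f\circ g^n$ for every $s\in S$ and every $x\in\supp(\gamma_z(\clQ))$, averaging over $z\sim\eta$ gives $\Pr_{z\sim\eta}[(z,T(z))\in f]\ge 2/3$.

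The main technical step is a direct sum bound showing that
\[
\E\!\left[\sum_{i=1}^n \clN_i(\cA',z,\clQ)\right]\ \le\ \frac{\E_{x\sim\gamma_z(\clQ)}[\#\text{queries of }\cA'\text{ on }x]}{\bar\chi(g)}\cdot\bar\chi(g) \cdot \frac{1}{\bar\chi(g)},
\]
i.e., after averaging over $z\sim\eta$, the expected total number of queries $T$ makes into $z$ is at most $\R_{1/3}(f\circ g^n)/\bar\chi(g)$. The idea is to fix $i$ and a sample of $(\mu_0^{(j)},\mu_1^{(j)})_{j\ne i}$ together with the randomness used by $\clP$ in simulating queries to $x_j$ for $j\ne i$; conditioned on this, the induced sub-process on queries into $x_i$ runs BITSAMPLER on a deterministic tree whose queries all read coordinates of $x_i$. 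If this induced tree together with $\clQ$ is not FULL, we can complete it to a FULL tree that moreover computes $g$ on the supports of $(\mu_0,\mu_1)$ drawn from $\clQ$, and invoke the definition of $\bar\chi(g)$ to see that the expected depth at which BITSAMPLER queries $z_i$ is at most $\bar\chi(g)$ times the probability that the induced process queries $z_i$ at all. Rewriting this as a ratio of expectations (expected number of simulated queries to $x_i$ over expected number of queries to $z_i$) and summing over $i$ yields the direct sum bound; the expected total number of simulated queries into $x$ is in turn at most $\R_{1/3}(f\circ g^n)$ because $\cA'$ has that depth.

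Finally, let $K=9\R_{1/3}(f\circ g^n)/\bar\chi(g)$, and let $\cA$ be $T$ truncated after $K$ queries to $z$ (outputting an arbitrary element of $S$ if truncation occurs). By Markov's inequality, truncation happens with probability at most $1/9$ under $z\sim\eta$, so $\Pr_{z\sim\eta}[(z,\cA(z))\in f]\ge 2/3 - 1/9 = 5/9$. Hence $\D_{4/9}^\eta(f)\le K$, and by Fact~\ref{minmax} we obtain $\R_{4/9}(f)\in O(\R_{1/3}(f\circ g^n)/\bar\chi(g))$, which is the desired statement. The main obstacle is the direct sum step: the subtree of $\cA'$ restricted to queries into the $i$-th block is not a tree that computes $g$, and extending it to a FULL $\clQ$-consistent tree that computes $g$ while preserving the depth accounting is the delicate part of the argument.
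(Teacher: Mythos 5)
Your proposal follows the same skeleton as the paper's proof: fix a hard distribution $\eta$, invoke minimax on $f\circ g^n$ to obtain $\cA'$, define $T$ by running $\clP(\cA',\clQ)$ on $z$, use the Simulation Theorem (\cref{samedistn}) for accuracy, bound the expected number of $z$-queries, and truncate via Markov. The one place where you depart from a mere restatement, and where the proof actually lives, is the direct-sum accounting — and that is exactly where your argument is both garbled and, as written, wrong in direction.

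Two concrete issues. First, the displayed inequality you propose to prove has $\E\bigl[\sum_i \clN_i(\cA',z,\clQ)\bigr]$ on the left-hand side, but $\sum_i\clN_i$ counts \emph{simulated} queries to $x$ (and is trivially at most the depth of $\cA'$); the quantity you actually need to bound is the expected number of \emph{real} queries $T$ makes to $z$, i.e.\ $\sum_i \Pr[z_i\text{ is queried}]$. Your RHS also reduces to $\E[\#\text{queries of }\cA']/\bar\chi(g)$, so the inequality as stated neither matches the target nor is correct. Second, when you write that ``the expected depth at which BITSAMPLER queries $z_i$ is at most $\bar\chi(g)$ times the probability that the induced process queries $z_i$ at all,'' the direction is backwards: $\bar\chi(g)=\max_\clQ\min_T\chi(T,\clQ)$ lower-bounds $\chi$ of any $\full$ tree that computes $g$, so what you get from completing the induced subtree to a $\full$, $g$-computing tree is a \emph{lower} bound of $\bar\chi(g)$ on the expected termination depth — not an upper bound proportional to $\Pr[z_i\text{ queried}]$. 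Your final sentence correctly diagnoses ``extending it to a FULL $\clQ$-consistent tree that computes $g$ while preserving the depth accounting'' as the delicate part, but the accounting you sketch does not carry through. The paper resolves this by constructing an explicit algorithm $H$ that first runs $\cA'$ to a leaf $\ell$ and then, for each coordinate $i$ on which $g$ is not determined at $\ell^{(i)}$, appends an optimal tree $\clB_{i,\ell}$ for the conditioned pair $\clD_{i,\ell}$; $(H,\clQ)$ is then $\full$, the direct-sum theorem (\cref{dp}) applies to $H$, and the decomposition $\E[\clN_i(H)\mid\clE_{i,\ell}]=d_i(\ell^{(i)})+\min_\clC\chi(\clC,\clD_{i,\ell})\le d_i(\ell^{(i)})+\bar\chi(g)$ is what makes the rearrangement to $\sum_i\Pr[\clF_i]\le c/\bar\chi(g)$ go through. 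You would need to spell out an equivalent construction and accounting to close the gap.
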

Our proof will make use of the following \emph{direct sum theorem}, which we prove in~\cref{dirprod}.
\begin{restatable}{thm}{}
\label{dp}
Let $\clB$ be a query algorithm acting on inputs from $\lf(\{0,1\}^m\rt)^t$.  Let $\clQ$ be a consistent distribution over pairs of distributions $(\mu_0, \mu_1)$ on $m$-bit strings.  If $(\clB, \clQ)$ is $\full$ then for any 
$z \in \{0,1\}^t$
\[
\sum_{i=1}^t \E[\clN_i(\clB, z,\clQ)] \geq t \cdot \min_{\clC} \chi(\clC, \clQ)] \enspace,
\]
where the minimum is taken over deterministic trees $\clC$ acting on inputs from $\{0,1\}^m$ such that $(\clC, \clQ)$ is $\full$.
\end{restatable}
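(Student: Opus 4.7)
The plan is to establish the per-coordinate strengthening
\[
\E[\clN_i(\clB, z, \clQ)] \;\ge\; \min_\clC \chi(\clC, \clQ)
\]
for every $i \in \{1, \ldots, t\}$, after which summing over $i$ yields the theorem. The strategy is: fix $i$, condition on all randomness of $\clP(\clB, \clQ)$ that is not associated with coordinate $i$, extract from $\clB$ a single-coordinate decision tree $\clC_i$ which represents the ``slice'' of $\clB$ that remains, and then apply \cref{clm:full} to this slice.

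Fix $i$ and $z$. The randomness driving $\clP(\clB, \clQ)$ on $z$ consists of the samples $\mu^{(k)} = (\mu_0^{(k)}, \mu_1^{(k)}) \sim \clQ$ for $k = 1, \ldots, t$, together with, at each vertex $v$ of $\clB$, a uniform real $r(v) \in [0,1]$ used by BITSAMPLER and a coin flip $b(v)$ used after $\q_{q(v)_1} = 1$. Let $\omega_{-i}$ collect $\mu^{(k)}$ for $k \neq i$ and the pair $(r(v), b(v))$ for every vertex $v$ with $q(v) = (k, \cdot)$, $k \neq i$. Conditional on $\omega_{-i}$ and $z_{-i}$, the response at every such non-$i$ vertex is fully determined: the flag $\q_k$ at $v$ is a function of the history (hence of $\omega_{-i}$ and $z_{-i}$), and $\omega_{-i}$ supplies whichever of $r(v), b(v)$ is needed. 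Collapse each such $v$ to its determined child; the resulting object $\clC_i = \clC_i(\omega_{-i}, z_{-i})$ is a decision tree in which every vertex queries a coordinate of $x_i$, so it is naturally a decision tree on $\{0,1\}^m$.

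The heart of the argument is a coupling. The subcube $v^{(i)}$ attached to a vertex $v$ of $\clB$ depends only on the queries to $x_i$ on the root-to-$v$ path, hence only on the corresponding path in $\clC_i$; therefore the BITSAMPLER probabilities $\Pr_{\mu_b^{(i)}}[x_i^{(j)}=0 \mid x_i \in v^{(i)}]$ are the same whether computed in $\clB$ or in $\clC_i$. Coupling $\clP(\clB, \clQ)$ on $z$ with $\clP(\clC_i, \clQ)$ on $z_i$ by reusing $\mu^{(i)}$ and the $r,b$-values at $x_i$-vertices, the two processes visit the same sequence of $x_i$-vertices and query $z_i$ at the same step, so $\clN_i(\clB, z, \clQ) = \clN_1(\clC_i, z_i, \clQ)$ almost surely. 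Since $(\clB, \clQ)$ is $\full$, $z_i$ is queried with probability $1$ overall; hence for almost every $\omega_{-i}$ the conditional probability that $\clP(\clC_i, \clQ)$ queries its input bit is $1$, and \cref{clm:full} yields both that $(\clC_i, \clQ)$ is $\full$ and that $\E[\clN_1(\clC_i, z_i, \clQ)] = \chi(\clC_i, \clQ)$.

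Combining, $\E[\clN_i(\clB, z, \clQ) \mid \omega_{-i}] = \chi(\clC_i(\omega_{-i}, z_{-i}), \clQ) \ge \min_\clC \chi(\clC, \clQ)$; averaging out $\omega_{-i}$ and summing over $i$ completes the proof. The main obstacle is the bookkeeping in the collapsing step: one must verify that $\omega_{-i}$ really supplies enough information to resolve every non-$i$ query deterministically regardless of the history-dependent $\q_k$ state, and that the resulting $\clC_i$ is a legitimate decision tree whose BITSAMPLER probabilities at each surviving vertex agree with those at the corresponding vertex of $\clB$. Once this structural coupling is in place the argument reduces cleanly to the single-coordinate identity already established in \cref{clm:full}.
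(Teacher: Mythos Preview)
Your proof is correct and follows essentially the same approach as the paper: isolate one coordinate, fix all the randomness associated with the other coordinates (the non-focal distribution pairs and the non-focal BITSAMPLER/coin randomness), observe that what remains is a single-coordinate query process $\clP(\clC_i,\clQ)$ on a deterministic tree $\clC_i$ with $(\clC_i,\clQ)$ $\full$, and invoke \cref{clm:full}. The only cosmetic difference is that the paper wraps this in a proof by contradiction (averaging to find a bad coordinate $k$ and then fixing the outside randomness), whereas you prove the per-coordinate inequality $\E[\clN_i(\clB,z,\clQ)]\ge\min_\clC\chi(\clC,\clQ)$ directly and sum; your formulation is slightly stronger and your bookkeeping about which pieces of randomness are fixed is more explicit than the paper's.
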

\begin{proof}[Proof of Theorem~\ref{mainb}]
We shall prove that for each distribution $\eta$ on the inputs to $f$, there is a randomized query algorithm $\cA$ making at most $9\R_{\dr13}(f \circ g^n) /\bar \chi(g)$ queries in the worst case, for which $\Pr_{z \in \eta}[(z,\cA(z)) \in f] \geq \frac{5}{9}$ holds. $\cA$ can be made deterministic with the same complexity and accuracy guarantees by appropriately fixing its randomness. This will imply the theorem by the \emph{minmax principle} (Fact~\ref{minmax}). To this end let us fix a distribution $\eta$ over $\{0,1\}^n$.

Let $\clQ$ be consistent with $g$ such that for any deterministic decision tree $\clC$ computing $g$ we have $\chi(\clC,\clQ) \ge \bar \chi(g)$.
We will use distributions $\eta$ and $\clQ$ to set up a distribution $\gamma_\eta$ over the  input space of $f \circ g^n$.  This distribution is defined as follows:
\begin{enumerate}
\item Sample $z =(z_1, \ldots, z_n)$ from $\eta$.
\item Sample $(\mu_0^{(i)}, \mu_1^{(i)})$ independently from $\clQ$ for $i = 1, \ldots, t$.
\item Sample $x_i$ from $\mu_{z_i}^{(i)}$ for $i=1, \ldots, t$. Return $x = (x_1, \ldots, x_t)$.
\end{enumerate}
Recall from Section~\ref{idea} the observation that for each $z, x$ sampled as above, for each $s \in \mathcal{S}$, $(z,s) \in f$ \emph{if and only if} $(x,s) \in f \circ g^n$.

Assume that $\R_{\dr13}(f \circ g^n)=c$. The minimax principle (Fact~\ref{minmax}) implies that there is a deterministic query algorithm $\cA'$ for inputs from $\lf(\{0,1\}^m\rt)^n$, that makes at most $c$ queries in the worst case, such that $\Pr_{x \in \gamma_\eta}[(x,\cA'(x)) \in f \circ g^n] \geq \frac{2}{3}$. We will first use $\cA'$ to construct a randomized algorithm $T$ for $f$ whose accuracy under the distribution $\eta$ is as desired and which, for every input $z$, 
makes few queries in expectation.
\begin{algorithm}[!h]\label{Tee}
\DontPrintSemicolon
\caption{ $T$}
\KwIn{$z \in \{0,1\}^n$}
Run $\clP(\cA', \clQ)$ on $z$. \;
Return the output of $\cA'$. 
\end{algorithm}
$T$ is described in Algorithm~\ref{Tee}.

First we bound the probability of error by $T$. By~\cref{samedistn}, we have that
$\Pr[(z, T(z)) \in f]= \Pr_{x \sim \gamma_z(\clQ)}[(x,\cA'(x)) \in f \circ g^n]$ for each $z \in \{0,1\}^n$. 
Thus, $\Pr_{z \sim \eta}[(z, T(z)) \in f]=\Pr_{x \sim \gamma_\eta}[(x,\cA'(x)) \in f \circ g^n] \geq \frac{2}{3}$.

Next, we bound the expected number of queries made by $T$ in the worst-case.
\begin{claim}
The expected number of queries made by $T$ on each input $z$ is at most $\frac{c}{\bar \chi(g)}$.
\end{claim}

\begin{proof}
Fix an input $z \in \{0,1\}^n$. For each leaf $\ell=\ell^{(1)} \times \ldots \times \ell^{(n)}$ of $\cA'$ and for each $i=1, \ldots, n$ define $\mathcal{E}_{i, \ell}'$ to be the event that the computation of $\clP(\cA',\clQ)$ finishes at $\ell$ with 
$\mathsf{QUERY_i}=0$. For $i=1, \ldots,n$ define $\mathcal{F}_i'$ to be the event that $\mathsf{QUERY_i}$ is set to $1$ in $\clP(\cA',\clQ)$. Let $\clQ^n$ stand for the (product) distribution of $n$ pairs of probability distributions each independently sampled from $\clQ$. For $i, \ell$ such that $g$ is not constant on $\ell^{(i)}$, let $\clD_{i, \ell}$ be the distribution given by the following sampling procedure:
\begin{enumerate}
\item Sample $(\mu_0^{(1)}, \mu_1^{(1)}), \ldots, (\mu_0^{(n)}, \mu_1^{(n)})$ from $\clQ^n$ conditioned on $\mathcal{E}_{i,\ell}'$,
\item return $(\mu_0^{(i)} \mid \ell^{(i)}, \mu_1^{(i)} \mid \ell^{(i)})$.
\end{enumerate}Let $\clB_{i,\ell}$ be an optimal tree for $\clD_{i,\ell}$, i.e., $\chi(\clB_{i,\ell},\clD_{i,\ell})=\min_{\clC}\chi(\clC, \clD_{i,\ell})$, where the minimization is over all algorithms $\clC$ that output $0$ on 
$\supp_0(\clD_{i,\ell})$ and output $1$ on $\supp_1(\clD_{i,\ell})$. Now, consider the query algorithm $H$ defined in Algorithm~\ref{Qz}.
\begin{algorithm}[!h]
\label{Qz}
\DontPrintSemicolon
\caption{ $H$}
\KwIn{$x \in \lf(\{0,1\}^m\rt)^n$}
Run $\cA'$ on $x$. \;\label{runT}
Let $\cA'$ terminate at leaf $\ell=\ell^{(1)} \times \ldots \times\ell^{(n)}$. \;
\For{$1 \leq i \leq n$}
{\If{$g$ is not constant on $\ell^{(i)}$}
{Run $\clB_{i, \ell}$ on $x_i$. \; \label{runP}}}
\end{algorithm}
Note that $(H, \clQ)$ is $\full$. Now consider a run of the query process $\clP(H, \clQ)$ on input $z$. \cref{dp} implies that $\sum_{i=1}^n \E[\clN_i(H,z, \clQ)] \geq n \cdot \min_{\clC} \chi(\clC,\clQ) =n \cdot \bar \chi(g)$, 
by the choice of $\clQ$. 

Let $\mathcal{F}_i$ to be the event that $\mathsf{QUERY_i}$ is set to $1$ in $\clP(H,\clQ)$ when it reaches a leaf of $\cA'$, and for each leaf $\ell$ of $\cA'$ let $\mathcal{E}_{i, \ell}$ be the event that $\clP(H,\clQ)$ reaches $\ell$ and 
$\mathsf{QUERY_i}=0$ when it does.   Observe that for each $i=1, \ldots, n$, the events $\{\mathcal{F}_i, (\mathcal{E}_{i,\ell})_{\ell}\}$ are mutually exclusive and exhaustive.  

We have that
\begin{align}
n \cdot \bar \chi(g) &\leq \sum_{i=1}^n\E[\clN_i(H,z,\clQ)] \nonumber \\
&=\sum_{i=1}^n\sum_{\ell}\Pr[\clE_{i,\ell}]\cdot \E[\clN_i(H,z,\clQ) \mid \clE_{i,\ell}]+\sum_{i=1}^n\Pr[\mathcal{F}_i]\cdot \E[\clN_i(H,z,\clQ) \mid \mathcal{F}_i] \label{eqn2}
\end{align}
Let $d_i(\ell)$ be the number of queries into $x_i$ made in the unique path from the root of $\cA'$ to $\ell$.
Now, condition on the $n$ pairs of distributions $(\mu_0^{(j)}, \mu_1^{(j)})_{j=1, \ldots, n}$ that are used in $\clP(H,\clQ)$.  We have that,
\begin{align}
\E[\clN_i(H,z,\clQ) \mid \clE_{i,\ell}, (\mu_0^{(j)}, \mu_1^{(j)})_{j=1, \ldots, n}]=d_i(\ell^{(i)})+\E[\clN_1(\clB_{i,\ell},z_i,(\mu_0^{(i)}\mid \ell^{(i)}, \mu_1^{(i)} \mid \ell^{(i)}))].\label{eqn3}
\end{align}
Averaging over $(\mu_0^{(j)}, \mu_1^{(j)})_{j=1, \ldots, n}$ we have from~(\ref{eqn3}) that
\begin{align}
\E[\clN_i(H,z,\clQ) \mid \clE_{i,\ell}] &= d_i(\ell^{(i)})+\E[\clN_1(\clB_{i,\ell},z_i,\clD_{i,\ell})]\nonumber \\
&= d_i(\ell^{(i)})+\min_{\clC}\chi(\clC, \clD_{i,\ell})\mbox{\ \ \ \ (By the choice of $\clB_{i,\ell}$)}. \nonumber \\
&\leq d_i(\ell^{(i)})+\bar \chi(g).\label{eqn4}
\end{align}
Observing that $\sum_{\ell}\Pr[\clE_{i, \ell}]=1-\Pr[\mathcal{F}_i]$, we have from~(\ref{eqn2}) and~(\ref{eqn4}) that
\begin{align}
n\cdot \bar \chi(g) &\leq  \sum_{i=1}^n \sum_{\ell}\Pr[\clE_{i, \ell}] \cdot (d_i(\ell^{(i)})+\bar \chi(g)) + \sum_{i=1}^n\Pr[\mathcal{F}_i]\cdot\E[\clN_i(H,z,\clQ) \mid \mathcal{F}_i]\nonumber \\
&= \sum_{i=1}^n (1-\Pr[\mathcal{F}_i]) \cdot \bar \chi(g)+\sum_{i=1}^n\left(\sum_{\ell}(\Pr[\clE_{i,\ell}]\cdot d_i(\ell^{(i)})+\Pr[\mathcal{F}_i]\cdot \E[\clN_i(H,z,\clQ) \mid \mathcal{F}_i])\right) \nonumber \\
\Rightarrow \sum_{i=1}^n \Pr[\mathcal{F}_i] &\leq  \frac{1}{\bar \chi(g)} \cdot \sum_{i=1}^n\left(\sum_{\ell}(\Pr[\clE_{i,\ell}]\cdot d_i(\ell^{(i)})+\Pr[\mathcal{F}_i]\cdot \E[\clN_i(H,z,\clQ) \mid \mathcal{F}_i])\right).
\label{eqn5}
\end{align}
We will show that $\sum_{i=1}^n\left(\sum_{\ell}(\Pr[\clE_{i,\ell}]\cdot d_i(\ell^{(i)})+\Pr[\mathcal{F}_i]\cdot \E[\clN_i(H,z,\clQ) \mid \mathcal{F}_i])\right) \leq c$. Since $\sum_{i=1}^n \Pr[\mathcal{F}_i]$ is exactly the expected number of queries made by $T$, the claim will follow from~(\ref{eqn5}). 

Consider a run of $\clP(H,\clQ)$ on input $z$, and let $c_i$ be a random variable denoting the number of times \cref{P_queries_xi} of Algorithm~\ref{P} (with $\clB = H$) is a query into $x_i$ before a leaf of $\cA'$ is reached, for $i=1, \ldots, n$.  
Thus $\sum_{i=1}^n \E[c_i] \leq c$.
Further, for each $i,\ell$ we have $d_i(\ell^{(i)}) = \E[c_i \mid \clE_{i,\ell}]$ and $\E[\clN_i(H,z,\clQ) \mid \mathcal{F}_i] =\E[\clN_i(\cA',z,\clQ) \mid \mathcal{F}_i] \leq \E[c_i \mid \mathcal{F}_i]$. Thus,
\begin{align*}
\sum_{i=1}^n\Bigl(\sum_{\ell}(\Pr[\clE_{i,\ell}]\cdot d_i(\ell^{(i)})&+\Pr[\mathcal{F}_i]\cdot \E[\clN_i(H,z,\clQ) \mid \mathcal{F}_i])\Bigr) \\
&\leq \sum_{i=1}^n \Bigl(\sum_{\ell}(\Pr[\clE_{i,\ell}] \cdot \E[c_i \mid \clE_{i,\ell}]) +\Pr[\mathcal{F}_i]\cdot\E[c_i \mid \mathcal{F}_i]\Bigr) \\
&= \sum_{i=1}^n \E[c_i ] \leq c.
\end{align*}
\end{proof}

Now we finish the proof of Theorem~\ref{mainb} by constructing the query algorithm $\cA$. Let $\cA$ be the algorithm obtained by terminating $T$ after $9t/d$ queries. By Markov's inequality, for each $z$, the probability that $T$ makes more than $9t/d$ queries is at most $1/9$. Thus $\cA$ computes $f$ with probability at least $\dr23 - \dr19=\dr59$ on a random input from $\eta$.
\end{proof}

\sect[s_tight]{Tightness: $\R_{\dr13}(f\circ g^n)\in\asO{\R_{\dr49}(f)\tm\sq{\R_{\dr13}(g)}}$ is possible}
In this section we prove Theorem~\ref{t_match}. We construct a relation $f_0\sbseq\01^n\times\01^n$ (i.e., $\mathcal{S}=\01^n$) and a promise function $g_0 \subseteq \01^n\times\set{0,1}$ (i.e., $m=n$), such that $\R_{\dr49}(f_0)\in\asT{\sq n}$, $\R_{\dr13}(g_0)\in\asT{n}$ and $\R_{\dr13}(f_0\circ g_0^n)\in\asT{n}$.

For strings $x=(x_1, \ldots, x_n), z=(z_1, \ldots, x_n)$ in $\{0,1\}^n$, let $x\oplus z$ be the string $(x_1 \oplus z_1, \ldots, x_n \oplus z_n)$ obtained by taking their bitwise XOR. Let $\sz{x}$ stand for the \emph{Hamming weight} $|{\{i \in [n]: x_i =1\}}|$ of $x$. We define $f_0$ as follows:
\m{
  f_0(z)\deq\sett {(a,z) \in \{0,1\}^n \times \{0,1\}^n}{\sz{a\oplus z}\le\fr n2-\sq n}
}
Now we define $g_0$ by specifying $g_0^{-1}(0)$ and $g_0^{-1}(1)$.
\m{
&g_0^{-1}(0)\deq\sett {(x,0)}{x \in \{0,1\}^n, \sz x\le\dr n2-\sq n}, \\
&g_0^{-1}(1)\deq\sett{(x,1)}{x \in \{0,1\}^n, \sz x\ge\dr n2+\sq n}.
}
We now determine the randomized query complexities of $f_0, g_0$ and $f_0 \circ g_0^n$.
\begin{claim}
\label{matchingex}
\begin{enumerate}
\item[(i)] $\R_{\dr49}(f_0)\in\asOm{\sq n}$.
\item[(ii)] $\R_{\dr13}(g_0)\in\asOm n$.
\item[(iii)] $\R_\eps(f_0\circ g_0^n)\in\asO{n\tm\sq{\log(\dr1\eps)}}$.
\end{enumerate}
\end{claim}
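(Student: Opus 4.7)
The three parts of \cref{matchingex} are independent; I would tackle each via Yao's minimax principle. For \textbf{part (i)}, apply Yao with $z$ uniform on $\01^n$. Any deterministic $q$-query algorithm observes $z$ on a (possibly adaptive) set $S$ with $\sz S \le q$, and on $[n]\setminus S$ the string $z$ remains i.i.d.\ uniform and independent of the transcript. Hence for any output $a$, the Hamming distance $\sz{a\xor z}$ stochastically dominates $\mathrm{Bin}(n-q, 1/2)$. A normal-approximation argument then shows that for $q = c\sq n$ with $c$ a sufficiently small absolute constant, $\Pr[\mathrm{Bin}(n-q,1/2) \le n/2 - \sq n] < 5/9$, so the algorithm has error exceeding $\dr49$, giving $\R_{\dr49}(f_0) \in \Omega(\sq n)$. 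The matching $O(\sq n)$ upper bound comes from querying $\Theta(\sq n)$ uniformly random coordinates of $z$ and outputting the observed values on $S$ and $0$ on $[n]\setminus S$.

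For \textbf{part (ii)}, apply Yao with the input distribution that puts mass $\dr12$ on the uniform distribution over strings of Hamming weight $n/2 - \sq n$ and mass $\dr12$ on weight $n/2 + \sq n$. A standard calculation (via mutual-information or a hybrid / total-variation argument) shows that the joint distribution of the bits revealed by any $k$-query algorithm differs between the two sides by $O(\sq{k/n})$ in total variation, so $k = o(n)$ queries are insufficient to achieve error below $\dr13$. This gives $\R_{\dr13}(g_0) \in \Omega(n)$.

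For \textbf{part (iii)}, I would use the natural algorithm: for each $i \in [n]$, sample $s$ (odd) uniformly random indices of $x_i$ with replacement and set $a_i$ to their majority. The promise that $\sz{x_i}$ differs from $n/2$ by at least $\sq n$, together with the $O(1/\sq s)$ standard deviation of the sample mean, gives (via a direct binomial computation or Berry--Esseen) $\Pr[a_i \ne g_0(x_i)] \le \dr12 - c\sq{s/n}$ for an absolute $c > 0$. Summing the (independent) indicators gives $\mathbb{E}[\sz{a\xor z}] \le n/2 - c\sq{sn}$, and a Hoeffding bound then yields $\sz{a\xor z} \le \mathbb{E}[\sz{a\xor z}] + \sq{n\log(1/\eps)/2}$ except with probability $\eps$. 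Choosing $s = \Theta(1 + \log(1/\eps))$ so that $c\sq{sn} \ge \sq n + \sq{n\log(1/\eps)/2}$ makes $\sz{a\xor z} \le n/2 - \sq n$ with probability at least $1-\eps$, using $O(n + n\log(1/\eps))$ queries in total.

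The main obstacle is tightening the final bound from $O(n\log(1/\eps))$ to the claimed $O(n\sq{\log(1/\eps)})$; I expect this requires either a sharper Poisson-binomial concentration inequality (exploiting that each $p_i$ is bounded away from $\dr12$, so the variance is strictly below $n/4$) or an adaptive two-round algorithm that invests additional queries only in coordinates whose first-round estimate lies near $\dr12$. In any case the weaker bound already gives $O(n)$ at $\eps = \dr13$, which together with parts (i), (ii) and \cref{main} is enough to conclude $\R_{\dr13}(f_0 \circ g_0^n) \in \Theta(n)$, as asserted in \cref{t_match}.
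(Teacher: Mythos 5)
For parts~(i) and~(ii) your proposal is sound. Part~(i) is essentially the same route as the paper's (Yao against the uniform distribution, then count how many coordinates are still uniformly random given the transcript); the paper phrases it as a subcube-counting argument and you phrase it via stochastic domination of a binomial, but these are the same idea. Part~(ii) is a genuinely different route: the paper reduces to the $\Omega(n)$ communication lower bound for Gap-Hamming-Distance of Chakrabarti--Regev, whereas you give a direct sampling argument (by the permutation-invariance of the two hard distributions, adaptivity is useless, and the total-variation distance between the two hypergeometric transcript distributions after $k$ draws is $O(\sqrt{k/n})$ by Pinsker). Your route is more elementary and self-contained, since it avoids citing a hard communication-complexity theorem; the paper's route is shorter to write down given the citation. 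Either proof is fine.

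For part~(iii), the ``obstacle'' you identify is not an obstacle in your argument but, as far as I can tell, a flaw in the paper's stated bound. The paper uses exactly the same algorithm you propose (independently take the majority of $t_\eps$ uniformly random probes in each coordinate and output the resulting string), and its analysis hinges on the claim that after a majority of $t_\eps$ probes each coordinate is correct with probability at least $\tfrac12 + \Omega(t_\eps/\sqrt{n})$. But the true per-coordinate bias of a majority of $t$ probes of a coin with bias $1/\sqrt n$ is $\Theta(\sqrt{t}/\sqrt n)$ for $t=O(n)$, exactly as you compute and as the central limit theorem confirms; for example at $t=3$ the exact bias is $3/\sqrt n - O(n^{-3/2})$, while the paper's chain of inequalities would produce a ``lower bound'' of roughly $4/\sqrt n$, a reversed inequality. (Concretely, in the paper's binomial sum for $\Pr[\mathrm{maj}=0]$ the correct coefficient is $\binom{t}{(t+1)/2+i}$, not $\binom{t}{i}$, so the paper subtracts less than it should and the claimed ``$\ge$'' in the first displayed inequality fails.) With the correct $\Theta(\sqrt{t/n})$ bias, plugging into the Chernoff step yields exactly your $O(n\log(1/\eps))$, and the budgeting argument you sketch (spread the query budget uniformly across coordinates, which is optimal by concavity of $\sqrt{\cdot}$) shows this cannot be improved to $O(n\sqrt{\log(1/\eps)})$ by this class of algorithms. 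Both exponents are $\Theta(1)$ at $\eps=1/3$, so \cref{t_match} and its use in the paper are unaffected; but you should not expend effort trying to close the ``gap,'' and your proof of $\R_{1/3}(f_0\circ g_0^n)\in O(n)$ is complete and correct.
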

We prove Claim~\ref{matchingex} in Appendix~\ref{matchingexample}. Theorem~\ref{t_match} follows from Theorem~\ref{main} and Claim~\ref{matchingex} with $\eps$ set to $\nicefrac 13$.
\paragraph{Acknowledgements.} We thank Rahul Jain for useful discussions. We thank Srijita Kundu and Jevg\={e}nijs Vihrovs for their helpful comments on the manuscript.  T.L.\ would like to 
thank the Simons Institute and the organizers of the ``Workshop on Interactive Complexity'' where part of this work took place.  In particular, we thank Yuval Filmus for 
the suggestion during this workshop to look at the min-max version of conflict complexity, which led to the development of \ccc.

Part of this work was conducted while T.L.\ and S.S.\ were at the Nanyang Technological University and the Centre for Quantum Technologies, supported by the Singapore National Research Foundation under NRF RF Award No.\ 
NRF-NRFF2013-13.  
This work was additionally supported by the Singapore National Research Foundation, the Prime Minister’s Office, Singapore and the Ministry of Education, Singapore under the Research Centres of Excellence programme under research grant R 710-000-012-135.  This research was supported in part by the QuantERA ERA-NET Cofund project QuantAlgo.  D.G.\ is partially funded by the grant P202/12/G061 of GA \v CR and by RVO:\ 67985840.
\begin{appendix}
\section{Minimax principle: proof of Fact~\ref{minmax}}
\label{mmax}
Fix an integer $\ell$. Let $\mathcal{D}_\ell$ be the finite set of all deterministic query algorithms on $k$ bits with worst-case complexity at most $\ell$. Let $\clH_k:=\{0,1\}^k$. For algorithm $\A \in \clD_\ell$ and input $x \in \clH_k$, let $\E(\A,x)=1$ if $(x,\A(x)) \notin h$, and $0$ otherwise. By von Neumann's minimax principle,
\begin{align}\min_{\sigma} \max_{\mu} \sum_{\A \in \clD_\ell, x \in \clH_k} \sigma(\A)\E(\A, x)\mu(x)= \max_{\mu} \min_{\sigma} \sum_{\A \in \clD_\ell, x \in \clH_k} \sigma(\A)\E(\A, x)\mu(x),\label{mm}\end{align}
where $\sigma$ and $\mu$ range over probability distributions over $\clD_\ell$ and $\clH_k$ respectively. Note that in equation~(\ref{mm}), we can assume that the maximum in the left hand side is over  point distributions on $\clH_k$, i.e., distributions that assign weight $1$ to some input $x \in \clH_k$. Similarly we can assume that the minimum in the right hand side is over  point distributions on $\clD_\ell$. Thus we have that,
\begin{align}
\min_{\sigma} \max_{x \in \clH_k} \sum_{\A \in \clD_\ell} \sigma(\A)\E(\A,x)=\max_{\mu} \min_{\A \in \clD_\ell}  \sum_{x \in \clH_k} \E(\A,x) \mu(x).\label{mm1}
\end{align}
From equation~(\ref{mm1}) it follows that
\begin{align*}
\R_\epsilon(h) &= \min \sett{\ell }{ \min_{\sigma} \max_{x \in \clH_k} \sum_{\A \in \clD_\ell} \sigma(\A)\E(\A,x) \leq \epsilon} \\
& \qquad \qquad \qquad \qquad \mbox{(where $\sigma$ ranges over all probability distributions on $\clD_\ell$)} \\
&= \min \sett{\ell} {\max_\mu \min_{\A \in \clD_\ell} \sum_{x \in \clH_k} \E(\A,x) \mu(x) \leq \epsilon}\\
& \qquad \qquad \qquad \qquad \mbox{(where $\mu$ ranges over all probability distributions on $\clH_k$)} \\
&=\max_{\mu} \min \sett{\ell}{\min_{\A \in \clD_\ell} \sum_{x \in \clH_k} \E(\A,x) \mu(x) \leq \epsilon} \\
&=\D^\mu_\epsilon(h).
\end{align*}

\section{Alternative characterization of sabotage complexity}
\label{app:sabotage}
We first go over the standard definition of sabotage complexity from \cite{DBLP:conf/icalp/Ben-DavidK16}.
Let $g \subseteq \{0,1\}^m \times \{0,1\}$ be a partial function.  From $g$, define a partial 
function $g_{\sab} :  P \rightarrow \{\ast,\dagger\}$, where now $P \subseteq \{0,1,\ast, \dagger\}^n$ is defined in the following way.  Let $P^{\ast} \subseteq \{0,1,\ast\}$ 
be the largest set such that for all $z \in P^{\ast}$ there exist $x,y$ with $g(x) \ne g(y)$ and both $x$ and $y$ are consistent with the non-star 
coordinates of $z$.  Define $P^{\dagger} \subseteq \{0,1,\dagger\}$ analogously with $\dagger$ instead of $\ast$.  Then $P = P^{\ast} \cup P^{\dagger}$.  Finally, define $g_{\sab}(z) = \ast$ if 
$z \in P^\ast$ and $g_{\sab}(z) = \dagger$ if $z \in P^{\dagger}$.  
The sabotage complexity of $g$ is defined as $\RS(g) = R_0(g_{\sab})$.  

For a tree $T$ computing $g$, and strings $x,y$ such that $g(x) \ne g(y)$, let $\sep_T(x,y)$ denote the depth of the 
node $v$ in $T$ such that $x$ and $y$ both reach $v$ yet $x_{q(v)} \ne y_{q(v)}$ where $q(v)$ is the index queried at node $v$.  
We have the following alternative characterization of sabotage complexity.

\begin{theorem}
\label{thm:alt_sabotage}
Let $g \subseteq \{0,1\}^m \times \{0,1\}$ be a partial function.  Then 
\begin{align*}
\RS(g) &= \min_\clT \max_{x,y \atop g(x) \ne g(y)} \E_{T \sim \clT} [\sep_T(x,y)] \\
&= \max_p \min_T \E_{(x,y) \sim p} [\sep_T(x,y)]\enspace .
\end{align*}
In the first equation the minimum is taken over zero-error randomized algorithms $\clT$ for $g$.  In the 
second equation, the maximum is taken over distributions over pairs $(x,y)$ where $g(x)=0, g(y)=1$, and 
the minimum is taken over deterministic trees $T$ computing $g$.
\end{theorem}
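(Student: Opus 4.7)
The plan is to prove the two equalities separately. The second equality—between $\min_\clT \max_{x,y}\E_{T \sim \clT}[\sep_T(x,y)]$ and $\max_p \min_T \E_{(x,y)\sim p}[\sep_T(x,y)]$—follows immediately from von Neumann's minimax theorem applied to the finite zero-sum game in which the minimizer picks a deterministic tree $T$ computing $g$ (finitely many, since depth can be bounded by $m$), the maximizer picks a pair $(x,y)$ with $g(x) \neq g(y)$, and the payoff is $\sep_T(x,y)$. For the first equality, I prove both inequalities in turn.

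For $\RS(g) \leq \min_\clT \max_{x,y}\E_T[\sep_T(x,y)]$, suppose $\clT$ is a distribution over deterministic trees computing $g$ with $\max_{x,y}\E_T[\sep_T(x,y)] \leq d$. I would build a zero-error randomized algorithm for $g_{\sab}$ of expected cost $\leq d$: sample $T \sim \clT$, simulate $T$ on input $z$, respond to each query with $z_i$ when $z_i \in \{0,1\}$, and halt with output $\ast$ (resp.\ $\dagger$) the first time a query returns $\ast$ (resp.\ $\dagger$). For $z \in P^\ast$ witnessed by $x,y$ with $g(x) \neq g(y)$ both consistent with the non-$\ast$ coordinates of $z$, the simulation traces the common prefix of $T(x)$ and $T(y)$—since queries only occur at positions where $z_i \in \{0,1\}$, and at such positions $x_i = y_i = z_i$—and must halt by depth $\sep_T(x,y)$, where the first query to a coordinate on which $x$ and $y$ disagree necessarily reads $\ast$ from $z$. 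A symmetric argument handles $P^\dagger$, so the algorithm is zero-error with expected cost $\leq d$.

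For $\RS(g) \geq \min_\clT \max_{x,y}\E_T[\sep_T(x,y)]$, let $\clT'$ be a zero-error randomized algorithm for $g_{\sab}$ of expected worst-case cost $\RS(g)$. For each deterministic $T'$ in its support, I would build a deterministic tree $T$ computing $g$ by restricting $T'$ to its $0/1$-branches (deleting the $\ast$ and $\dagger$ subtrees). The key structural claim is: for any leaf $\ell$ of $T'$ reached via an all-$0/1$ path, $g$ is constant on the subcube $S_\ell$ of $\{0,1\}^n$ consistent with the path. Otherwise, picking $u,v \in S_\ell$ with $g(u) \neq g(v)$, the two sabotage inputs with $\ast$'s (resp.\ $\dagger$'s) at the disagreement coordinates and the common values elsewhere would both reach $\ell$, forcing $T'$'s label to be simultaneously $\ast$ and $\dagger$. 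So each leaf of $T$ can be labeled with the unique value of $g$ on its subcube. For any $x, y$ with $g(x) \neq g(y)$, the depth at which $T(x)$ and $T(y)$ diverge coincides with the depth at which $T'$ on the sabotage input (with $\ast$'s where $x,y$ differ) first queries a $\ast$-coordinate, which is at most the total queries of $T'$ on this input. Averaging over $T' \sim \clT'$ then gives $\E_T[\sep_T(x,y)] \leq \RS(g)$ for every such pair $(x,y)$.

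The main obstacle is the structural claim that the $0/1$-restriction of $T'$ really is a tree computing $g$; once the claim is in place via the zero-error pigeonhole at each leaf, the link between $\sep_T(x,y)$ and the cost of $T'$ on the associated sabotage input reduces to tracking identical prefixes of the two executions.
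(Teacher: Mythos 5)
Your proposal is correct and follows essentially the same route as the paper: the second equality via von Neumann's minimax theorem, the direction $\RS(g)\le\min_\clT\max_{x,y}\E_T[\sep_T(x,y)]$ by running trees computing $g$ on a sabotaged input until a special symbol is hit, and the reverse direction by turning an optimal zero-error algorithm for $g_{\sab}$ into trees computing $g$ whose separation depth is bounded by the queries spent distinguishing $z^\ast$ from $z^\dagger$. Your treatment of the reverse direction is in fact more explicit than the paper's (which leaves the $0/1$-restriction of the $g_{\sab}$-trees and the constancy-of-$g$-at-leaves claim implicit), but the underlying argument is the same.
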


\begin{proof}
That the right hand side of the first line is equal to the second line follows by von Neumann's minimax theorem \cite{vonNeumann}.

Now we focus on establishing the first line.  We first show that $\RS(g)$ is at most the right hand side of the first line.
Let $\clT^*$ achieve the minimum of the expression on the right hand side.  Let $z \in P$ be any sabotaged input.  Then there 
are $x^*,y^*$ with $g(x^*) \ne g(y^*)$ such that $x^*$ and $y^*$ only differ where $z$ has special symbols.  Thus any 
query that separates $x^*$ and $y^*$ will also find a special symbol.  The expected number of queries to separate 
$x^*$ and $y^*$ is at most $\max_{x,y} \E_{T \sim \clT^*} [\sep_T(x,y)]$, thus the left hand side is at most the right hand side.

For the other direction, let $\clT^*$ be an optimal zero-error randomized algorithm computing $g_{\sab}$.  For any $x,y$ with 
$g(x) \ne g(y)$ we can create $z^\ast \in P^{\ast}$ such that $z^\ast$ has $\ast$ in those positions where $x,y$ disagree, and 
$z^\ast$ agrees with $x,y$ in those positions where they agree with each other..  Let $z^\dag$ equal $z^\ast$
with $\ast$ replaced by $\dag$.  Now $\clT^*$ is able to distinguish between $z^\ast$ and $z^\dagger$ using an expected number 
of queries that is at most $\RS(g)$.  Any query that distinguishes $z^\ast$ and $z^\dagger$ is also a query that separates $x$ and $y$, 
as $z_\ast$ and $z_\dagger$ only differ where $x$ and $y$ do.  This means
\[
\E_{T \sim \clT^*} [\sep_T(x,y)] \le \RS(g) \enspace ,
\]
showing that the right hand side is at most the left hand side.
\end{proof}

\section{Information Theory}
\label{infotheory}
Let $X$ be a random variable supported on a finite set $\{x_1, \ldots, x_s\}$. Let $\mathcal{E}$ be any event in the same probability space. Let $\mathbb{P}[\cdot]$ denote the probability of any event. The \emph{conditional entropy} $\rH(X \mid \mathcal{E})$ of $X$ conditioned on $\mathcal{E}$ is defined as follows.
\begin{definition}[Conditional entropy]
	\[\rH(X \mid \mathcal{E}):=\sum_{i=1}^s \mathbb{P}[X=x_i \mid \mathcal{E}]\log_2 \frac{1}{\mathbb{P}[X=x_i \mid \mathcal{E}]}.\]
\end{definition}
An important special case is when $\mathcal{E}$ is the entire sample space. In that case the above conditional entropy is referred to as the entropy $\rH(X)$ of $X$.
\begin{definition}[Entropy]
	\[\rH(X):=\sum_{i=1}^s\mathbb{P}[X=x_i] \log_2 \frac{1}{\mathbb{P}[X=x_i]}.\]
\end{definition}
Let $Y$ be another random variable in the same probability space as $X$, taking values from a finite set $\{y_1, \ldots, y_t\}$. Then the conditional entropy of $X$ conditioned on $Y$, $\rH(X \mid Y)$, is defined as follows.
\begin{definition}
	\[\rH(X \mid Y)=\sum_{i=1}^t \mathbb{P}[Y=y_i] \cdot \rH(X \mid Y=y_i).\]
\end{definition}
\begin{definition}[Mutual information]
	Let $X$, $Y$ and $Z$ be two random variables in the same probability space, taking values from finite sets. The mutual information between $X$ and $Y$ conditioned on $Z$, $\mathrm{I}(X;Y \mid Z)$, is defined as follows.
	\[\mathrm{I}(X;Y \mid Z):=\rH(X \mid Z)-\rH(X \mid Y, Z).\]
	It can be shown that $\mathrm{I}(X;Y \mid Z)$ is symmetric in $X$ and $Y$: $\mathrm{I}(X;Y \mid Z)=\mathrm{I}(Y;X \mid Z)=\rH(Y \mid Z)-\rH(Y \mid X, Z)$.
\end{definition}
\begin{theorem}[Chain rule of mutual information]
\label{chainrule}
Let $X_1, \ldots, X_k, Y, Z$ be random variables in the same probability space, taking values from finite sets. Then,
\[\II(X_1, \ldots, X_k:Y \mid Z)=\sum_{i=1}^k\II(X_i : Y \mid Z, X_1, \ldots, X_{i-1}).\]
\end{theorem}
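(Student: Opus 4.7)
The plan is to prove this standard chain rule by reducing it to the chain rule for entropy. First I would expand the left-hand side using the definition of conditional mutual information, which gives
\[
\II(X_1,\ldots,X_k;Y\mid Z) \;=\; \rH(X_1,\ldots,X_k\mid Z) \;-\; \rH(X_1,\ldots,X_k\mid Y,Z).
\]
Next I would apply the chain rule for conditional entropy, $\rH(A,B\mid C)=\rH(A\mid C)+\rH(B\mid A,C)$, iteratively to both of the joint entropies on the right-hand side. This yields
\[
\rH(X_1,\ldots,X_k\mid Z)=\sum_{i=1}^{k}\rH(X_i\mid Z,X_1,\ldots,X_{i-1}),
\]
and the analogous identity with $Y$ adjoined to the conditioning set. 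Subtracting these two sums term by term, the $i$-th difference is precisely $\II(X_i;Y\mid Z,X_1,\ldots,X_{i-1})$ by the definition of conditional mutual information, and summing over $i$ gives the claim.

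An alternative route is induction on $k$. The base case $k=1$ is the definition itself. For the inductive step I would first establish the two-variable version
\[
\II(A,B;Y\mid Z)=\II(A;Y\mid Z)+\II(B;Y\mid A,Z),
\]
which follows by expanding both sides via entropy and cancelling. Setting $A=(X_1,\ldots,X_{k-1})$ and $B=X_k$ and applying the inductive hypothesis to $\II(A;Y\mid Z)$ finishes the induction. This packaging isolates the single nontrivial step---the two-variable chain rule---but is essentially the same computation.

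No real obstacles are anticipated; the only bit of bookkeeping is verifying the chain rule of conditional entropy cleanly. Since all random variables are discrete with finite support, one writes out the probability sums in the definition of $\rH$ and applies $\Pr[A=a,B=b\mid C=c]=\Pr[B=b\mid A=a,C=c]\cdot\Pr[A=a\mid C=c]$ together with $\log(xy)=\log x+\log y$ to split the logarithm. Because the identity is an equality of finite sums with no inequalities or asymptotic estimates involved, the argument is purely algebraic.
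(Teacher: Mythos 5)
Your proof is correct: expanding the conditional mutual information via $\II(X;Y\mid Z)=\rH(X\mid Z)-\rH(X\mid Y,Z)$, applying the chain rule for conditional entropy to both joint entropies, and matching terms is exactly the standard argument, and the inductive packaging you sketch is an equivalent alternative. Note that the paper itself states this chain rule as a standard information-theoretic fact without proof, so there is no paper argument to compare against; your derivation fills that in correctly.
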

\begin{definition}[Kullback-Leibler Divergence]
\label{klpd}
Given two probability distributions $\P$ and $\Q$ on a finite set $\clU$, the \emph{Kullback-Leibler diverrgence} from $\Q$ to $\P$, denoted by $\D(P || Q)$, is defined as:
\[\D(P || Q):=-\sum_{u \in \clU}\P(u) \log \frac{\P(u)}{\Q(u)}.\]
\end{definition}
Given two random variables $\X$ and $\Y$ taking values in a finite set $\clU$, Let $\X \otimes \Y$ be distribution over ordered pairs of elements of $\clU$ (i.e., over elements of $\clU \times \clU$), where the elements are sampled independently according to distributions of $\X$ and $\Y$ respectively. Let $(\X,\Y)$ denote the joint distribution of $\X$ and $\Y$. The following fact can be easily verified.
\begin{fact}
\label{equivalence}
$\I(\X:\Y)=\Div((\X,\Y) || (\X \otimes \Y))$.
\end{fact}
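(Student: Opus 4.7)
The plan is to verify the identity by direct calculation, expanding both sides into finite sums over $\clU \times \clU$ and matching terms. Since the joint distribution $(\X,\Y)$ assigns mass $\Pr[\X=x,\Y=y]$ to the pair $(x,y)$, while $\X\otimes\Y$ assigns mass $\Pr[\X=x]\cdot\Pr[\Y=y]$ by construction, the log-ratio inside $\Div$ splits cleanly into three pieces, one from the joint distribution and one from each marginal, each of which is recognizable as an entropy.

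First I would write out $\Div((\X,\Y) \,\|\, (\X\otimes\Y))$ using Definition~\ref{klpd}, and split
\[
\log\frac{\Pr[\X=x,\Y=y]}{\Pr[\X=x]\,\Pr[\Y=y]} \;=\; \log\Pr[\X=x,\Y=y] \;-\; \log\Pr[\X=x] \;-\; \log\Pr[\Y=y].
\]
Weighting by $\Pr[\X=x,\Y=y]$ and summing, the first term contributes $-\rH(\X,\Y)$. For the second term, marginalizing $y$ out of $\Pr[\X=x,\Y=y]$ reduces it to $\sum_x \Pr[\X=x]\log\Pr[\X=x] = -\rH(\X)$, and symmetrically the third term contributes $-\rH(\Y)$. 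Combining, the KL divergence equals $\rH(\X)+\rH(\Y)-\rH(\X,\Y)$.

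The remaining task is to identify $\rH(\X)+\rH(\Y)-\rH(\X,\Y)$ with $\I(\X:\Y)$. Starting from the definition $\I(\X:\Y) = \rH(\X) - \rH(\X\mid\Y)$, I would expand
\[
\rH(\X\mid\Y) \;=\; -\sum_{x,y}\Pr[\X=x,\Y=y]\log\Pr[\X=x\mid\Y=y],
\]
substitute $\Pr[\X=x\mid\Y=y]=\Pr[\X=x,\Y=y]/\Pr[\Y=y]$, and split the log. This yields $\rH(\X\mid\Y) = \rH(\X,\Y) - \rH(\Y)$, whence $\I(\X:\Y) = \rH(\X)+\rH(\Y)-\rH(\X,\Y)$, matching the divergence calculation.

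There is no real obstacle here; the identity is a textbook computation. The only mild subtlety is handling atoms of zero probability, which is standard: terms with $\Pr[\X=x,\Y=y]=0$ contribute $0$ under the convention $0\log 0=0$, and on the support of the joint both marginals are automatically positive, so the ratio defining $\Div$ is well-defined wherever it is weighted by a non-zero factor.
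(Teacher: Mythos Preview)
Your direct calculation is correct and is exactly the standard verification of this identity; the paper itself does not supply a proof, stating only that the fact ``can be easily verified.'' One minor caveat: the paper's Definition~\ref{klpd} has a stray minus sign (it writes $\D(P\|Q):=-\sum_u P(u)\log\frac{P(u)}{Q(u)}$), which is evidently a typo since both Pinsker's inequality (Theorem~\ref{pinskers}) and the use of Fact~\ref{equivalence} in Claim~\ref{mutin} require the standard, nonnegative convention $\D(P\|Q)=\sum_u P(u)\log\frac{P(u)}{Q(u)}$; your computation is consistent with the intended convention.
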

\begin{definition}
\label{l1n}
Given two probability distributions $\P$ and $\Q$ on a finite set $\clU$, the \emph{$\mathsf{L}_1$-distance} between $\P$ and $\Q$, denoted by $||\P-\Q||_1$, is defined as:
\[||\P-\Q||_1:=\sum_{u \in \clU}|\P(u)-\Q(u)|.\]
\end{definition}
\emph{Pinsker's inequality}, stated below, bounds $\D(P || Q)$ in terms of $|\P(u)-\Q(u)|$ from below.
\begin{theorem}[Pinsker's inequality]
\label{pinskers}
Given two probability distributions $\P$ and $\Q$ on a finite set $\clU$,
\[\D(P || Q) \geq \frac{1}{2} ||P-Q||^2_1.\]
\end{theorem}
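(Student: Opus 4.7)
The plan is to prove Pinsker's inequality in two standard steps: first reduce to the binary case by collapsing $P$ and $Q$ via a well-chosen two-set partition of $\clU$, then verify the inequality on Bernoulli distributions by a short convexity argument.

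For the reduction, I would set $A = \{u \in \clU : P(u) \geq Q(u)\}$ and define Bernoulli distributions $\bar P$ and $\bar Q$ on $\{0,1\}$ by $\bar P(1) = P(A)$ and $\bar Q(1) = Q(A)$. By the choice of $A$,
\[
||P - Q||_1 \;=\; 2\bigl(P(A) - Q(A)\bigr) \;=\; ||\bar P - \bar Q||_1,
\]
so the right-hand side of the target inequality is unchanged under this coarsening. For the left-hand side, I would apply the log-sum inequality to the families $\{P(u),Q(u)\}_{u \in A}$ and $\{P(u),Q(u)\}_{u \in A^c}$, obtaining
\[
\sum_{u \in A} P(u) \log \frac{P(u)}{Q(u)} \;\geq\; P(A) \log \frac{P(A)}{Q(A)}
\]
and the analogous bound over $A^c$; summing the two yields $\D(P || Q) \geq \D(\bar P || \bar Q)$. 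Hence it suffices to establish Pinsker's inequality for distributions on a two-element set.

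For the binary case, write $p = \bar P(1)$ and $q = \bar Q(1)$, and consider
\[
h(p) \;=\; p \log \frac{p}{q} + (1-p) \log \frac{1-p}{1-q} - 2(p-q)^2.
\]
I would first verify $h(q) = 0$ and $h'(q) = 0$ by direct differentiation. With $\log$ interpreted so that the stated constant $\tfrac12$ is tight (i.e.\ natural logarithm), one then computes
\[
h''(p) \;=\; \frac{1}{p(1-p)} - 4,
\]
which is nonnegative on $(0,1)$ because $p(1-p) \leq 1/4$ for all $p \in [0,1]$. Convexity of $h$, combined with $h(q) = h'(q) = 0$, forces $p = q$ to be a global minimum of value $0$; hence $h(p) \geq 0$ for every $p \in [0,1]$, which is the binary form of the inequality.

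The main technical point is the reduction step, where one must argue that collapsing $P,Q$ along the partition $\{A, A^c\}$ does not increase the KL divergence; this is the content of the log-sum inequality (equivalently, the data-processing inequality for KL divergence under a deterministic coarsening). Once that is in hand, the remainder is a short one-variable calculus exercise relying on nothing beyond the second-derivative test and the elementary bound $p(1-p) \leq 1/4$.
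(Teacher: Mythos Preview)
Your proof is correct and follows the standard textbook route to Pinsker's inequality: reduce to the two-point case via the log-sum (data-processing) inequality, then dispatch the Bernoulli case by a second-derivative computation using $p(1-p)\le 1/4$. The paper, however, does not prove this statement at all---Pinsker's inequality is listed in the information-theory appendix as a background fact without proof and is simply invoked later. So there is nothing to compare against; your argument supplies a proof the authors omitted as standard.
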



\section{Proof of~\cref{samedistn}}
\label{same-distn}
In this section we show that $\Pr[A_z(v,\clQ)] =\Pr[B_z(v,\clQ)]$.  To save writing, we fix $z \in \{0,1\}^t$ and $\clQ$ and let $A(v) := A_z(v, \clQ)$ be the 
event that $\P(\clB,Q)$ reaches node $v$ on input $z$, and $B(v):=B_z(v,\clQ)]$ be the event that $\clB$ reaches node $v$ under the distribution
$\gamma_z(\clQ)$.  Additionally, we write $\overline{(\mu_0, \mu_1)} = ((\mu_0^{(1)}, \mu_1^{(1)}), \ldots, (\mu_0^{(t)}, \mu_1^{(t)}))$ for a $t$-tuple of 
pairs of distributions.  In the following when we write $\E_{\overline{(\mu_0, \mu_1)} \sim \clQ^t}$ this expectation is taken with respect to drawing each 
$(\mu_0^{(i)}, \mu_1^{(i)})$ independently from $\clQ$.

Now notice that $\Pr[A(v)] = \E_{\overline{(\mu_0, \mu_1)}\sim \clQ^t} \Pr[A(v) \mid \overline{(\mu_0, \mu_1)}]$ and 
$\Pr[B(v)] = \E_{\overline{(\mu_0, \mu_1)} \sim \clQ^t} \Pr[B(v) \mid \overline{(\mu_0, \mu_1)}]$.  We prove by induction on $d(v)$, the depth of a node $v$, that
\begin{equation}
\label{eq:simul}
\Pr[A(v) \mid \overline{(\mu_0, \mu_1)}]=\Pr[B(v) \mid \overline{(\mu_0, \mu_1)}]
\end{equation}
for any $\overline{(\mu_0, \mu_1)}$.  This will give the claim. 

Towards the aim of showing \cref{eq:simul}, fix an arbitrary $\overline{(\mu_0, \mu_1)}$.

\begin{description}

\item[Base case:] $d(v)=1$, i.e.\ $v$ is the root of $\clB$. Thus $\Pr[A(v) \mid \overline{(\mu_0, \mu_1)}]=\Pr[B(v) \mid \overline{(\mu_0, \mu_1)}]=1$.

\item[Inductive step:] Assume that $d(v) \geq 2$, and that the statement is true for all vertices of depth at most $d(v)-1$. Since $d(v) \geq 2$, $v$ is not the root of $\clB$. Let $u=u^{(1)} \times \ldots \times u^{(t)}$ be the parent of 
$v$, and say variable $x_i^{(j)}$ is queried at $u$. Without loss of generality we assume that $v=u_0$. We split the proof into the following two cases.
\begin{itemize}
\item {\bf Case 1:} $\Pr_{x_i \sim \mu_{z_i}^{(i)}}[x_i^{(j)}=0 \mid x_i \in u^{(i)}] \leq \Pr_{x_i \sim \mu_{\overline{z}_i}^{(i)}}[x_i^{(j)}=0 \mid x_i \in u^{(i)}]$.

Conditioned on $A(u), \overline{(\mu_0,\mu_1)}$ and $\q_i=1$, the probability that $\clP$ reaches $v$ is $\Pr_{x_i \sim \mu_{z_i}^{(i)}}[x_i^{(j)}=0 \mid x_i \in u^{(i)}]$. Also, conditioned on $A(u),\overline{(\mu_0,\mu_1)}$ and 
$\q_i=0$ the probability that $\clP$ reaches $v$ is exactly equal to the probability that the real number $r$ sampled at $u$ lies in $[0, \Pr_{x_i \sim \mu_{z_i}^{(i)}}[x_i^{(j)}=0 \mid x_i \in u^{(i)}] ]$, which is equal to 
$\Pr_{x_i \sim \mu_{z_i}^{(i)}}[x_i^{(j)}=0 \mid x_i \in u^{(i)}]$. Thus,
\begin{align}
\Pr[A(v) \mid \overline{(\mu_0,\mu_1)}]&=\Pr[A(u) \mid \overline{(\mu_0,\mu_1)}] \cdot \Pr[A(v) \mid A(u), \overline{(\mu_0,\mu_1)}] \nonumber \\
&=\Pr[A(u) \mid \overline{(\mu_0,\mu_1)}] \cdot \Pr_{x_i \sim \mu_{z_i}^{(i)}}[x_i^{(j)}=0 \mid x_i \in u^{(i)}]. \label{c1:one}
\end{align}
Now condition on $B(u)$ and $\overline{(\mu_0,\mu_1)}$. The probability that $\clB$ reaches $v$ is exactly equal to the probability that $x_i^{(j)}=0$ when $x$ is sampled according to the distribution 
$\gamma_z(\overline{(\mu_0,\mu_1)})$ conditioned on the event that $x \in u$. Note that in the distribution $\gamma_z(\overline{(\mu_0,\mu_1)})$, the $x_k$'s are independently distributed. Thus,
\begin{align}
\Pr[B(v) \mid \overline{(\mu_0,\mu_1)}]&=\Pr[B(u) \mid \overline{(\mu_0,\mu_1)}] \cdot \Pr[B(v) \mid B(u),\overline{(\mu_0,\mu_1)}] \nonumber \\
&=\Pr[B(u) \mid \overline{(\mu_0,\mu_1)}] \cdot \Pr_{x_i \sim \mu_{z_i}^i}[x_i^{(j)}=0 \mid x_i \in u^{(i)}]. \label{c1:two}
\end{align}
By the inductive hypothesis, $\Pr[A(u) \mid \overline{(\mu_0,\mu_1)}]=\Pr[B(u) \mid \overline{(\mu_0,\mu_1)}]$. It follows from~(\ref{c1:one}) and~(\ref{c1:two}) that 
$\Pr[A(v) \mid \overline{(\mu_0,\mu_1)}]=\Pr[B(v) \mid \overline{(\mu_0,\mu_1)}]$.

\item {\bf Case 2:} $\Pr_{x_i \sim \mu_{z_i}^{(i)}}[x_i^{(j)}=0 \mid x_i \in u^{(i)}] > \Pr_{x_i \sim \mu_{\overline{z}_i^{(i)}}}[x_i^{(j)}=0 \mid x_i \in u^{(i)}]$.
Let $v'=u_1$. By an argument similar to Case 1, we have that 
\begin{align}
\Pr[A(v') \mid \overline{(\mu_0,\mu_1)}]=\Pr[B(v') \overline{(\mu_0,\mu_1)}]. \label{c2}
\end{align}
Now,
\begin{align}
\Pr[A(v) \mid \overline{(\mu_0,\mu_1)}] &=\Pr[A(u) \mid \overline{(\mu_0,\mu_1)}] - \Pr[A(v') \mid \overline{(\mu_0,\mu_1)}] \nonumber \\
&= \Pr[B(u) \mid \overline{(\mu_0,\mu_1)}] - \Pr[A(v') \mid \overline{(\mu_0,\mu_1)}]  \mbox{\ \ \ \ \ (By inductive hypothesis)} \nonumber \\
&= \Pr[B(u) \mid \overline{(\mu_0,\mu_1)}] - \Pr[B(v') \mid \overline{(\mu_0,\mu_1)}] \mbox{\ \ \ \ \ \ (By (\ref{c2}))} \nonumber \\
&= \Pr[B(v) \mid \overline{(\mu_0,\mu_1)}] \enspace. \nonumber
\end{align}
\end{itemize}
\end{description}

\section{Proof of \cref{dp}}
\label{dirprod}
Towards a contradiction, assume that 
\begin{equation}
\sum_{i=1}^t \E[\clN_i(\clB, z,\clQ)] < t \cdot \min_{\clC} \E_{(\mu_0,\mu_1) \sim \clQ} [\chi(\clC, (\mu_0,\mu_1))] \enspace .
\label{hyp}
\end{equation}
By averaging, there exists a $k$ such that $\E[\clN_k(\clB, z, \clQ)]] < \min_{\clC} \E_{(\mu_0,\mu_1) \sim \clQ} [\chi(\clC, (\mu_0,\mu_1))]$. 
Let us focus on the expression on the left hand side.  Recall that there are two kinds of randomness in this expectation, the choice of the 
random numbers $r$ in $\clP(\clB, \clQ)$ and the choice of $\overline{(\mu_0, \mu_1)} \sim \clQ^t$.  We separate out these two as follows:
\begin{align*}
\E[\clN_k(\clB, z, \clQ)] &= \E_{\overline{(\mu_0, \mu_1)} \sim \clQ^t} \E_r[\clN_k(\clB, z, \overline{(\mu_0, \mu_1)}] \\
&= \E_r \E_{\overline{(\mu_0, \mu_1)} \sim \clQ^t} [\clN_k(\clB, z, \overline{(\mu_0, \mu_1)}] \\
&=\E_r \E_{\overline{(\mu_0, \mu_1)}^{-(k)} \sim \clQ^{t-1}} \E_{(\mu_0^{(k)}, \mu_1^{(k)}) \sim \clQ} [\clN_k(\clB, z, \overline{(\mu_0, \mu_1)}] \enspace,
\end{align*}
where $\overline{(\mu_0, \mu_1)}^{-(k)}$ is a $t-1$-tuple of pairs of distributions without the $k^{th}$ coordinate.  This further means that there is a fixing of the randomness $r$ and 
the $(t-1)$-tuple of pairs distributions $\overline{(\mu_0, \mu_1)}^{-(k)}$ such that 
$\E_{(\mu_0^{(k)}, \mu_1^{(k)}) \sim \clQ} [\clN_k(\clB, z, \overline{(\mu_0, \mu_1)}] < \min_{\clC} \E_{(\mu_0,\mu_1) \sim \clQ} [\chi(\clC, (\mu_0,\mu_1))]$.
With such a fixed setting, however,  $\clP(\clB, \clQ)$ creates a query process equivalent to $\clP(\clB',\Q)$ run on $z_i \in \{0,1\}$ for a deterministic query algorithm $\clB'$ running on inputs from $\{0,1\}^m$ and 
such that $(\clB', \clQ)$ is $\full$.  The distribution $\E_{(\mu_0, \mu_1) \sim \clQ}[\clN_1(\clB',1, \mu_0,\mu_1)]$ is the same as that as 
$\E_{(\mu_0^{(k)}, \mu_1^{(k)}) \sim \clQ}[\clN_k(\clB, z, (\overline{(\mu_0, \mu_1)}^{-(k)}, (\mu_0^{(k)}, \mu_1^{(k)}))]$
conditioned on the earlier fixing of $\overline{(\mu_0, \mu_1)}^{-(k)}$ and the randomness $r$.  Thus $\E_{(\mu_0, \mu_1) \sim \clQ} [\chi(\clB', (\mu_0, \mu_1))] < \min_{\clC} \E_{(\mu_0,\mu_1) \sim \clQ} [\chi(\clC, (\mu_0,\mu_1))]$, a contradiction.
\section{Proof of \cref{runtime}}
\label{run-time}
Let $v_k$ be the random vertex at which the $\clB$ makes its $k$-th query when it is run on $x$; If $\clB$ terminates before making $k$ queries, define $v_k:=\bot$. Let $\clE$ denote the event that in at most $10d^2$ queries, the computation of $\clB$ does not reach a vertex $v$ such that $\Pr_{x \sim \mu}[g(x)=0 \mid x \in v]\cdot\Pr_{x \sim \mu}[g(x)=1 \mid x \in v] \leq \frac{1}{9}$. Since $\clB$ computes $g$ on the supports of $\mu_0$ and $\mu_1$, therefore if $\clE$ happens then the computation of $\clB$ does not reach a leaf within $10d^2$ queries. We split the proof into the following two cases.
\begin{description}
\item [Case $1$:] $\Pr[\clE] < \frac{3}{4}$.

Condition on the event that the computation reaches a vertex $v$ of $\clB$ for which $\Pr_{x \sim \mu}[g(x)=0 \mid x \in v]\cdot\Pr_{x \sim \mu}[g(x)=1 \mid x \in v] \leq \frac{1}{9}$ holds. In this case, one of $\Pr_{x \sim \mu}[g(x)=0 \mid x \in v]$ and $\Pr_{x \sim \mu}[g(x)=1 \mid x \in v]$ is at most $\dr13$. Hence,  $|\Pr_{x \sim \mu}[g(x)=0 \mid x \in v]-\Pr_{x \sim \mu}[g(x)=1 \mid x \in v]| \geq \dr13$. Let $w$ be the random leaf of the subtree of $\clB'$ rooted at $v$ at which the computation ends. The probability that $\clB'$ errs is at most
\begin{align*}
&\E_w\lf[\frac{1}{2}-\frac{1}{2}\lf|\Pr_{x \sim \mu}[g(x)=0 \mid x \in w]-\Pr_{x \sim \mu}[g(x)=1 \mid x \in w]\rt|\rt] \\
& \leq \frac{1}{2}-\frac{1}{2} \lf|\E_w\lf[\Pr_{x \sim \mu}[g(x)=0 \mid x \in w]\rt]-\E_w\lf[\Pr_{x \sim \mu}[g(x)=1 \mid x \in w]\rt]\rt| \\
& \qquad \qquad \qquad \qquad \mbox{\ \ \ \ (By Jensen's inequality and linearity of expectation)} \\
&=\frac{1}{2}-\frac{1}{2}\lf|\Pr_{x \sim \mu}[g(x)=0 \mid x \in v]-\Pr_{x \sim \mu}[g(x)=1 \mid x \in v]\rt| \leq \frac{1}{3}.
\end{align*}

Thus we have shown that conditioned on $\overline{\clE}$ the probability that $\clB'$ errs is at most $\frac{1}{3}$. Hence, the probability that $\clB'$ errs is at most $\frac{1}{4}\cdot \frac{1}{3}+\frac{3}{4}\cdot\frac{1}{2} = \frac{11}{24}<\frac{1}{2}$.
\item[Case $2$:] $\Pr[\clE] \geq \frac{3}{4}$.

Let $a_j:=(i_j, x_{i_j})$ be the tuple formed by the index and value of the random input variable queried at the $j$-th step by $\clB'$; if $\clB'$ terminates before making $j$ queries (i.e. $v_j = \bot)$ or $v_j$ is a leaf of $\clB$, then define $i_j, x_{i_j}:=\bot$ . Note that the sequence $(a_1, \ldots, a_{10d^2})$ uniquely specifies a leaf of $\clB'$, and vice versa. Let $\I(\cdot, \cdot)$ denote the \emph{mutual information}. (See Appendix~\ref{infotheory} for the definitions and results from information theory used in this work). We prove the following claim in Appendix~\ref{key-inf}.
\begin{claim}
\label{keyinf}
If $\Pr[\clE] \geq \frac{3}{4}$, then $\I(a_1, \ldots, a_{10d^2}:g(x)) \geq \frac{1}{40}$.
\end{claim}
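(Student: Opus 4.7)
The plan is to apply the chain rule of mutual information to decompose $\I(a_1, \ldots, a_{10d^2}:g(x)) = \sum_{k=1}^{10d^2} \I(a_k : g(x) \mid a_1, \ldots, a_{k-1})$, and then bound each summand by a Pinsker-type inequality formalized as Claim~\ref{mutin}. Since the prefix $a_{<k}$ determines the vertex $v_k$ of $\clB$ reached after $k-1$ queries, the $k$-th summand equals $\E_{v_k}[\I(x_{q(v_k)} : g(x) \mid x \in v_k)]$. A direct calculation shows that the $L_1$ distance between the joint distribution of $(x_{q(v)}, g(x))$ and the product of their marginals, both conditioned on $x \in v$, equals $4 q_0(v) q_1(v) \Delta(v)$; Pinsker's inequality then yields $\I(x_{q(v)} : g(x) \mid x \in v) \geq C \cdot q_0(v)^2 q_1(v)^2 \Delta(v)^2$ for an absolute constant $C$, which is precisely Claim~\ref{mutin}.

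Next I would use the hypothesis $\Pr[\clE] \geq 3/4$ in two ways. First, because the root is vacuously visited by every input, $\clE$ forces $\alpha(1-\alpha) > 1/9$ where $\alpha = \Pr_\mu[g(x)=0]$, so both prior probabilities of $g(x)$ are bounded away from $0$ by an absolute constant. Second, on the event $\clE$ every visited vertex $v_k$ with $k \leq 10d^2$ satisfies $q_0(v_k) q_1(v_k) > 1/9$, so on this event the factor $q_0^2 q_1^2$ in the Pinsker bound can be replaced by $1/81$. After collecting constants, the task reduces to lower bounding $\E \bigl[\sum_{k=1}^{10d^2} \Delta(v_k)^2 \, \Id[\clE]\bigr]$ by an absolute positive constant.

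For this I would exploit the defining identities of the optimal tree $\clB$ for $(\mu_0,\mu_1)$ from \cref{sec:conflict}: $\sum_v R(v) \Delta(v) = 1$ (fullness) and $\sum_v R(v) d_T(v) \Delta(v) \leq d$ (optimality). Markov's inequality applied to the latter shows that the mass $R(v) \Delta(v)$ carried by vertices of depth more than $10d^2$ is at most $1/(10d)$, so at least $9/10$ of the walk mass lives within the truncation. The relation $\Pr_\mu[x \in v] \geq c\, R(v)$, which holds because $\alpha$ and $1-\alpha$ are bounded below by constants, lets me pass from expectations under the actual algorithm to the walk quantities appearing in these identities.

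The hard part will be bridging a quantitative gap: a naive Cauchy-Schwarz on $\sum_v R(v) \Delta(v)^2$ combined with $\sum_v R(v) \leq d$ only yields $\Omega(1/d)$, which is insufficient. To recover an absolute constant, I would use that the actual algorithm on $x \sim \mu$, unlike the coupled walk of \cref{sec:conflict}, continues past the moment when the walk terminates; on the interval $[1, 10 d^2]$ the actual algorithm effectively runs on the order of $d$ successive walk-phases, each contributing a unit of total $\Delta$ along its own path, so that Cauchy-Schwarz applied per phase gives $\sum_k \Delta(v_k)^2 = \Omega(1)$ along a typical trajectory. Tracking the constants carefully through this walk-restart argument, and verifying that the thresholds $3/4$, $1/9$ and $10 d^2$ combine to give precisely $1/40$, will be the most delicate step.
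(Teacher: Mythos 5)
Your decomposition matches the paper's exactly: chain rule over the $10d^2$ query outcomes, Claim~\ref{mutin} (Pinsker) to lower-bound each increment by $\Omega\bigl(q_0(v)^2 q_1(v)^2 \Delta(v)^2\bigr)$, the event $\clE$ to bound $q_0 q_1$ away from $0$ on every visited vertex, and Jensen/Cauchy–Schwarz to convert a linear lower bound on $\sum_k \E[\Delta(v_k)\mid\clE]$ into a constant lower bound on the sum of squares. Your ``walk-restart'' intuition is also precisely what the paper does in Claims~\ref{sumofdelta} and~\ref{sumofdelta2}: chop the $10d^2$ depths into $d$ blocks of length $10d$, argue each block contributes $\Omega(1)$ to $\sum_k \E[\Delta(v_k)\mid\clE]$, and thus the total is $\Omega(d)$.

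The genuine gap is in the step you yourself flag as ``most delicate.'' For a block starting at depth $10id+1$ to contribute $\Omega(1)$ of expected $\Delta$, you need to know that the walk restarted at the current vertex $w$ still terminates within the next $10d$ steps with constant probability. This requires that the subtree of $\clB$ rooted at $w$ be an \emph{optimal} conflict-complexity tree for the conditional pair $(\mu_0\mid w,\ \mu_1\mid w)$, so that its expected termination depth is again at most $\chi(g)=d$ and Markov applies afresh. This is the paper's Observation~\ref{recursive}, proved by a short exchange argument from the optimality of the root tree. Without it, there is no reason the deep subtree should have a short walk; the fullness and optimality identities you cite ($\sum_v R(v)\Delta(v)=1$, $\sum_v R(v) d(v) \Delta(v) \le d$) are statements about the root tree only and do not control what happens after the first walk terminates. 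A second, smaller issue: you phrase the conclusion of the per-phase Cauchy–Schwarz as holding ``along a typical trajectory,'' i.e.\ pointwise, but the per-block bound $\sum_{k\in\text{block}}\Delta(v_k)\ge\Omega(1)$ is only true in expectation, not for a fixed path. The paper therefore works entirely in expectation: it first establishes $\sum_{k=1}^{10d^2}\E[\Delta(v_k)\mid\clE]\ge\frac{13d}{20}$, then applies Jensen termwise to get $\E[\Delta^2]\ge(\E[\Delta])^2$, and finally Cauchy–Schwarz once globally over the $10d^2$ terms to reach $\frac{1}{40}$.
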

Thus if $\Pr[\clE] \geq \frac{3}{4}$, Claim~\ref{keyinf} implies that
\begin{align}
\Hen(g(x) \mid a_1, \ldots a_{10d^2}) \leq 1-\frac{1}{40}=\frac{39}{40}. \label{enbound}
\end{align}
Let $\clL$ be the set of leaves $\ell$ of $\clB'$ such that $\Hen(g(x) \mid x \in \ell) \leq \frac{79}{80}$. For each $\ell \in \clL$, $\min_b \Pr_{x \sim \mu}[g(x)=b \mid x \in \ell] \leq \frac{9}{20}$. Conditioned on $(a_1, \ldots, a_{10d^2}) \in \clL$, the probability that $\clB'$ errs is at most $\frac{9}{20}$. By \emph{Markov's inequality} and (\ref{enbound}), it follows that $\Pr[(a_1, \ldots, a_{10d^2}) \in \clL] \geq \frac{1}{79}$. Thus $\clB'$ errs with probability at most $\frac{1}{79}\cdot \frac{9}{20}+\frac{18}{19}\cdot \frac{1}{2}<\frac{1}{2}$.
\end{description}\subsection{Proof of Claim~\ref{keyinf}}
\label{key-inf}
Let $v$ be a vertex in $\clB$. Define $\Delta(v)$ as follows.\footnote{Recall that we mentioned $\Delta(v)$ in Section~\ref{idea}.}
\[\Delta(v):=\lf\{\begin{array}{lll}|\Pr_{x \sim \mu_0} [x_i=0 \mid x \in v]-\Pr_{x \sim \mu_1} [x_i=0 \mid x \in v]| &  \mbox{if $v \neq \bot$ and $\Pr_{x \sim \mu_b} [x \in v] >0$} \\ &\mbox{for $b \in \{0,1\}$,} \\ \ & \  \\ 1 &  \mbox{otherwise.} \end{array}\rt.\]
The following claim shows that if $\Delta(v)$ is large, the query outcome of $v$ contains significant information about $g(x)$.
\begin{claim}
\label{mutin}
Let $v$ be a vertex in $\clB$. Let variable $x_i$ be queried at $v$. Then,
\[\I(g(x) : x_i \mid x \in v)  \geq 8 \lf(\Pr_{x \sim \mu}[g(x)=0 \mid x \in v] \cdot \Pr_{x \sim \mu}[g(x)=1 \mid x \in v] \cdot \Delta(v)\rt)^2.\]
\end{claim}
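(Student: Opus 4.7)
The plan is to apply Pinsker's inequality (Theorem~\ref{pinskers}) after rewriting the mutual information as a Kullback--Leibler divergence. All probabilities below are taken conditional on $x\in v$. By Fact~\ref{equivalence},
\[
\I(g(x):x_i \mid x\in v) \;=\; \Div\bigl((g(x),x_i)\,\|\,g(x)\otimes x_i\bigr),
\]
and Pinsker's inequality then gives
\[
\I(g(x):x_i \mid x\in v) \;\ge\; \tfrac12\,\bigl\|(g(x),x_i) - g(x)\otimes x_i\bigr\|_1^{\,2}.
\]
So it suffices to show that the $L_1$ distance inside the square equals $4\cdot\Pr_{\mu}[g(x)=0\mid x\in v]\cdot\Pr_{\mu}[g(x)=1\mid x\in v]\cdot\Delta(v)$; squaring and dividing by $2$ then produces the factor $8$.

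The main step is the explicit computation of this $L_1$ distance. Fix $(z_1,z_2)\in\01^2$ and set $\alpha:=\Pr_{\mu}[g(x)=0\mid x\in v]$. The joint distribution satisfies
\[
\Pr[(g(x),x_i)=(z_1,z_2)\mid x\in v] \;=\; \Pr[g(x)=z_1\mid x\in v]\cdot\Pr[x_i=z_2\mid g(x)=z_1,\,x\in v],
\]
while the product distribution replaces the second factor by $\Pr[x_i=z_2\mid x\in v]$. Expanding the latter via the law of total probability over $g(x)$, the difference factors as $\Pr[g(x)=z_1\mid x\in v]\cdot\Pr[g(x)=\overline{z_1}\mid x\in v]$ times $\bigl|\Pr[x_i=z_2\mid g(x)=0,\,x\in v]-\Pr[x_i=z_2\mid g(x)=1,\,x\in v]\bigr|$. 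The last quantity is precisely $\Delta(v)$ for both $z_2=0$ and $z_2=1$ (flipping $z_2$ only changes the sign inside the absolute value), so each of the four cells of the $2\times2$ joint contributes exactly $\alpha(1-\alpha)\Delta(v)$ and hence $\|\cdot\|_1 = 4\alpha(1-\alpha)\Delta(v)$.

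I don't expect any serious obstacle; this is a standard ``Pinsker plus explicit two-by-two computation'' argument, and the only subtlety is a boundary case: if $v$ is reached under only one of $\mu_0,\mu_1$ (so that one of the conditional probabilities defining $\Delta(v)$ is undefined), the definition of $\Delta(v)$ in the paper sets $\Delta(v)=1$, but in this case $\Pr_{\mu}[g(x)=0\mid x\in v]\cdot\Pr_{\mu}[g(x)=1\mid x\in v]=0$ and the claimed inequality is trivial. Combining the $L_1$ calculation with Pinsker's inequality yields
\[
\I(g(x):x_i \mid x\in v) \;\ge\; \tfrac12\bigl(4\alpha(1-\alpha)\Delta(v)\bigr)^2 \;=\; 8\bigl(\alpha(1-\alpha)\Delta(v)\bigr)^2,
\]
which is exactly the claimed bound.
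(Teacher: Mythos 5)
Your proof is correct and follows essentially the same route as the paper: rewrite the conditional mutual information as a KL divergence, apply Pinsker's inequality, and compute the $L_1$ distance cell by cell to get $4\alpha(1-\alpha)\Delta(v)$. The remark on the degenerate case where $v$ is reached under only one of $\mu_0,\mu_1$ (so $\Delta(v)=1$ by convention but the product of conditional probabilities vanishes) is a small clarification the paper leaves implicit, but it does not change the argument.
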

\begin{proof}
Define $b:=g(x)$. Condition on the event $x \in v$. Recall from Appendix~\ref{infotheory} that $(b \otimes x_i)$ be the distribution over pairs of bits, where the bits are distributed independently according to the distributions of $b$ and $x_i$ respectively. Fact~\ref{equivalence} implies that $\I(b : x_i)=\Div((b,x_i) || (b \otimes x_i))$\footnote{See Appendix~\ref{infotheory} for definition of Kullback-Leibler divergence and $\mathsf{L}_1$-distance.}. Now, \emph{Pinsker's inequality} (Theorem~\ref{pinskers}) implies that 
\begin{align}
\label{fst}\Div((b,x_i) || (b \otimes x_i)) \geq \frac{1}{2} ||(b,x_i)-(b \otimes x_i)||^2_1.
\end{align}
Next, we bound $|(b,x_i)-(b \otimes x_i)||_1$. To this end, we fix bits $z_1, z_2 \in \{0,1\}$, and bound $|\Pr[(b,x_i)=(z_1,z_2)]-\Pr[(b \otimes x_i)=(z_1,z_2)]|$. We have that,
\begin{align}
\label{t1} \Pr[(b,x_i)=(z_1,z_2)]&=\Pr[b=z_1]\Pr[x_i=z_2 \mid b = z_1].
\end{align}
Now,
\begin{align}
\label{t2} \Pr[(b \otimes x_i)=(z_1,z_2)]&=\Pr[b=z_1]\Pr[x_i=z_2] \nonumber \\
&=\Pr[b=z_1](\Pr[b=z_1]\Pr[x_i=z_2 \mid b=z_1]+&\nonumber \\
& \qquad \qquad \qquad \qquad \qquad \Pr[b=\overline{z_1}]\Pr[x_i=z_2 \mid b=\overline{z_1}]).
\end{align}
Taking the absolute difference of~(\ref{t2}) and~(\ref{t1}) we have that,
\begin{align}
&|\Pr[(b,x_i)=(z_1,z_2)]-\Pr[(b \otimes x_i)=(z_1,z_2)]| \nonumber \\
&=\Pr[b=z_1] \cdot \Pr[b=\overline{z_1}] \cdot \Delta(v)=\Pr[b=0] \cdot \Pr[b=1] \cdot \Delta(v)\label{fin}
\end{align}
The Claim follows by adding~(\ref{fin}) over $z_1, z_2$ and using~(\ref{fst}).
\end{proof}
Let  $\clB$ be run on a random input $x$ sampled from $\mu$. The next claim proves a lower bound on the expected sum of $\Delta(v)$ for the random vertices $v$ in the transcript of $\clB$. Recall from Appendix~\ref{run-time} that $v_k$ is the random vertex at which the $k$-th query is made; If $\clB$ terminates before making $k$ queries, define $v_k:=\bot$.
\begin{claim}
\label{sumofdelta2}
Let $c$ be any positive integer. Then,
\[\sum_{k=1}^{10dc} \E[\Delta(v_k) \mid \clE]  \geq \frac{13c}{20}.\]
\end{claim}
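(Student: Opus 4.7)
The plan is to leverage the conflict-complexity bound $\chi(\clB,\mu_0,\mu_1)\le d$ together with the random-walk interpretation of $\chi$ to show that along the actual execution of $\clB$ on $x\sim\mu$, the running sum $\sum_k \Delta(v_k)$ grows linearly in the number of queries. First I would rewrite the target quantity by linearity:
\[
\sum_{k=1}^{10dc}\E[\Delta(v_k)\mid\clE]=\frac{1}{\Pr[\clE]}\sum_{v\,:\,d_T(v)\le 10dc} P_\mu(v)\,\Delta(v)\cdot\Pr[\clE\mid \clB \text{ reaches } v],
\]
where $P_\mu(v)$ is the $\mu$-probability that $\clB$ reaches $v$. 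Since $\Pr[\clE]\ge 3/4$, it essentially suffices to lower-bound $\sum_{v:d_T(v)\le 10dc}P_\mu(v)\Delta(v)$ by $\Omega(c)$.

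The naive estimate $P_\mu(v)\ge R(v)$ (which follows from $P_\mu=\alpha P_0+(1-\alpha)P_1$ with each $P_b\ge R$), combined with the full-tree identity $\sum_v R(v)\Delta(v)=1$, only yields a constant lower bound and misses the factor of $c$. To extract this factor I would use a renewal-style argument: since $\chi(\clB,\mu_0,\mu_1)\le d$, the underlying conflict walk terminates within depth $d$ in expectation, so a trajectory of length $10dc$ accommodates $\Theta(c)$ disjoint ``walk-completions,'' each contributing expected $\Delta$-mass $1$. Implementing this via a sequence of stopping times, with \cref{dp} applied to the subtree entered after each potential termination, should yield the unconditional bound
\[
\sum_{v\,:\,d_T(v)\le 10dc} P_\mu(v)\Delta(v)\ \ge\ \Omega(c).
\]

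To convert this unconditional bound into the required conditional statement on $\clE$, I would control the contribution from $\clE^c$. Once the actual execution reaches a vertex $v$ with $\alpha_v(1-\alpha_v)\le 1/9$, the distribution $\mu\mid x\in v$ is already nearly resolved, so the subtree rooted at $v$ has small conflict complexity and the remaining $\Delta$-mass along the path is bounded by a constant in expectation. Combined with $\Pr[\clE]\ge 3/4$, this ensures that restricting to $\clE$ loses at most a constant fraction of the $\Omega(c)$ mass, and careful tuning of constants then gives the desired $13c/20$.

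The main obstacle is making the renewal step rigorous: subtree conflict complexity does not a priori inherit the $\le d$ bound from the whole tree, so one cannot simply iterate a one-step walk argument. I expect the cleanest route to be a potential-function approach that tracks accumulated $\Delta$-mass against walk-survival probability, or a careful application of \cref{dp} to the actual run viewed as a composition of nested walks; in either case the delicate point is propagating the linear-in-$c$ lower bound through the conditioning on $\clE$ without losing the factor $c$.
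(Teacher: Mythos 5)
Your high-level plan (split the $10dc$ queries into $c$ blocks of $10d$ and argue each block carries constant expected $\Delta$-mass) is the same skeleton as the paper's proof, but you leave the decisive step open, and the obstacle you name is in fact where the entire content of the claim lies. You assert that ``subtree conflict complexity does not a priori inherit the $\le d$ bound from the whole tree'' and then only gesture at a potential-function argument or at \cref{dp}. The inheritance is exactly what the paper establishes, and it follows from two short observations you do not make: (i) $d=\chi(g)$ is a \emph{maximum over all pairs} of distributions supported on $g^{-1}(0)$ and $g^{-1}(1)$, and conditioning $\mu_0,\mu_1$ on a subcube $w$ keeps the supports inside $g^{-1}(0)$ and $g^{-1}(1)$, so $\min_{\clC}\chi(\clC,(\mu_0\mid w,\mu_1\mid w))\le \chi(g)=d$; and (ii) by an exchange argument (Observation~\ref{recursive}), the subtree of the \emph{optimal} tree $\clB$ rooted at $w$ attains this minimum for the conditioned pair, since otherwise replacing that subtree would contradict optimality of $\clB$. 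With these two facts, Claim~\ref{sumofdelta} (Markov on $\E[\clN_1]\le d$, a union bound turning termination probability of the conflict walk into $\sum_k\Delta$, and the crude conditioning step $\Pr[A\mid\clE]\ge\Pr[A]-\Pr[\overline{\clE}]\ge 9/10-1/4$) applies verbatim to every depth-$10d$ block started at a vertex consistent with $\clE$, and summing over $i=0,\dots,c-1$ gives $13c/20$. Without (i) and (ii) your ``renewal'' step is not a proof, and the tools you propose do not supply it: \cref{dp} is a direct sum statement over $t$ \emph{independent} coordinates and says nothing about nested subtrees of a single-copy optimal tree.

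A secondary issue: your route for handling $\clE$ (bound the residual $\Delta$-mass after the walk leaves $\clE$ by a constant, then argue only a constant fraction of the $\Omega(c)$ mass is lost) would at best recover $\Omega(c)$ with an unspecified constant, whereas the claim as used later needs the specific constant $13/20$ (it feeds the $\ge 1/40$ information bound in Claim~\ref{keyinf}). The paper avoids this by restarting the argument at each block from a vertex consistent with $\clE$ and absorbing $\Pr[\overline{\clE}]\le 1/4$ inside each block, so the constant is explicit. As written, your argument establishes neither the factor $c$ rigorously nor the stated constant.
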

To prove Claim~\ref{sumofdelta2} we need the following claim.
\begin{claim}
\label{sumofdelta}
\[\sum_{k=1}^{10d} \E[\Delta(v_k) \mid \clE]  \geq \frac{13}{20}.\]
\end{claim}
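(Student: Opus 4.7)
The plan has two stages: first establish an unconditional lower bound $\E[\sum_{k=1}^{10d} \Delta(v_k)] \geq 9/10$, then transfer it to the conditional expectation given $\clE$.

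For the unconditional bound, I would apply Markov's inequality to the conflict walk on $\clB$. Since $\clB$ is conflict-optimal for $(\mu_0,\mu_1)$, we have $\sum_v d_T(v)R(v)\Delta(v)=\chi(\clB,(\mu_0,\mu_1))\leq d$, and by fullness $\sum_v R(v)\Delta(v)=1$; together these give $\sum_{v:d_T(v)\leq 10d} R(v)\Delta(v)\geq 9/10$. To translate this from the conflict walk to the algorithm walk on $x\sim\mu$, I would use the inequality $R(v)\leq \Pr_\mu[x\in v]$. This follows by first showing $R(v)\leq \min\{\Pr_{\mu_0}[x\in v],\Pr_{\mu_1}[x\in v]\}$ via induction on $d_T(v)$ (each conflict-walk transition probability is the minimum of the two distributional probabilities, and $\min(a,b)\min(c,d)\leq\min(ac,bd)$), and then invoking $\min(a,b)\leq \lambda a+(1-\lambda)b$ with $\lambda=\Pr_\mu[g(x)=0]$. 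Substituting yields $\sum_{k=1}^{10d}\E[\Delta(v_k)]\geq \sum_{v:d_T(v)\leq 10d}\Pr_\mu[x\in v]\Delta(v)\geq 9/10$.

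For the transfer to the conditional expectation, since $\Pr[\clE]\geq 3/4$ we have $\Pr[\bar\clE]\leq 1/4$, and we can write
\[
\Pr[\clE]\cdot \E\!\left[\sum_{k=1}^{10d}\Delta(v_k)\;\middle|\;\clE\right] \;=\; \E\!\left[\sum_{k=1}^{10d}\Delta(v_k)\right] \;-\; \E\!\left[\sum_{k=1}^{10d}\Delta(v_k)\cdot\mathbbm{1}[\bar\clE]\right].
\]
The goal thus reduces to showing $\E[\sum_{k=1}^{10d}\Delta(v_k)\cdot\mathbbm{1}[\bar\clE]]\leq 1/4$; combined with $\Pr[\clE]\leq 1$ and the $9/10$ bound, this yields $\E[\sum\mid\clE]\geq 9/10-1/4=13/20$.

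\textbf{Main obstacle.} Bounding $\E[\sum\cdot\mathbbm{1}[\bar\clE]]$ by a constant independent of $d$ is the technical heart of the proof, as the naive $10d\cdot\Pr[\bar\clE]$ bound is far too weak. My plan is to exploit the fact that under $\bar\clE$ the algorithm's path enters a ``decided'' vertex $u$ (with $\alpha_u(1-\alpha_u)\leq 1/9$) before depth $10d^2$. At such a $u$ the skewness of the posterior forces $\min\{\Pr_{\mu_0}[x\in u],\Pr_{\mu_1}[x\in u]\}$ to be significantly smaller than $\Pr_\mu[x\in u]$, so the slack in the inequality $R(v)\leq \Pr_\mu[x\in v]$ is substantial on the subtree rooted at the first decided vertex. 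Accounting separately for the ancestors of this vertex (on which the bound is essentially sharp) and its descendants (on which the slack absorbs the excess), the $\bar\clE$-contribution to $\sum (\Pr_\mu[x\in v]-R(v))\Delta(v)$ can be pinned to at most $\Pr[\bar\clE]\leq 1/4$, giving the required bound.
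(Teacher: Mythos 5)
Your Stage 1 is a valid alternative derivation of the unconditional bound $\E[\sum_{k=1}^{10d}\Delta(v_k)]\geq 9/10$: the inequality $R(v)\leq\min\{\Pr_{\mu_0}[x\in v],\Pr_{\mu_1}[x\in v]\}\leq\Pr_\mu[x\in v]$ (the first step by induction, the second by convexity) combined with the Markov-type argument applied to $\chi(\clB,(\mu_0,\mu_1))=\sum_v d_T(v)R(v)\Delta(v)\leq d$ and fullness $\sum_v R(v)\Delta(v)=1$ does give the claimed $9/10$. The paper instead runs the process $\clP(\clB,\mu_0,\mu_1)$ on a random one-bit input, invokes the Simulation Theorem (\cref{samedistn}) to couple the resulting trajectory $(u_k)$ with the trajectory $(v_k)$ of $\clB$ on $x\sim\mu$, and applies Markov to $\E[\clN_1]\leq d$; the two routes arrive at essentially the same quantity, since $\sum_{v:d_T(v)\leq 10d}R(v)\Delta(v)$ is precisely the probability that the conflict walk terminates within $10d$ steps.

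Your Stage 2, however, has a genuine gap. You reduce to showing $\E[\sum_{k\leq 10d}\Delta(v_k)\cdot\mathbbm{1}[\overline{\clE}]]\leq 1/4$ and offer only a sketch of a slack-accounting argument. I do not see how this can be made to work: (a) the random variable $\sum_{k\leq 10d}\Delta(v_k)$ is not almost surely bounded by any universal constant (it can approach $10d$ if the walk lingers in a high-$\Delta$ region without the conflict triggering), so the bare bound $\Pr[\overline{\clE}]\leq 1/4$ does not control the restricted expectation; and (b) the "first decided vertex" that $\overline{\clE}$ guarantees may occur anywhere up to depth $10d^2$, far beyond the horizon $10d$ of your sum, so the posterior skewness there has no reason to constrain the first $10d$ terms.

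The paper's proof is designed exactly to sidestep this difficulty. Rather than transferring the real-valued expectation $\E[\sum_k\Delta(v_k)]$ through the conditioning on $\clE$, it lower-bounds $\sum_{k=1}^{10d}\E[\Delta(u_k)\mid\clE]$ by the \emph{conditional probability} that $\q_1$ is set within the first $10d$ iterations of $\clP(\clB,\mu_0,\mu_1)$ (a union-bound step). Since that quantity is a probability of an event --- hence automatically in $[0,1]$ --- the elementary inequality $\Pr[A\mid\clE]\geq\Pr[A]-\Pr[\overline{\clE}]$ transfers the unconditional $9/10$ to the conditional setting with only the $1/4$ loss, giving $13/20$. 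Passing through this bounded probabilistic intermediary, rather than the unbounded sum of $\Delta$'s, is the key idea your plan is missing. Your Stage 1 can be kept: simply observe that the quantity you bounded, $\sum_{v:d_T(v)\leq 10d}R(v)\Delta(v)\geq 9/10$, is the probability that the conflict walk terminates within $10d$ steps, and then apply the paper's union-bound-and-conditioning step instead of attempting to control $\E[\sum\cdot\mathbbm{1}[\overline{\clE}]]$ directly.
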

Note that if $\clB$ terminates before making $t$ queries, $v_t=\bot$ and $\Delta(v_t)=1$.
\begin{proof}[Proof of Claim~\ref{sumofdelta}]
Let us sample vertices $u_k$ of $\clB$ as follows:
\begin{enumerate}
\item Set $z=\lf\{\begin{array}{ll}
1 & \mbox{with probability $\Pr_{x \sim \mu}[g(x)=1]$}, \\
0 & \mbox{with probability $\Pr_{x \sim \mu}[g(x)=1]$}
\end{array}\rt.$
\item Run $\clP(\clB, \mu_0, \mu_1)$ on the $1$-bit input $z$.
\item Let $u_k$ be the vertex $v$ of $\clB$ in the beginning of the $k$-th iteration of the \emph{while} loop of Algorithm~\ref{P}. If the simulation stops before $k$ iterations, set $u_k:=\bot$. Return $(u_k)_{k=1,\ldots}$.
\end{enumerate}
By \cref{samedistn}, the transcripts $(u_k)_{k=1,2,\ldots}$ and $(v_k)_{{k=1,2,\ldots}}$ have the same distribution.

Now, since $\E[\clN_1] \leq \chi(g) = d$, we have by \emph{Markov's inequality} that the probability that $\clP(\clB, \mu_0, \mu_1)$ sets $\q_1$ to $1$ within first $10d$ iterations of the \emph{while} loop, is at least $9/10$. Note that conditioned on the event that the computation of $\clP(\clB, \mu_0, \mu_1)$ is at vertex $v$ of $\clB$ that queries the input bit $x_i$, the probability that the random real number $r$ generated in the same iteration lies in the interval $[\min_b \Pr_{x_i \sim \mu_b}[x_i=0 \mid x \in v], \max_b \Pr_{x_i \sim \mu_b}[x_i=0 \mid x \in v]]$ is exactly $\Delta(v)$. We have,
\begin{align}
\sum_{k=1}^{10d} \E[\Delta(v_k) \mid \clE]&=\sum_{k=1}^{10d} \E[\Delta(u_k) \mid \clE] \nonumber \\
&\geq \Pr[\mbox{$\q_1$ is set to to $1$ within first $10d$ iterations} \mid \clE] \mbox{\ \ \ \ (by union bound)} \nonumber \\
&\geq \Pr[\mbox{$\q_1$ is set to to $1$ within first $10d$ iterations}]-\Pr[\overline{\clE}] \nonumber \\
& \geq \frac{9}{10}-\frac{1}{4}=\frac{13}{20}. \nonumber
\end{align}
\end{proof}
The following observation will be useful in the proof of Claim~\ref{sumofdelta2}.
\begin{observation}
\label{recursive}
Let $v$ be any node of $\clB$, such that the associated subcube has non-empty intersections with the supports of both $\mu_0$ and $\mu_1$. Let $\mu_0':=\mu_0 \mid v$ and $\mu_1':=\mu_1 \mid v$. Let $\clB_v$ denote the subtree of $\clB$ rooted at $v$. Then $\clB_v$ is an optimal  decision tree for $\mu_0'$ and $\mu_1'$.
\end{observation}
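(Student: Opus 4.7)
The statement is an optimal-substructure claim about the tree $\clB$, which was fixed in the proof of \cref{runtime} so that $(\clB, (\mu_0, \mu_1))$ is $\full$ and $\chi(\clB, (\mu_0, \mu_1)) = \chi(\mu_0, \mu_1)$. The plan is to prove it via a standard exchange argument. First I would verify that $(\clB_v, (\mu_0', \mu_1'))$ is $\full$: since $\clB_v$ computes $g$ on the disjoint supports of $\mu_0'$ and $\mu_1'$, at every leaf $\ell$ of $\clB_v$ one of these distributions has zero probability of reaching $\ell$, which forces $R_{\clB_v}(\ell) = 0$; hence the random walk of \cref{sec:conflict} terminates via $\Delta$ with probability one, giving $\sum_{w \in \clB_v}\Delta_{\clB_v}(w)R_{\clB_v}(w) = 1$. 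Then I would assume for contradiction that there exists a tree $\clB_v^\star$ with $(\clB_v^\star, (\mu_0', \mu_1'))$ $\full$ and $\chi(\clB_v^\star, (\mu_0', \mu_1')) < \chi(\clB_v, (\mu_0', \mu_1'))$, and form $\clB^\star$ from $\clB$ by splicing $\clB_v^\star$ in place of the subtree rooted at $v$. Since the inputs reaching $v$ in $\clB$ are exactly the valid inputs in the subcube $v$, $\clB^\star$ still correctly separates the supports of $\mu_0$ and $\mu_1$, and the aim is to derive $\chi(\clB^\star, (\mu_0, \mu_1)) < \chi(\clB, (\mu_0, \mu_1))$, contradicting the optimality of $\clB$.

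The central calculation is the decomposition
\[
\chi(\clB, (\mu_0, \mu_1)) = \sum_{w \notin \clB_v} d_\clB(w)\Delta(w)R(w) + \sum_{w \in \clB_v} d_\clB(w)\Delta(w)R(w),
\]
whose first summand is unaffected by the splice. For $w \in \clB_v$, the event ``$x$ at $w$'' already forces $x \in v$, so the conditional probabilities that define $\Delta(w)$ and the per-step minimum transition probabilities are identical whether computed under $(\mu_0, \mu_1)$ or under $(\mu_0', \mu_1')$. This yields $\Delta_\clB(w) = \Delta_{\clB_v}(w)$, $R_\clB(w) = R(v)\cdot R_{\clB_v}(w)$, and $d_\clB(w) = d_{\clB_v}(w) + d_\clB(v) - 1$, so
\[
\sum_{w \in \clB_v} d_\clB(w)\Delta(w)R(w) = R(v)\Bigl[\chi(\clB_v,(\mu_0',\mu_1')) + (d_\clB(v) - 1)\Bigr],
\]
using the fullness identity from above. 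The analogous identity holds for $\clB^\star$ with $\chi(\clB_v^\star, (\mu_0', \mu_1'))$ in place of $\chi(\clB_v, (\mu_0', \mu_1'))$, so the net change in conflict complexity under the splice is exactly $R(v)\cdot\bigl[\chi(\clB_v^\star,(\mu_0',\mu_1')) - \chi(\clB_v,(\mu_0',\mu_1'))\bigr]$.

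To conclude I need $R(v) > 0$; this is where the hypothesis that $v$ meets both supports enters, because it implies that at each step of the root-to-$v$ path in $\clB$ both distributions place positive probability on the relevant child, making every per-step minimum strictly positive and their product $R(v)$ positive. The net change is then strictly negative, contradicting the optimality of $\clB$. The main pitfall I anticipate is the bookkeeping in the decomposition---the depth shift $d_{\clB_v}(w) = d_\clB(w) - d_\clB(v) + 1$ and the clean factoring of $R_\clB(w)$ through $R(v)$---together with verifying that every local quantity beyond $v$ is invariant under the conditioning that defines $(\mu_0', \mu_1')$. Once this is set up, the contradiction is immediate.
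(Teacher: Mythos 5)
Your proof is correct and takes exactly the same substitution/exchange route as the paper's (which is stated in three sentences without any of the bookkeeping you supply). The explicit depth-shift and $R$-factoring identities, the use of fullness to collapse $\sum_w \Delta R = 1$, and the observation that $R(v)>0$ because $v$ meets both supports are precisely what the paper leaves implicit.
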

\begin{proof}
If $\clB_v$ is not an optimal decision tree for $\mu_0'$ and $\mu_1'$ then we could replace it by an optimal decision tree for $\mu_0'$ and $\mu_1'$, and for the resultant decision tree $\clB'$, the expected value of $\clN_1$ in $\clP(\clB', \mu_0, \mu_1)$ will be smaller than that in $\clP(\clB, \mu_0, \mu_1)$. This will contradict the optimality of $\clB$.
\end{proof}
\begin{proof}[Proof of Claim~\ref{sumofdelta2}]
For $i=0,\ldots, c-1$, let $w$ be any vertex at depth $10id+1$ consistent with $\clE$, such that $\Pr_{x \sim \mu_0}[x \in w], \Pr_{x \sim \mu_1}[x \in w] \neq 0$. Consider the subtree $\mathsf{T}$ of $\clB$ rooted at $w$. Let $w_1:=w$ and $w_\ell$ be the random vertex at depth $\ell$ of $\mathsf{T}$, when $\mathsf{T}$ is run on a random input from $\mu \mid w$, or $\bot$ if $\mathsf{T}$ terminates before $\ell$ queries. By Observation~\ref{recursive}, $\mathsf{T}$ is an optimal decision tree for distributions $\mu_0':=\mu_0 \mid w, \mu_1':=\mu_1 \mid w$. From Claim~\ref{sumofdelta} we have that,
\begin{align}
\sum_{\ell=1}^{10d} \E[\Delta(w_\ell) \mid \clE]  \geq \frac{13}{20}. \label{onetree}
\end{align}
Where $\Delta(w_\ell)$ is with respect to distributions $\mu'_0$ and $\mu'_1$. Since $\mu'_0 \mid w_\ell=\mu_0 \mid w_\ell$ and $\mu'_1 \mid w_\ell=\mu_1 \mid w_\ell$, $\Delta(w_\ell)$ in (\ref{onetree}) is also with respect to distributions $\mu_0$ and $\mu_1$. Now, when $w$ is the random vertex $v_{10id+1}$, $w_\ell$ is the random vertex $v_{10id+\ell}$. Thus from (\ref{onetree}) we have that,
\begin{align}
\sum _{k=10id+1}^{10(i+1)d} \E[\Delta(v_k) \mid \clE]  \geq \frac{13}{20}. \label{oneslab}
\end{align}
The claim follows by adding (\ref{oneslab}) over $i=0, \ldots, c-1$.
\end{proof}
Now we are ready to prove Claim~\ref{keyinf}. By setting $c=d$ and invoking Claim~\ref{sumofdelta2} we have,
\begin{align}
\sum_{t=1}^{10d^2} \E[\Delta(v_t) \mid \clE] \geq \frac{13d}{20}. \label{deltabound}
\end{align}
Let $\mathsf{E}_j$ be the event $\Pr_{x \sim \mu_0}[x \in v_j] \neq 0 \wedge \Pr_{x \sim \mu_1}[x \in v_j] \neq 0 \wedge v_j \neq \bot$ (i.e. $v_j$ is a vertex of $\clB$ and is not a leaf). Note that if $v_j \neq \bot$ and $v_j$ is not a leaf of $\clB$, $v_j$ is determined by $(a_1, \ldots, a_{j-1})$ and vice versa, and hence $\I(a_j:g(x) \mid a_1, \ldots, a_{j-1})=\I(x_{i_j}:g(x) \mid v_j)$. If $v_j=\bot$ or $v_j$ is a leaf of $\clB$, then $g(x)$ is determined by $(a_1, \ldots, a_{i-1})$, and $x_{i_j}=\bot$; thus, $\I(a_j:g(x) \mid a_1, \ldots, a_{j-1})=\I(x_{i_j}:g(x) \mid v_j)=0$. Thus we have,
\begin{align}
&\I(a_1, \ldots, a_{10d^2}:g(x)) \nonumber \\
&= \sum_{j=1}^{10d^2} \I(a_j:g(x) \mid a_1, \ldots, a_{j-1}) \mbox{\ \ \ \ (By the chain rule of mutual information (Theorem~\ref{chainrule}))}\nonumber \\
&= \sum_{j=1}^{10d^2} \I(x_{i_j}:g(x) \mid v_j) \mbox{\ \ \ \ \ \ \ \ \ \ \ \ \ \ \ \ (From the discussion above)} \nonumber \\
& \geq 8 \sum_{j=1}^{10d^2} \E \lf[\mathbf{1}_{\mathsf{E}_j}\cdot\lf[\Pr[g(x)=0 \mid x \in v_j] \cdot \Pr[g(x)=1 \mid x \in v_j] \cdot \Delta(v_j)\rt]^2\rt] \nonumber \\
&\qquad \qquad \qquad \qquad \mbox{\ \ \ \ \ \ \ \ \ \ \ \ \ \ \ \ \ \ \ \ \ \  (From Claim~\ref{mutin})} \nonumber \\
&\geq 8 \sum_{j=1}^{10d^2} \Pr[\clE] \cdot \E\lf[\lf[\Pr[g(x)=0 \mid x \in v_{j}] \cdot \Pr[g(x)=1 \mid x \in v_{j}] \cdot \Delta(v_j)\rt]^2 \mid \clE\rt] \nonumber \\
&\qquad \qquad \qquad \qquad \mbox{\ \ \ \ \ \ \ \ \ \ \ \ \ \ \ \ \ \ \ \ \ \ (Conditioned on $\clE, \mathsf{E}_j$ happens with probability $1$ for each $j \leq 10d^2$)} \nonumber \\
&\geq 8 \sum_{j=1}^{10d^2} \frac{3}{4} \cdot \frac{1}{9} \cdot \E[{\Delta(v_j)}^2 \mid \clE] \mbox{\ \ \ \ \ \  \ \  (By the assumption $\Pr[\clE] \geq \frac{3}{4}$ )} \nonumber \\
&= \frac{2}{3}\sum_{j=1}^{10d^2}   \E[{\Delta(v_j)}^2 \mid \clE]  \nonumber \\
&\geq \frac{2}{3}\sum_{j=1}^{10d^2}   \lf(\E[{\Delta(v_j)} \mid \clE]\rt)^2 \mbox{\ \ \ \ \ \ \ \ \ \ \ \ \ \ \ \ (By Jensen's inequality)} \nonumber \\
&\geq \frac{2}{3} \cdot \frac{1}{10d^2} \lf(\sum_{j=1}^{10d^2}  \E[\Delta(v_j) \mid \clE]\rt)^2 \mbox{\ \ \ \ (By Cauchy-Schwarz inequality)} \nonumber \\
&\geq \frac{1}{40}.\mbox{\ \ \ \ \ \ \ \ \ \ \ \ \ \ \ \ \ \ \ \ \ \ \ \ \ \ \ \ \ \ \ \ \ \ \  \ \ \ (From~(\ref{deltabound}))}\label{infbound}
\end{align}
\section{Proof of Claim~\ref{matchingex}}
\label{matchingexample}
\begin{proof}
\begin{enumerate}
\item[(i)] Assume that a deterministic protocol of cost $k$ solves $f_0$ with respect to the uniform input distribution with error at most $\dr49$.
Such a protocol partitions $\01^n$ into (at most) $2^k$ sub-cubes, each marked by some ``answer'' (an element from $\01^n$).
In particular, more than $2^n-2^{n-4}$ points belong to sub-cubes of size at least $2^{n-k-4}$ -- in other words, to sub-cubes of co-dimension at most $k+4$.
As more than \dr{15}{16} fraction of all points belong to such sub-cubes and the total protocol error is at most $\dr49$, there exists at least one sub-cube of co-dimension $k+4$, on which the protocol errs with probability less than $\frac{4}{9} \cdot \frac{16}{15} < \frac12$.

The symmetry in the definition of $f_0$ allows us to assume without loss of generality that the sub-cube is the set $\tau\deq0^{k+4}\circ\01^{n-k-4}$, where ``$\circ$'' denotes string concatenation.
It is easy to see that the ``answer'' that would minimize the error probability with respect to this sub-cube can be any binary string starting with ``$0^{k+4}$'', so let us assume that the actual label is $0^n$. Let $\mathcal{U}_{\ell}$ denote uniform distribution on $\01^n$.
Then
\m{
  \PRr{\tit{error}}{Z\in\tau}
  =\PRdg[Z'\sim\mathcal{U}_{n-k-4}]{\sz{Z'}\le\fr n2-\sq n}
  <\fr12
,}
which implies that $k+4\ge2\sq n$, as a uniformly-random binary string of length more than $n-2\sq n$ would have more than $\dr n2-\sq n$ ``ones'' with probability at least $\dr12$.
\item[(ii)] A randomized query protocol of cost $k$ and error $\dr13$ for $g_0$ would trivially imply existence of a randomized communication protocol of cost at most $2k$ and error $\dr13$ for the bipartite problem \e{Gap-Hamming-Distance}:
\m{
  GHD(X,Y)\deq\thrcase
    {0}{if $\sz{X\xor Y}\le\dr n2-\sq n$;}
    {1}{if $\sz{X\xor Y}\ge\dr n2+\sq n$;}
    {*}{otherwise,}
}
and it has been demonstrated by Chakrabarti and Regev~\cite{CR11_An_O} that the complexity of this problem for any constant error is \asOm n.
\item[(iii)] Consider the following protocol for computing $f_0\l(g_0(x_1)\dc g_0(x_n)\r)$, where $x_i\in\01^n$:
For every $i\in[n]$, let $a_i=x_i(j_i)$, where $j_i$ is chosen uniformly at random from $\{1, \ldots,n\}$ -- that is, $a_i$ is a uniformly-random bit of $x_i$.
Then $\sszz{i}{a_i=g_0(x_i)}$ -- the expected number of ``correctly guessed'' \pl[a_i] is at least $\dr n2+\sq n$; intuitively, this means that the probability that $a_1\dc a_n$ is a right answer to $f_0\l(g_0(x_1)\dc g_0(x_n)\r)$ is ``non-trivially high'' -- to ``boost'' this probability, we will use several ``probes'' from every $x_i$ and take their majority vote.

\noindent
\bil{Protocol:}
For an odd integer $t_\eps$ as defined next, independently choose $j_{i,k}\unin[n]$ for $i\in[n]$ and $k\in[t_\eps]$.
Let $a_i\deq maj\l(x_i(j_{i,1})\dc x_i(j_{i,t_\eps})\r)$ and output ``$a_1\dc a_n$''.

To analyse it, we consider for every $i\in[n]$:
\m{
 &\PRdg{a_i=g_0(x_i)}-\PRdg{a_i\ne g_0(x_i)}\\
  &\tb\ge\sum_{i=0}^{\fr{t_\eps-1}2}
   \chs{t_\eps}i\tm
   \l(
    \l(\fr12-\fr1{\sq n}\r)^i \l(\fr12+\fr1{\sq n}\r)^{t_\eps-i}
     -\l(\fr12-\fr1{\sq n}\r)^{\fr{t_\eps+1}2+i}
       \l(\fr12+\fr1{\sq n}\r)^{\fr{t_\eps-1}2-i}
   \r)\\
  &\tb=
    \l(1-\l(\fr{1-\dr2{\sq n}}{1+\dr2{\sq n}}\r)^{\fr{t_\eps+1}2}\r)
   \tm\sum_{i=0}^{\fr{t_\eps-1}2}
      \chs{t_\eps}i\tm
      \l(\fr12-\fr1{\sq n}\r)^i \l(\fr12+\fr1{\sq n}\r)^{t_\eps-i}
,}
where the equality occurs when $\sz{x_i}-\dr n2=\pm\sq n$.
As
\m{
  \sum_{i=0}^{\fr{t_\eps-1}2}
     \chs{t_\eps}i\tm
     \l(\fr12-\fr1{\sq n}\r)^i \l(\fr12+\fr1{\sq n}\r)^{t_\eps-i}
   =\PRr{a_i=g_0(x_i)}{\sz{x_i}-\fr n2=\pm\sq n}
   >\fr12
,}
we get
\m{
 &\PRdg{a_i=g_0(x_i)}-\PRdg{a_i\ne g_0(x_i)}\\
  &\tb>\fr12\tm
    \l(1-\l(\fr{1-\dr2{\sq n}}{1+\dr2{\sq n}}\r)^{\fr{t_\eps+1}2}\r)
   >\fr12\tm\l(1-\l(1-\fr2{\sq n}\r)^{\dr{t_\eps}2}\r)
   \ge\Min{\fr{t_\eps}{4\sq n},\fr14}
.}
Our $t_\eps$ will be small enough to guarantee that $\fr{t_\eps}{4\sq n}\le\fr14$, so we can write
\m[m_aig0]{
  \PRdg{a_i=g_0(x_i)} > \fr12 + \fr{t_\eps}{8\sq n}
.}

Now let us estimate the probability that $a_1\dc a_n$ is a wrong answer to $f_0\l(g_0(x_1)\dc g_0(x_n)\r)$:
This occurs only if $\sszz{i}{a_i=g_0(x_i)}<\dr n2+\sq n$, so by the Chernoff bound (in a form given in~\cite{DM05_Con_Bo}),
\m{
  \PRdg{\txt{the protocol errs}}
   <\exp\l(-\fr12\tm\l(\fr{t_\eps}8-1\r)^2\r)
,}
so that choosing $t_\eps\in\asT{\sq{\log(\dr1\eps)}}$ would suffice for our needs and the result follows.
\end{enumerate}
\end{proof}
\end{appendix}

\begin{thebibliography}{BBCR13}

\bibitem[Aar08]{Aar08}
Scott Aaronson.
\newblock Quantum certificate complexity.
\newblock {\em Journal of Computer and System Sciences}, 74(3):313--322, 2008.

\bibitem[AGJ{\etalchar{+}}17]{fstcomp}
Anurag Anshu, Dmitry Gavinsky, Rahul Jain, Srijita Kundu, Troy Lee, Priyanka
  Mukhopadhyay, Miklos Santha, and Swagato Sanyal.
\newblock A composition theorem for randomized query complexity.
\newblock In {\em 37th {IARCS} Annual Conference on Foundations of Software
  Technology and Theoretical Computer Science, {FSTTCS} 2017, December 11-15,
  2017, Kanpur, India}, pages 10:1--10:13, 2017.

\bibitem[BBCR13]{DBLP:journals/siamcomp/BarakBCR13}
Boaz Barak, Mark Braverman, Xi~Chen, and Anup Rao.
\newblock How to compress interactive communication.
\newblock {\em {SIAM} J. Comput.}, 42(3):1327--1363, 2013.

\bibitem[BK16]{DBLP:conf/icalp/Ben-DavidK16}
Shalev Ben{-}David and Robin Kothari.
\newblock Randomized query complexity of sabotaged and composed functions.
\newblock In {\em 43rd International Colloquium on Automata, Languages, and
  Programming, {ICALP} 2016, July 11-15, 2016, Rome, Italy}, pages 60:1--60:14,
  2016.

\bibitem[CR11]{CR11_An_O}
A.~Chakrabarti and O.~Regev.
\newblock {An Optimal Lower Bound on the Communication Complexity of
  Gap-Hamming-Distance}.
\newblock {\em Proceedings of the 43rd Symposium on Theory of Computing}, pages
  51--60, 2011.

\bibitem[DM05]{DM05_Con_Bo}
E.~Drukh and Y.~Mansour.
\newblock {Concentration Bounds for Unigram Language Models}.
\newblock {\em Journal of Machine Learning Research 6}, pages 1231--1264, 2005.

\bibitem[GLS18]{GLS18}
Dmitry Gavinsky, Troy Lee, and Miklos Santha.
\newblock On the randomised query complexity of composition.
\newblock Technical Report arXiv:1801.02226, arXiv, 2018.

\bibitem[GSS16]{GSS16}
Justin Gilmer, Michael Saks, and Srikanth Srinivasan.
\newblock Composition limits and separating examples for some {B}oolean
  function complexity measures.
\newblock {\em Combinatorica}, 36(3):265--311, 2016.

\bibitem[HLS07]{DBLP:conf/stoc/HoyerLS07}
Peter H{\o}yer, Troy Lee, and Robert Spalek.
\newblock Negative weights make adversaries stronger.
\newblock In {\em Proceedings of the 39th Annual {ACM} Symposium on Theory of
  Computing, San Diego, California, USA, June 11-13, 2007}, pages 526--535,
  2007.

\bibitem[Li18]{Li18}
Yaqiao Li.
\newblock Conflict complexity is lower bounded by block sensitivity.
\newblock Technical Report arXiv:1810.08873, arXiv, 2018.

\bibitem[Mon14]{DBLP:journals/cjtcs/Montanaro14}
Ashley Montanaro.
\newblock A composition theorem for decision tree complexity.
\newblock {\em Chicago J. Theor. Comput. Sci.}, 2014, 2014.

\bibitem[Rei11]{DBLP:conf/soda/Reichardt11a}
Ben Reichardt.
\newblock Reflections for quantum query algorithms.
\newblock In {\em Proceedings of the Twenty-Second Annual {ACM-SIAM} Symposium
  on Discrete Algorithms, {SODA} 2011, San Francisco, California, USA, January
  23-25, 2011}, pages 560--569, 2011.

\bibitem[San18]{Sanyal18}
Swagato Sanyal.
\newblock A composition theorem via conflict complexity.
\newblock Technical Report arXiv:1801.03285, arXiv, 2018.

\bibitem[She12]{DBLP:conf/stoc/Sherstov12a}
Alexander~A. Sherstov.
\newblock Making polynomials robust to noise.
\newblock In {\em Proceedings of the 44th Symposium on Theory of Computing
  Conference, {STOC} 2012, New York, NY, USA, May 19 - 22, 2012}, pages
  747--758, 2012.

\bibitem[Tal13]{Tal13}
Avishay Tal.
\newblock Properties and applications of boolean function composition.
\newblock In {\em Innovations in Theoretical Computer Science, {ITCS} '13},
  pages 441--454, 2013.

\bibitem[{von}28]{vonNeumann}
John {von Neumann}.
\newblock Zur theorie der gessellschaftsspiele.
\newblock {\em Math. Ann.}, 100:295--320, 1928.

\end{thebibliography}
\newcommand{\etalchar}[1]{$^{#1}$}

\end{document}